\newtheorem{theorem}{Theorem}[section]
\newtheorem{proposition}{Proposition}[section]
\newtheorem{lemma}{Lemma}[section]
\newtheorem{definition}{Definition}[section]
\numberwithin{equation}{subsection}
\numberwithin{figure}{subsection}
\begin{document}

\title[]
{
Convergence to the equilibrium for the kinetic transport equation in the two-dimensional periodic Lorentz Gas
}

\author[F. Pieroni]{Francesca Pieroni$^1$}
\address{$^1$ Dipartimento di Matematica, Sapienza Universit\`a di Roma, 00185 Roma, Italia.}
\email{francesca.pieroni@uniroma1.it}
\thanks{This work has been partially supported by the grant "Progetti di ricerca d'Ateneo 2024" by Sapienza University, Rome and by GNFM - INdAM. Co-funded by the European Union (ERC CoG KiLiM, project number 101125162).}

\keywords{Kinetic transport equation - Convergence to equilibrium - Periodic Lorentz Gas - Boltzmann-Grad limit - Extended phase space}
\subjclass[2020]{35Q20, 82B40}

\begin{abstract} 
We consider the kinetic transport equation that arise in the Boltzmann-Grad limit of the two-dimensional periodic Lorentz Gas. This equation has been obtained by extending the phase space of positions and velocities through the introduction of two new variables, representing respectively the time to the next collision and the corresponding impact parameter. Here we mostly focus on the case of periodic boundary conditions on the positions space: we prove that, under suitable hypothesis, the time evolution of a probability density on the extended phase space converges to the equilibrium state with respect to the $L^p$ norm ($^*$-weakly if $p=\infty$), if such initial density is $L^p$. If $p=2$, or if the initial datum does not depend on the position, we also get more precise estimates about the rate of the approach to the equilibrium. Our proof is based on the analysis of the long time behavior of the Fourier coefficients of the solution. 
\end{abstract}

\maketitle

\tableofcontents

\section{Introduction}
The Lorentz Gas is the dynamical system describing the motion of a classical particle interacting through elastic collisions with a system of obstacles. We will focus here on hard-core scatterers with spherical symmetry, and since the obstacles are infinitely heavy, they do not move. This model was first proposed by Lorentz \cite{Lorentz1905} to describe the motion of the electrons in a metal.

Let the spherical obstacles with centers $c:=\{c_j\}_{j\in\mathbb{N}}$ in $\mathbb{R}^d$ and radius $\varepsilon$. Then, the motion of the particle is described as follows. If no collisions occur, that is, as long as the distance between the particle and any obstacle is larger than $\varepsilon$, the particle's motion is free: given $(x,v)\in\mathbb{R}^d\times\mathbb{R}^d$ such that $|x-c_j|>\varepsilon\quad\forall j\in\mathbb{N}$, and denoting by $(X_t,V_t)$ the position and velocity of the particle at time $t$, the dynamics is described by

{\small\begin{align}\label{motolibero}
\left\{\begin{array}{lcr}X_0(x,v;c)=x, \dot X_t(x,v;c)=V_t(x,v;c),\\V_0(x,v;c)=v,\dot V_t(x,v;c)=0,\end{array}\right.\quad\textrm{ if }|X_t(x,v;c)-c_j|>\varepsilon\quad\forall j\in\mathbb{N},
\end{align}}

but when the particle hits an obstacle, it gets specularly reflected. That is, if the particle collides at time $t$ with the obstacle $\bar j$, then $|X_t(x,v;c)-c_{\bar j}|=\varepsilon$ and 

\begin{align}\label{motourto}
\left\{\begin{array}{lcr}X_{t^+}(x,v;\{c_j\}_j)=X_{t^-}(x,v;c),\\V_{t^+}(x,v;c)={\mathcal R}\left[\frac{X_t(x,v;c)-c_{\bar j}}{\varepsilon}\right]V_{t^-}(x,v;c),\end{array}\right.
\end{align}

where $f_{t^{\pm}}$ denote respectively $\lim_{s\to t^{\pm}}f(s)$ and ${\mathcal R}[x]v$ is the orthogonal reflection of $v$ with respect to the real line of direction $x$, i.e. ${\mathcal R}[x]v=-v_x+v_x^{\perp}$ when $v=v_x+v_x^{\perp},v_x\in\mathbb{R}x,v_x^{\perp}\in(\mathbb{R}x)^{\perp}$.

The obstacles may overlap, that is, it could happen that, for $j\neq k\in\mathbb{N}$, $0<|c_j-c_k|\leq2\varepsilon$. In this case, the dynamics is not well defined only if the particles hits both the obstacles at the same time, i.e, if $|X_t(x,v;c)-c_j|=|X_t(x,v;c)-c_k|=\varepsilon$, since the obstacles should reflect the velocity vector in two different directions. The points that belong to the boundary of two different obstacles are usually referred as "angular points". This problem is overcome by noticing that the Lebesgue measure of the set of initial data $(x,v)$ such that the particle hits an angular point is zero (see for example \cite{BBS1983}).

Since $|v|$ is preserved within the motion, it is assumed to be 1 with no loss of generality.

Therefore if one considers a particle with randomly distributed initial data $(x,v)\in\mathbb{R}^d\times\mathbb{S}^{d-1}$, for example through a density function $\mu_{in}(x,v)$, for finite $\varepsilon>0$ the time evolution of the density is
\[
\mu_t(x,v;c)=\mu_{in}(X_{-t}(x,v;c),V_{-t}(x,v;c)),
\]
with $X_t$ and $V_t$ described in \eqref{motolibero} and \eqref{motourto}, and one may ask under what assumptions on $c$ the quantity $\lim_{\varepsilon\to0}\mu_t(x,v;c)$ exists and is non trivial. Typically, one should scale also the time $t$ (i.e., dividing it by $\varepsilon$).

Here we are focusing on the low density case.

The case of low density and randomly Poisson distributed obstacles was first studied by Gallavotti \cite{Gallavotti1972}. The author proved that if a point particle moves in $\mathbb{R}^2$ and the centers of the obstacles are a Poisson point process of intensity $n\sim\frac{1}{2\lambda\varepsilon}$, then for any continuous and bounded probability density $\mu_{in}:\mathbb{R}^2\times\mathbb{S}^1\to[0,+\infty)$ there exists the limit of the averaged particle density

\begin{align}\label{thm:G}
\mathbb{E}[\mu_t(\cdot,\cdot;c)]\xrightarrow[\varepsilon\to0]{L^1(\mathbb{R}^2\times\mathbb{S}^1)}\mu_t,\textrm{ uniformly on compact }t\textrm{-sets},
\end{align}

where the expected value is taken with respect to the Poisson distribution of the obstacles' centers $c$. 

The previous limit is the low-density limit, that is, the Boltzmann-Grad limit, that is, one lets the size of the scatterer radius $\varepsilon\to0$ but keeping constant the mean free path length. The mean free path length is $\lambda\sim(2n\varepsilon)^{-1}$.

Moreover the limiting $\mu_t$ satisfies the Boltzmann equation

\begin{align}\label{thm:Gpt2}
\left\{\begin{array}{lcr}\partial_t\mu_t(x,v)+v\cdot\nabla_x\mu_t(x,v)+\lambda\mu_t(x,v)=\frac{\lambda}{4}\int_0^{2\pi}d\theta\mu_t(x,{\mathcal R}[\theta]v)\sin(\frac{\theta}{2}),\\\mu_0(x,v)=\mu_{in}(x,v),\end{array}\right.
\end{align}

where ${\mathcal R}[\theta]v$ is the rotation of $v$ by angle $\pi-\theta$, i.e., the same rotation ${\mathcal R}$ as before expressed as a function of the angle of the impact.

Spohn \cite{Spohn1978} strengthened the previous result by proving that the convergence of the Lorentz process
\[
(X_t(x,v;c),V_t(x,v;c))
\]
to a stochastic process holds with respect to the weak$^*$ topology of regular Borel measures on the paths space, even if more general random distributions of the obstacles are taken into account.

A further related result was obtained by Boldrighini-Bunimovich-Sinai \cite{BBS1983}: the authors proved that the scaled Lorentz process converges for almost all $c$ configurations of the obstacles' centers.

Then a natural question is what changes in \eqref{thm:G} and in \eqref{thm:Gpt2} if the obstacles' centers have periodic configuration, instead of being randomly distributed, and the expectation sign in \eqref{thm:G} is transferred to the initial data. This model is better known as periodic Lorentz Gas and it has been studied by several authors.

\subsection{The periodic Lorentz Gas.}
From now on, in this introduction we will assume $0<\varepsilon<\frac{1}{2}$, the obstacles' centers $c$ to be located in $\varepsilon\mathbb{Z}^d$ and to have radius $r_{\varepsilon}:=\varepsilon^{\frac{d}{d-1}}$. Therefore the available region where a particle can move is

\begin{align}\label{scalinggiusto}
Z_{\varepsilon}:=\{x\in\mathbb{R}^d:\textrm{dist(}x,\varepsilon\mathbb{Z}^d\textrm{)}\geq\varepsilon^{\frac{d}{d-1}}\}.
\end{align}

This way, the free path length of a particle moving in $Z_{\varepsilon}$ according to \eqref{motolibero} and \eqref{motourto} can be defined as
\[
\tau_{\varepsilon}(x,v):=\inf\{t>0:x+tv\in\partial Z_{\varepsilon}\},\qquad(x,v)\in Z_{\varepsilon}\times\mathbb{S}^1.
\]
The setting where the gas is to be studied for fixed time $t>0$ is exactly \eqref{scalinggiusto}, indeed this condition ensures that the obstacles' density is $\simeq1/\varepsilon^d$. Therefore, since a circular tube of radius $\varepsilon^{\frac{d}{d-1}}$, that is the obstacles' radius, around the line drawn by trajectory in the time interval $[0,t]$ has volume $\simeq(\varepsilon^{\frac{d}{d-1}})^{d-1}t=t\varepsilon^d$, the mean obstacles number hit before time $t$ is $\simeq t\varepsilon^d\cdot1/(\varepsilon^d)=t$. Thus, in this setting the mean free path length should have order 1 as $\varepsilon\to0$ (see for example Dumas-Dumas-Golse \cite{DDG1996} and Golse \cite{Golse2009}). Nevertheless, an equivalent and perhaps more common setting considered in literature involves placing the obstacles in $\mathbb{Z}^d$, instead of $\varepsilon\mathbb{Z}^d$, making them have radius $\varepsilon^{\frac{1}{d-1}}$, instead of $\varepsilon^{\frac{d}{d-1}}$, and studying the dynamics at times $t/\varepsilon$, instead of $t$.
\newline
\subsubsection{The distribution of the free path length in the Boltzmann-Grad limit.} In the Boltzmann-Grad limit of the periodic Lorentz gas, Bourgain-Golse-Wennberg \cite{BGW1998} and Golse-Wennberg \cite{GW2000} proved that, if one denotes by $A/B$ the quotient of $A$ with respect to the equivalence relation $x\sim y\Leftrightarrow x-y\in B$, being $\nu_{\varepsilon}$ is the uniform probability measure on $Z_{\varepsilon}/(\varepsilon\mathbb{Z}^d)\times\mathbb{S}^{d-1}$, then

\begin{align}\label{mfpl}
\Phi_{\varepsilon}(t):=\nu_{\varepsilon}\{(x,v)\in Z_{\varepsilon}/(\varepsilon\mathbb{Z}^d)\times\mathbb{S}^{d-1}:\tau_{\varepsilon}(x,v)>t\}\simeq\frac{1}{t^{d-1}},\quad\forall t\geq1,
\end{align}

where $f\simeq g$ means that there exists a constant $C>0$ such that $C^{-1}g\leq  f\leq Cg$, and such a constant in the above inequality does not depend on $\varepsilon$. In other words, $\Phi_{\varepsilon}$ is the probability that the free path length is larger than $t$: in dimension $d=2$, \eqref{mfpl} makes clear that the mean free path length is not finite if the average is computed with respect to $\nu_{\varepsilon}$. Instead, if one substitutes $\nu_{\varepsilon}$ with another probability measure concentrated on the boundary of the obstacles (see \cite{DDG1996}), the mean free path length is finite.

In dimension $d=2$, a result related to \eqref{mfpl} was obtained by Caglioti-Golse \cite{CG2003}, indeed the authors provided the exact behavior of the limit for large $t$, as $\varepsilon\to0$, of $\Phi_{\varepsilon}$, i.e.
\[
\lim_{t\to+\infty}t\lim_{\varepsilon\to0}\frac{1}{|\log\varepsilon|}\int_{\varepsilon}^{\frac{1}{4}}\frac{dr}{r}\Phi_r(t)=\frac{2}{\pi^2}.
\]
Boca-Zaharescu \cite{BZ2007} strenghtened their method obtaining exact estimates for
\[
\lim_{\varepsilon\to0}\Phi_{\varepsilon}(t),
\]
that is, the limiting distribution of the free path length, at any fixed time $t\geq0$.

Then, Caglioti-Golse \cite{CG2008,CG2010}, Marklof-Str\"ombergsson \cite{MS2008d2} and Bykovskii-Ustinovin \cite{BU2009} found an explicit expression for the transition kernel $Q$ in dimension $d=2$, that is, the Boltzmann-Grad limit of the infinitesimal probability that the next obstacle will be hit in time $s$ and with impact parameter $h$, conditioning the impact parameter of the previous collision to be $h'$. The impact parameter $h$ is defined for finite $\varepsilon>0$ as $h:=\sin(\widehat{v' n_x})$, with $v'$ the velocity after the next collision and $n_x$ the exterior normal to the obstacle in the impact site $x$. This way, the limiting probability that the next obstacle will be hit in time $t$ writes as a function of the transition kernel $Q$. Such transition kernel $Q$ is
\begin{definition}\label{defQ}

\begin{align*}
 Q(s,h|h'):=\frac{6}{\pi^2}\left\{\begin{array}{lcr}1&&0\leq s\leq \frac{1}{1+h},\\\frac{\frac{1}{s}-(1+h')}{h-h'}&&\frac{1}{1+h}<s\leq\frac{1}{1+h'},\\0&&s<0\textrm{ or }s>\frac{1}{1+h'},\end{array}\right.\quad\textrm{ if }|h'|\leq h,
\end{align*}

and the definition extends to all $h,h'\in[-1,1]$ by using the symmetries

 \begin{align*}
 Q(s,h|h')= Q(s,h'|h)= Q(s,-h|-h').
\end{align*}

\end{definition}
The stated results about the structure of the transition probability $Q$ hold for $(x,v)$ randomly distributed on $Z_{\varepsilon}\times\mathbb{S}^1$ with a probability density absolutely continuous with respect to Lebesgue measure instead of being uniformly distributed on $Z_{\varepsilon}/(\varepsilon\mathbb{Z}^2)\times\mathbb{S}^1$. Moreover, in \cite{MS2010fpl} the authors also provided a definition for the impact parameter $h$ and for the kernel $Q$ for dimensions $d\geq3$: their results involve all the dimensions $d\geq2$ but $Q$ has not an explicit formulation for all the times $s$ for dimensions $d\neq2$.
\newline
\subsubsection{The kinetic theory for the periodic Lorentz gas in the Boltzmann-Grad limit.} Going back to \eqref{thm:Gpt2}, one may also ask whether it is possible to obtain such a kinetic equation in the periodic case. The answer is no: using the heavy tail of the distribution of the free path length, Golse \cite{Golse2008} proved that in any dimension $d\geq2$ there exists an initial datum $\mu_{in}:\mathbb{T}^d\times\mathbb{S}^{d-1}\to[0,+\infty)$ such that if the obstacles are balls of radius $\varepsilon^{\frac{1}{d-1}}$ located in $\mathbb{Z}^d$, no subsequence of the density $\{\mu_{\frac{t}{\varepsilon}}\}_{\varepsilon}$ with initial datum $\mu_{in}$ can converge to the solution of the linear Boltzmann equation.

In \cite{CG2008,CG2010} the authors obtained an equation, rigorously derived in \cite{MS2008bgl}, for the time evolution of the limiting density. We restate it in dimension $d=2$, even if in \cite{MS2008bgl} any dimension $d\geq2$ has been considered. Denoting by $\mu_{in}$ the limiting probability density of the initial data, the equation writes as

{\footnotesize\begin{align}\label{evoluzionetemp}
\left\{\begin{array}{lcr}\frac{\partial\mu_t}{\partial t}(x,v,s,h)+v\cdot\nabla_x\mu_t(x,v,s,h)-\frac{\partial\mu_t}{\partial s}(x,v,s,h)=\int_{-1}^1dh'Q(s,h|h')\mu_t(x,{\mathcal R}[\theta(h')]v,0,h'),\\\mu_t(x,v,s,h)|_{t=0}=\mu_{in}(x,v)E(s,h).\end{array}\right.
\end{align}}

Let us comment the above expression. The key argument used by the authors to understand the time evolution of the Boltzmann-Grad limit of a probability density on the phase space was to extend the phase space itself by adding the couple $(s,h)=$(time to the next collision, impact parameter of the next collision) defined before. Indeed the random flight $(X_t,V_t)_{t\geq0}$ in the Boltzmann-Grad limit obtained in \cite{MS2008bgl} is not a Markov process (in $(x,v)$), therefore it is not possible to find a memoryless equation for the limiting density $\mu_t(x,v)$. Thus, the probability measure on the new expanded phase space $\mathbb{R}^2\times\mathbb{S}^1\times[0,+\infty)\times[-1,1]$ is to be understood as the probability of having at time $t$ position $x$ and velocity $v$, hitting the next obstacle within time $s$, i.e., at time $t+s$, and with impact parameter $h$.

There are still some quantities to be commented in \eqref{evoluzionetemp}. One of them is $E$, that is, the invariant probability measure for the time evolution of the density in $(s,h)$, obtained as
\begin{definition}\label{defE}

\begin{align*}
E(s,h):=\int_s^{\infty}ds'\int_{-1}^1dh' Q(s',h|h'),
\end{align*}

\end{definition}
while ${\mathcal R}[\theta(h')]$ rotates a vector $v=(\cos\theta,\sin\theta)\in\mathbb{S}^1$ by angle $\theta(h')=\pi-2\arcsin(h')$, that is
\[
{\mathcal R}[\theta(h')]v=(\cos(\theta+\theta(h')),\sin(\theta+\theta(h'))).
\]
Let us finally point out that in the previous equation $x$ can be a point in $\mathbb{T}^2$ or in $\mathbb{R}^2$. This ambiguity comes from the obstacles' lattice being periodic.

The time evolution equation \eqref{evoluzionetemp} raises the problem of studying the long time behavior of the solution $\mu_t$. To begin with, we notice that one can introduce randomness also in the new parameters $(s,h)$, getting thus the following equation for the time evolution of a density

{\footnotesize\begin{align}\label{eq:ev_completa}
\left\{\begin{array}{lcr}\frac{\partial\mu_t}{\partial t}(x,v,s,h)+v\cdot\nabla_x\mu_t(x,v,s,h)-\frac{\partial\mu_t}{\partial s}(x,v,s,h)=\int_{-1}^1dh'Q(s,h|h')\mu_t(x,{\mathcal R}[\theta(h')]v,0,h'),\\\mu_t(x,v,s,h)|_{t=0}=\mu_0(x,v,s,h).\end{array}\right.
\end{align}}

For our purpose using $\theta\in\mathbb{T}^1_{2\pi}:=\mathbb{R}/(2\pi\mathbb{Z})$ as a parameter is more comfortable  rather than $v\in\mathbb{S}^1$, therefore hereafter we will denote
\[
\mu(x,\theta,s,h):=\mu(x,v(\theta),s,h)\quad\textrm{ if }\quad v(\theta)=(\cos\theta,\sin\theta),
\]
and writing $\mu_t$ as a function of $\theta$ instead of $v$, the equation \eqref{eq:ev_completa} writes as

{\tiny\begin{align}\label{eq:ev}
\left\{\begin{array}{lcr}\frac{\partial\mu_t}{\partial t}(x,\theta,s,h)+v(\theta)\cdot\nabla_x\mu_t(x,\theta,s,h)-\frac{\partial\mu_t}{\partial s}(x,\theta,s,h)=\int_{-1}^1dh' Q(s,h|h')\mu_t(x,\theta+\pi-2\arcsin(h'),0,h'),\\\mu_t(x,\theta,s,h)|_{t=0}=\mu_0(x,\theta,s,h).\end{array}\right.
\end{align}}

In \cite{MS2008bgl} existence and uniqueness for the solutions of \eqref{eq:ev} have been proved, as well as the fact that the $L^1$ distance between two solutions is non increasing in time. As pointed out by the authors, a solution of \eqref{eq:ev} can be represented by

{\footnotesize\begin{align}
\nonumber\mu_t(x,\theta,s,h)&=\mu_0(x-tv(\theta),\theta,s+t,h)
\\
\label{rapp:ev}&+\int_0^tdt'\int_{-1}^1dh'Q(s+t-t',h|h')\mu_{t'}(x-(t-t')v(\theta),\theta+\pi-2\arcsin(h'),0,h'),
\end{align}}

so that the solution turns out to be the sum of two contributions: the first one can be understood as the probability density of the particles that collide for the first time at time $t$, while the second one represents the probability density of the particles that have collided at least one time before time $t$. Moreover, if we evaluate the equation \eqref{rapp:ev} at $s=0$ we get

{\small\begin{align*}
\mu_t(x,\theta,0,h)&=\underbrace{\mu_0(x-tv(\theta),\theta,t,h)}_{=:\tilde\mu_0(x,\theta,t,h)}
\\
&+\underbrace{\int_0^tdt'\int_{-1}^1dh'Q(t-t',h|h')\mu_{t'}(x-(t-t')v(\theta),\theta+\pi-2\arcsin(h'),0,h')}_{=:{\mathcal F}(\mu)(x,\theta,t,h)},
\end{align*}}

and, except for the initial datum $\mu_0$, this equation involves only $\mu_t(x,\theta,s=0,h)$, that is, the probability density of the particles that collide at time $t$.

The first contribution $\tilde\mu_0$ can be understood again as the probability density of the particles whose first collision occurs at time $t$, while the second contribution ${\mathcal F}(\mu)(t,x,\theta,h)$ represents the probability density of the particles that have collided at least one time before time $t$.

Since ${\mathcal F}$ is a linear function of $\mu$, we can formally write $\mu_t(x,\theta,0,h)$ as
\[
\mu_t(x,\theta,0,h)=\tilde\mu_0(x,\theta,t,h)+\sum_{n=1}^{\infty}{\mathcal F}^n(\tilde\mu_0)(x,\theta,t,h),
\]
where each term ${\mathcal F}^n(\tilde\mu_0)(x,\theta,t,h)$ represents the probability density of the particles that have collided exactly $n$ times before time $t$. Going back to \eqref{rapp:ev}, the advantage of such a representation is that it makes sense not only for regular initial data but for general $L^1$ functions.

In the previous equation four variables are involved: $(x,\theta,s,h)$, but by integrating $\mu_t$ with respect to $x$ or $(x,\theta)$, one gets two new equations in the remaining variables.
\newline
\textbf{Averaging on the position $x$.} By integrating $\mu_t$ with respect to $x$ one gets

\begin{align}\label{eq:ev_x}
\left\{\begin{array}{lcr}\frac{\partial\mu_t}{\partial t}(\theta,s,h)-\frac{\partial\mu_t}{\partial s}(\theta,s,h)=\int_{-1}^1dh' Q(s,h|h')\mu_t(\theta+\pi-2\arcsin(h'),0,h'),\\\mu_t|_{t=0}(\theta,s,h)=\mu_0(\theta,s,h).\end{array}\right.
\end{align}

The same equation would appear by considering an initial datum that does not depend on $x$, but of course this makes sense only for $x\in\mathbb{T}^2$ and not for $x\in\mathbb{R}^2$. That is, if $\mu_0$ does not depend on $x$, the same holds for $\mu_t$ at any time $t$. Of course, a solution of \eqref{eq:ev_x} admits also an alternative representation which does not include derivatives, that is,

{\scriptsize\begin{align}\label{rapp:ev_x}
\mu_t(\theta,s,h)=\mu_0(\theta,s+t,h)+\int_0^tdt'\int_{-1}^1dh'Q(s+t-t',h|h')\mu_{t'}(\theta+\pi-2\arcsin(h'),0,h'),\quad\forall t\geq0,
\end{align}}

with the same interpretation about the number of collision within time $t$ as before.
\newline
\textbf{Averaging on both the position $x$ and the velocity $v(\theta)$.}  Instead if one integrates both respect to $x$ and to $v$, the final equation would be

\begin{align}\label{eq:ev_x,v}
\left\{\begin{array}{lcr}\frac{\partial\mu_t}{\partial t}(s,h)-\frac{\partial\mu_t}{\partial s}(s,h)=\int_{-1}^1dh' Q(s,h|h')\mu_t(0,h'),\\\mu_t|_{t=0}(s,h)=\mu_0(s,h),\end{array}\right.
\end{align}

also written for more general $L^1$ initial data as

\begin{align}\label{rapp:ev_x,v}
\mu_t(s,h)=\mu_0(s+t,h)+\int_0^tdt'\int_{-1}^1dh'Q(s+t-t',h|h')\mu_{t'}(0,h'),\qquad\forall t\geq0.
\end{align}

One can notice again that if the initial datum $\mu_0$ does not depend on $x$ or on $(x,\theta)$, neither does $\mu_t$ at any time $t$, and therefore the evolution of the probability density with respect to the remaining variables $(s,h)$ is described by the equation \eqref{eq:ev_x} (respectively \eqref{eq:ev_x,v}). In \cite{CG2010,MS2008bgl} it has been proven that the only equilibria states for the equation \eqref{eq:ev_x,v} in $(s,h)$ are $cE$, for a constant $c\in\mathbb{R}$ and $E\in L^1([0,+\infty)\times[-1,1])$ introduced in Definition \ref{defE}.

Moreover, since we already observed that a generalized solution of the three kinetic equations above exists also for non regular initial data, we will use the following notation.

\begin{definition}\label{def:soluzione} A mild solution of equation \eqref{eq:ev} (respectively \eqref{eq:ev_x} and \eqref{eq:ev_x,v}) with initial datum $\mu_0\in L^1(\mathbb{T}^2\times\mathbb{T}^1_{2\pi}\times[0,+\infty)\times[-1,1])$ or $\mu_0\in L^1(\mathbb{R}^2\times\mathbb{T}^1_{2\pi}\times[0,+\infty)\times[-1,1])$ (respectively $\mu_0\in L^1(\mathbb{T}^1_{2\pi}\times[0,+\infty)\times[-1,1])$ and $\mu_0\in L^1([0,+\infty)\times[-1,1])$) is a function that satisties \eqref{rapp:ev} (respectively \eqref{rapp:ev_x} and \eqref{rapp:ev_x,v}).
\end{definition}

\subsubsection{Further literature.} Marklof-Str\"ombergsson \cite{MS2011ae} also provided asymptotic estimates, that is, for small and large $s$, for the kernel $Q$ and the invariant measure $E$ in any dimension $d\geq2$. These bounds rely on the explicit formulation of the transition kernel $Q$ only in dimension $d=2$ and allow to improve the estimates about the distribution of the free path length in \eqref{mfpl}. In \cite{MS2013pld} the authors also generalized these asymptotic estimates and the time evolution law \eqref{evoluzionetemp} to the case of finite unions of lattices, while in \cite{MS2024} also the case of spherically symmetric finite-range potentials has been studied. Moreover Marklof-T\'oth \cite{MT2016} proved a superdiffusive limit with normalization factor $\sqrt{t\log t}$, instead of $\sqrt t$, for the continuous and discrete time displacement in any dimension $d\geq2$.
\subsection{Main results.} The main results we prove concern the asymptotic behavior of the mild solutions of the equations \eqref{eq:ev}, \eqref{eq:ev_x} and \eqref{eq:ev_x,v}. In \cite{CG2010}, using relative entropy estimates, it has been proved that if $\mu_{in}\in L^{\infty}(\mathbb{T}^2\times\mathbb{T}^1_{2\pi})$ is a probability density, then the time evolution of $\mu_0:=\mu_{in}E$ converges to the equilibrium state

\begin{align}\label{convergenza*debole}
\mu_t\xrightarrow[t\to+\infty]{}\frac{1}{2\pi}E,\quad\textrm{weak}-^*\textrm{ in }L^{\infty}(\mathbb{T}^2\times\mathbb{T}^1_{2\pi}\times[0,+\infty)\times[-1,1]),
\end{align}

and also that the rate of the approach to the equilibrium with respect to the $L^2$ norm is worse than $t^{-\frac{3}{2}}$.

Our purpose is to improve this result including also some estimates on the rate of convergence to the equilibrium.

Hereafter we will denote by $\langle f\rangle$ the integral of $f$.

Our first result is the following.
\begin{theorem}\label{thm:convergenza_v,s,h}
There exists a constant $C>0$ depending only on $Q$ such that for any $p\in[1,+\infty]$ and  $\mu_0\in L^1\cap L^p(\mathbb{T}^1_{2\pi}\times[0,+\infty)\times[-1,1])$, if $\mu_t$ is the mild solution of the equation \eqref{eq:ev_x} with initial datum $\mu_0$, then

{\tiny\begin{align}
\nonumber\left\|\mu_t-\frac{\langle\mu_0\rangle}{2\pi}E\right\|_{L^p(\mathbb{T}^1_{2\pi}\times[0,+\infty)\times[-1,1])}&\leq C\frac{\left\|\mu_0\right\|_{L^1(\mathbb{T}^1_{2\pi}\times[0,+\infty)\times[-1,1])}+\left\|\mu_0\right\|_{L^p(\mathbb{T}^1_{2\pi}\times[0,+\infty)\times[-1,1])}}{t+1}
\\
\label{thm1:st1}&+C\left[\left\|\mu_0\right\|_{L^1(\mathbb{T}^1_{2\pi}\times[t/4,+\infty)\times[-1,1])}+\left\|\mu_0\right\|_{L^p(\mathbb{T}^1_{2\pi}\times[t/4,+\infty)\times[-1,1])}\right].
\end{align}}

In particular, if $\mu_0(\theta,s,h)=\mu_{in}(\theta)E(s,h)$ with $\mu_{in}\in L^p(\mathbb{T}^2\times\mathbb{T}^1_{2\pi})$, then $\mu_0\in L^1\cap L^p(\mathbb{T}^1_{2\pi}\times[0,+\infty)\times[-1,1])$ and it holds 

\begin{align}\label{thm1:st2}
\left\|\mu_t-\frac{\langle\mu_{in}\rangle}{2\pi}E\right\|_{L^p(\mathbb{T}^1_{2\pi}\times[0,+\infty)\times[-1,1])}\leq\frac{C}{t+1}\left\|\mu_{in}\right\|_{L^p(\mathbb{T}^1_{2\pi})}.
\end{align}

\end{theorem}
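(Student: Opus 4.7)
The natural strategy, guided by the abstract's mention of Fourier coefficients, is to Fourier-expand in the angular variable: setting
\[
\mu_t(\theta, s, h) = \sum_{k\in\mathbb{Z}} e^{ik\theta}\, \hat{\mu}_t^{(k)}(s, h),\qquad \hat{\mu}_t^{(k)}(s, h) := \frac{1}{2\pi}\int_0^{2\pi} e^{-ik\theta}\, \mu_t(\theta, s, h)\, d\theta,
\]
the translation-invariance of \eqref{eq:ev_x} in $\theta$, together with the identity $e^{ik(\theta+\pi-2\arcsin h')} = e^{ik(\pi-2\arcsin h')}\, e^{ik\theta}$, implies that each mode evolves independently according to the twisted equation
\[
\partial_t \hat{\mu}_t^{(k)}(s, h) - \partial_s \hat{\mu}_t^{(k)}(s, h) = \int_{-1}^1 Q(s, h|h')\, e^{ik(\pi-2\arcsin h')}\, \hat{\mu}_t^{(k)}(0, h')\, dh'.
\]
This is exactly \eqref{eq:ev_x,v} for $k = 0$, while for $k \neq 0$ the oscillatory phase twists the kernel; the problem thus decouples into a zero-mode and a nonzero-mode analysis.

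For the zero mode, $\hat{\mu}_t^{(0)}$ satisfies \eqref{eq:ev_x,v} with conserved mass $\langle \hat{\mu}_0^{(0)}\rangle = \frac{\langle \mu_0\rangle}{2\pi}$, so the target equilibrium is $\frac{\langle \mu_0\rangle}{2\pi}\, E$. One must upgrade the qualitative weak-$^*$ convergence of \cite{CG2010, MS2008bgl} to the quantitative rate $1/(t+1)$. The plan is to use the Duhamel representation \eqref{rapp:ev_x,v}: the free-transport term $\hat{\mu}_0^{(0)}(s+t, h)$ captures the uncollided mass and is bounded directly by the tail $\|\mu_0\,\mathbf{1}_{\{s'\geq t\}}\|$ (a fortiori by the tail at $s'\geq t/4$), while the collision integral is analyzed by iteration of $Q$ and the Marklof--Str\"ombergsson asymptotic estimates of $Q$ and $E$ from \cite{MS2011ae}. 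Splitting the $t'$-integral at $t/4$ separates the contribution of collisions that occurred more than $t/4$ time units ago --- where the kernel $Q(s+t-t', h|h')$ has large argument and thus small value, yielding the $C/(t+1)$ contribution --- from the more recent ones, which still reflect the initial data tail and produce the correction term in \eqref{thm1:st1}. For the nonzero modes $k \neq 0$, the oscillatory phase destroys the fixed-point relation $E(0, h') \equiv 1$ that underlies the $k = 0$ equilibrium, so $\hat{\mu}_t^{(k)} \to 0$; a contraction estimate on the twisted integral operator, uniform in $k \neq 0$, should yield a decay rate no worse than that of the zero mode.

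It remains to assemble the mode-wise estimates in $L^p$. The Hilbert case $p = 2$ is immediate via Parseval. For general $p \in [1, +\infty]$, I would split $\mu_t = \bar{\mu}_t + \tilde{\mu}_t$ with $\bar{\mu}_t(s, h) := \frac{1}{2\pi}\int_0^{2\pi}\mu_t(\theta, s, h)\, d\theta$ the zero mode, reducing the problem by the triangle inequality to (i) the zero-mode bound just described and (ii) a bound on $\tilde{\mu}_t$; for the latter, using the representation \eqref{rapp:ev_x} --- in which the angular shifts $\theta \mapsto \theta + \pi - 2\arcsin h'$ act as $L^p(\mathbb{T}^1_{2\pi})$-isometries --- together with Minkowski's inequality in $\theta$ and Riesz--Thorin interpolation between the $L^1$-contractivity of \eqref{eq:ev_x} proven in \cite{MS2008bgl} and the pointwise $L^\infty$ bound read off from \eqref{rapp:ev_x}, one closes the estimate. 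For the special case \eqref{thm1:st2}, $\mu_0 = \mu_{in}(\theta) E(s, h)$, the tail factorizes as $\|\mu_{in}\|_{L^p(\mathbb{T}^1_{2\pi})}\,\|E\,\mathbf{1}_{\{s \geq t/4\}}\|_{L^p([0,+\infty)\times[-1,1])}$, and the second factor is $O(1/t)$ thanks to the large-$s$ asymptotics of $E$, so the tail correction is absorbed into the global $C/(t+1)$ rate. The hardest step is the quantitative zero-mode analysis: the polynomial rate $1/(t+1)$ is forced by the heavy-tailed free-path distribution \eqref{mfpl}, which rules out any spectral-gap-style exponential decay, and it must be extracted from the precise large-$s$ asymptotics of $Q$ rather than from the soft relative-entropy argument of \cite{CG2010}.
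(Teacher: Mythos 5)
Your Fourier decomposition in $\theta$ is legitimate (the equation is translation-invariant in $\theta$, so the angular modes do decouple), but the proposal has a genuine gap at its core, namely the zero-mode rate. You claim the collision integral in \eqref{rapp:ev_x,v} yields $C/(t+1)$ because for $t'\leq 3t/4$ the kernel $Q(s+t-t',h|h')$ has large argument and is therefore small. This bound is of size $\frac{C}{t}\int_0^{3t/4}dt'\int_{-1}^1dh'\,|\mu_{t'}(0,h')|$, and the time-integral of the collision density grows linearly in $t$ (it counts roughly one collision per unit time; cf.\ \eqref{propr1}, which only controls $\int_0^t|\mu_{t'}(0,h')|E(t-t',h')\,dt'dh'$ with a weight $E(t-t',h')\sim(t-t')^{-1}$). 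So the ``old collisions'' piece is $O(1)$, not $O(1/t)$. Symmetrically, the ``recent collisions'' piece ($t'$ near $t$) is not controlled by the initial-data tail: there $\mu_{t'}(0,\cdot)$ is an $O(1)$ quantity converging to the constant $\frac{\langle\mu_0\rangle}{2\pi}$, and proving that it converges \emph{at rate $1/t$} is precisely the theorem's content --- your argument is circular at this point. What is missing is the renewal/resolvent step: one must first express $\mu_t(\theta,0,h)$ as explicit initial-data terms, plus the constant $\frac{\langle\mu_0\rangle}{2\pi}$, plus a convolution of $\mu_0$ against a kernel $\varphi$ satisfying $|\varphi(\cdot,t,\cdot)|\leq C/(t+1)$ (Proposition \ref{prop:musvarphi}). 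Since the collision operator has norm exactly $1$ with the constants as fixed points, the decay of $\varphi$ requires iterating twice and subtracting a multiple of $E^{(2)}$ so that the resulting kernel becomes a strict $L^1$-contraction (Lemmas \ref{lemma:phimeglio} and \ref{lemmaok}); nothing in your sketch supplies this mechanism, and no amount of pointwise decay of $Q$ alone can.

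Two further steps would also fail as written. For the nonzero angular modes you assert a contraction for the twisted operator ``uniform in $k\neq0$''; since $|e^{ik(\pi-2\arcsin h')}|=1$ the kernel still has $L^1$ norm exactly $1$, so the contraction must come from oscillatory cancellation after iterating, a nontrivial quantitative estimate of the type carried out in Lemma \ref{lemma:proprgk} for the spatial modes --- it is asserted, not proved. And the reassembly in $L^p$ for $p\neq2$ does not close: summing mode-wise bounds controls only $L^2(\mathbb{T}^1_{2\pi})$ by Parseval, and Riesz--Thorin interpolation between the $L^1$-contractivity of the flow and a uniform $L^\infty$ bound interpolates two estimates with no time decay, hence produces no decay. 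This is exactly why the paper works with the physical-space kernel $\varphi(\theta-\theta',t-t',h|h')$, whose uniform $L^\infty$ bound $C/(t+1)$ lets Young's inequality deliver the rate in every $L^p$ simultaneously; the angular-Fourier route you propose can at best recover the quantitative statement for $p=2$.
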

Notice that if $\mu_0$ does not depend on $\theta$ neither $\mu_t$ does at any time $t$, thus Theorem \ref{thm:convergenza_v,s,h} includes also the mild solutions of \eqref{eq:ev_x,v}. We also point out that if $p$ is finite, \eqref{thm1:st1} of the previous Theorem \ref{thm:convergenza_v,s,h} states that the left-hand side of the inequality vanishes as $t\to+\infty$. This holds also for $p=\infty$ only with the further assumption that $\|\mu_0\|_{L^{\infty}(\mathbb{T}^1_{2\pi}\times[t,+\infty)\times[-1,1])}\xrightarrow[t\to+\infty]{}0$, such as, for example, if $\mu_0(\theta,s,h)=\mu^{in}(\theta)E(s,h)$, as in the statement \eqref{thm1:st2}.

To state the second result we have to introduce the Fourier coefficients of a mild solution of \eqref{eq:ev}, defined both for a mild solution of the equation with $x\in\mathbb{T}^2$ and with $x\in\mathbb{R}^2$ respectively as

{\scriptsize\begin{align}\label{def:mutk}
\mu_t^k(\theta,s,h):=\int_{\mathbb{T}^2}dxe^{2\pi ik\cdot x}\mu_t(x,\theta,s,h),\quad k\in\mathbb{Z}^2;\quad\mu_t^k(\theta,s,h):=\int_{\mathbb{R}^2}dxe^{2\pi ik\cdot x}\mu_t(x,\theta,s,h),\quad k\in\mathbb{R}^2.
\end{align}}

We shall prove the following result about the Fourier coefficients above.
\begin{theorem}\label{thm:mutk}
There exists a constant $C>0$ depending only on $Q$ such that for any $p\in[1,+\infty]$ and $k\in\mathbb{Z}^2,k\neq(0,0)$, if $\mu_0\in L^1\cap L^p(\mathbb{T}^2\times\mathbb{T}^1_{2\pi}\times[0,+\infty)\times[-1,1])$ and $\{\mu_t^k\}_{k\in\mathbb{Z}^2}$ are the Fourier coefficients \eqref{def:mutk} of the mild solution of equation \eqref{eq:ev} with initial datum $\mu_0$, then

{\small\begin{align}
\nonumber\left\|\mu_t^k\right\|_{L^p(\mathbb{T}^1_{2\pi}\times[0,+\infty)\times[-1,1])}&\leq C\frac{\|\mu_0^k\|_{L^1(\mathbb{T}^1_{2\pi}\times[0,+\infty)\times[-1,1])}+\|\mu_0^k\|_{L^p(\mathbb{T}^1_{2\pi}\times[0,+\infty)\times[-1,1])}}{t+1}
\\
\label{thm2:st1}&+C\left[\|\mu_0^k\|_{L^1(\mathbb{T}^1_{2\pi}\times[\frac{t}{4},+\infty)\times[-1,1])}+\|\mu_0^k\|_{L^p(\mathbb{T}^1_{2\pi}\times[\frac{t}{4},+\infty)\times[-1,1])}\right],
\end{align}}

and in particular, if $\mu_0(x,\theta,s,h)=\mu_{in}(x,\theta)E(s,h)$ with $\mu_{in}\in L^p(\mathbb{T}^2\times\mathbb{T}^1_{2\pi})$, then $\mu_0\in L^1\cap L^p(\mathbb{T}^1_{2\pi}\times[0,+\infty)\times[-1,1])$ and it holds 

\begin{align}\label{thm2:st2}
\left\|\mu_t^k\right\|_{L^p(\mathbb{T}^1_{2\pi}\times[0,+\infty)\times[-1,1])}\leq \frac{C}{t+1}\|\mu_{in}^k\|_{L^p(\mathbb{T}^1_{2\pi})}.
\end{align}

Up to substituting the constant $C$ with $\frac{C}{\min\{1,|k|^6\}}$, the same estimates hold for $p=1$, initial datum $\mu_0\in L^1(\mathbb{R}^2\times\mathbb{T}^1_{2\pi}\times[0,+\infty)\times[-1,1])$ and Fourier coefficients $\{\mu_t^k\}_{k\in\mathbb{R}^2\setminus\{(0,0)\}}$.
\end{theorem}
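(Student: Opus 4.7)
The plan is to Fourier transform the equation in $x$ and then to emulate the argument of Theorem~\ref{thm:convergenza_v,s,h}, applied now to the Fourier-transformed integral equation.

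First, taking the Fourier coefficient in $x$ of the representation \eqref{rapp:ev} and using the translation identity $\int e^{2\pi ik\cdot x}f(x-\tau v(\theta))\,dx = e^{2\pi ik\cdot v(\theta)\tau}\int e^{2\pi ik\cdot y}f(y)\,dy$, one obtains
\begin{align*}
\mu_t^k(\theta,s,h) &= e^{2\pi ik\cdot v(\theta)t}\mu_0^k(\theta,s+t,h) \\
&\quad + \int_0^t dt'\,e^{2\pi ik\cdot v(\theta)(t-t')}\int_{-1}^1 dh'\,Q(s+t-t',h|h')\,\mu_{t'}^k(\theta+\pi-2\arcsin(h'),0,h').
\end{align*}
Except for the unit-modulus oscillatory factor $e^{2\pi ik\cdot v(\theta)(t-t')}$, this is exactly the representation \eqref{rapp:ev_x} for the $x$-averaged equation. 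Setting $s=0$ yields a closed equation for the trace $\mu_t^k(\theta,0,h)$ that can be iterated into a Neumann series of the form $\sum_n (\mathcal{F}^k)^n(\tilde{\mu}_0^k)$, entirely analogous to the one underlying Theorem~\ref{thm:convergenza_v,s,h}.

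Next, I would rerun the proof of Theorem~\ref{thm:convergenza_v,s,h} on this Neumann series. Since $|e^{i\cdot}|=1$, every modulus bound coming from iterating the integral operator is identical to the one that produces the $\frac{1}{t+1}$ rate in Theorem~\ref{thm:convergenza_v,s,h}; in particular the decay of $Q$ in the $s$ variable enters in exactly the same way and gives the two-term structure on the right-hand side of \eqref{thm2:st1}. The only step requiring a new idea is the identification of the asymptotic profile: in Theorem~\ref{thm:convergenza_v,s,h} the equilibrium $\frac{\langle\mu_0\rangle}{2\pi}E$ was forced by conservation of the total integral $\int \mu_t\,d\theta\,ds\,dh$, which for $k\neq 0$ is destroyed by the phase (integrating the Fourier-transformed PDE in $(\theta,s,h)$ produces a surviving term $2\pi ik\cdot\int v(\theta)\mu_t^k\,d\theta\,ds\,dh$). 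The phase-weighted fixed-point equation therefore admits only $0$ as its equilibrium, which explains why no equilibrium is subtracted on the left-hand side of \eqref{thm2:st1}.

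For $k\in\mathbb{R}^2\setminus\{0\}$ with $|k|<1$ the oscillatory factor $e^{2\pi ik\cdot v(\theta)\tau}$ does not become effective until $\tau\gtrsim|k|^{-1}$, so the short-time iteration cannot benefit from phase cancellations and must be controlled by pure modulus bounds. This costs a polynomial factor in $|k|^{-1}$ for each of the initial iterations needed to reach the scale $\tau\sim|k|^{-1}$ at which the renewal-type decay of $Q$ becomes operational. Tracking these losses through the finite number of iterations required to close the argument produces the factor $\frac{1}{|k|^6}$, whose exponent is dictated by the combinatorics of the iteration and by the moments of $Q$ entering the estimate.

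The main obstacle will be the equilibrium-free fixed-point analysis. In the $k=0$ case one subtracts the known equilibrium and works with a signed function of zero total mass for which the iteration closes cleanly. For $k\neq 0$ no such subtraction is available, and the na\"ive modulus comparison $|\mu_t^k|\leq \tilde{\mu}_t$ with $\tilde{\mu}_t$ the solution of \eqref{eq:ev_x} with initial datum $|\mu_0^k|$ only preserves mass and gives no decay; one must instead exploit the $\theta$-dependence of the phase $e^{2\pi ik\cdot v(\theta)\tau}$ and its incompatibility with nontrivial fixed points of the collision operator directly within the iteration.
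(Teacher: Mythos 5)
Your overall strategy---Fourier transforming in $x$, observing that the resulting equation for $\mu_t^k(\theta,0,h)$ is the $x$-averaged equation decorated with unit-modulus phases, and then trying to rerun the machinery of Theorem \ref{thm:convergenza_v,s,h}---is exactly the paper's strategy, and you correctly diagnose the central difficulty: for $k\neq(0,0)$ there is no equilibrium to subtract, and pure modulus bounds on the Neumann series only reproduce conservation of mass, not decay. However, you stop at naming the obstacle (``one must exploit the $\theta$-dependence of the phase\dots directly within the iteration'') without supplying the ingredient that actually overcomes it. The paper's resolution is a concrete quantitative lemma: for the twice-iterated, phase-weighted kernel $g^k$ of Definition \ref{def:gk} one proves the strict $L^1$ contraction estimate
\begin{equation*}
\sup_{\theta\in\mathbb{T}^1_{2\pi},\,h\in[-1,1]}\big\|g^k(\theta,\cdot,h|\cdot,\cdot)\big\|_{L^1}\leq 1-C'\min\{1,|k|^2\},
\end{equation*}
obtained by writing $1-\|g^k\|_{L^1}$ as an integral of $1-\bigl|\tfrac{1}{x}\int_0^x e^{i\xi}d\xi\bigr|$ over the short-time region where $Q\equiv\tfrac{6}{\pi^2}$, bounding $1-\frac{\sqrt{2(1-\cos x)}}{x}\geq c''\min\{x,c''\}^2$ from below, and then bounding the phase argument $x(\theta,t,h'')=2\pi t\,|k\cdot(v(\theta+\pi-2\arcsin h'')-v(\theta))|$ away from zero on an explicit set of $h''$ of measure bounded below uniformly in $\theta$ and $k/|k|$. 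This estimate is what replaces Lemma \ref{lemmaok} and lets the fixed-point equation for the kernel $\varphi^k$ (the analogue of $\varphi$ in Proposition \ref{prop:musvarphi}) close after two iterations, yielding $|\varphi^k|\leq C/(\min\{1,|k|^6\}(t+1))$. Without it your iteration does not contract and neither the boundedness of $\varphi^k$ nor its $1/t$ decay follows.

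A secondary inaccuracy: your account of where the $|k|^{-6}$ comes from (losses accumulated over the initial iterations before the phase ``becomes effective'') is not the actual mechanism. In the paper the exponent $6$ arises because the gap $1-d^k\gtrsim\min\{1,|k|^2\}$ enters three times in the bound for $\|t\varphi^k\|_{L^{\infty}}$: once through $\|\varphi^k\|_{L^{\infty}}\leq \|J^k\|_{L^{\infty}}/(1-d^k)$, once through the factor $\alpha^k/(\alpha^k-1)$ with $\alpha^k=(1+d^k)/(2d^k)$, and once through $(1-\alpha^k d^k)^{-1}$. Note also that for the torus statement ($k\in\mathbb{Z}^2$) one has $|k|\geq1$, so the constant is uniform and the $|k|^{-6}$ degradation is only relevant to the $\mathbb{R}^2$ case; your proposal does not make this distinction.
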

As we commented before about Theorem \ref{thm:convergenza_v,s,h}, also in this case we can notice that, in the first statement \eqref{thm2:st1} of Theorem \ref{thm:mutk}, the left-hand side of the inequality is vanishing for $p\in[1,+\infty)$, but it is not necessarily infinitesimal if $p=\infty$. It is under the further assumption that also $\|\mu_0^k\|_{L^{\infty}(\mathbb{T}^1_{2\pi}\times[t,+\infty)\times[-1,1])}$ vanishes as $t\to+\infty$. That is, for example, the case of initial data $\mu_0(x,\theta,s,h)=\mu_{in}(x,\theta)E(s,h)$, as stated in \eqref{thm2:st2}.

If combined with Proposition \ref{prop:es/un} in Section \ref{app:es/un}, Theorems \ref{thm:convergenza_v,s,h} and \ref{thm:mutk} imply the following result on the flat torus $\mathbb{T}^2$.
\begin{theorem}\label{thm:conv_x,v,s,h}
Fix $p\in[1,+\infty)$, let $\mu_0\in L^p(\mathbb{T}^2\times\mathbb{T}^1_{2\pi}\times[0,+\infty)\times[-1,1])$ such that
\[
\int_{\mathbb{T}^1_{2\pi}}d\theta\int_0^{\infty}ds\int_{-1}^1dh\|\mu_0(\cdot,\theta,s,h)\|_{L^p(\mathbb{T}^2)}<+\infty,
\]
and let $\mu_t$ be the mild solution of  \eqref{eq:ev} with initial datum $\mu_0$. Then $\mu_0\in L^1(\mathbb{T}^2\times\mathbb{T}^1_{2\pi}\times[0,+\infty)\times[-1,1])$ and it holds

\begin{align}\label{thm3:st1}
\left\|\mu_t-\frac{\langle\mu_0\rangle}{2\pi}E\right\|_{L^p(\mathbb{T}^2\times\mathbb{T}^1_{2\pi}\times[0,+\infty)\times[-1,1])}\xrightarrow[t\to+\infty]{}0.
\end{align}

Under the same conditions, if $p=\infty$

\begin{align}\label{thm3:st2}
\mu_t\xrightarrow[t\to+\infty]{L^{\infty}*\textrm{-weakly}}\frac{\langle\mu_0\rangle}{2\pi}E.
\end{align}

Moreover there exists a constant $C>0$ such that for any $\mu_0$ satisfying the hypothesis above for $p=2$ it holds

{\small\begin{align}
\nonumber\left\|\mu_t-\frac{\langle\mu_0\rangle}{2\pi}E\right\|_{L^2(\mathbb{T}^2\times\mathbb{T}^1_{2\pi}\times[0,+\infty)\times[-1,1])}&\leq\frac{C}{t+1}\|\mu_0\|_{L^2(\mathbb{T}^2\times\mathbb{T}^1_{2\pi}\times[0,+\infty)\times[-1,1])}
\\
\nonumber&+\frac{C}{t+1}\int_{\mathbb{T}^1_{2\pi}}d\theta\int_0^{\infty}ds\int_{-1}^1dh\|\mu_0(\cdot,\theta,s,h)\|_{L^2(\mathbb{T}^2)}
\\
\nonumber&+C\|\mu_0\|_{L^2(\mathbb{T}^2\times\mathbb{T}^1_{2\pi}\times[\frac{t}{4},+\infty)\times[-1,1])}
\\
\label{thm3:st3}&+C\int_{\mathbb{T}^1_{2\pi}}d\theta\int_{\frac{t}{4}}^{\infty}ds\int_{-1}^1dh\|\mu_0(\cdot,\theta,s,h)\|_{L^2(\mathbb{T}^2)},
\end{align}}

and, in particular, if $\mu_0(x,\theta,s,h)=\mu_{in}(x,\theta)E(s,h)$ with $\mu_{in}\in L^2(\mathbb{T}^2\times\mathbb{T}^1_{2\pi})$, it holds

\begin{align}\label{thm3:st4}
\left\|\mu_t-\frac{\langle\mu_{in}\rangle}{2\pi}E\right\|_{L^2(\mathbb{T}^2\times\mathbb{T}^1_{2\pi}\times[0,+\infty)\times[-1,1])}\leq\frac{C}{t+1}\|\mu_{in}\|_{L^2(\mathbb{T}^2\times\mathbb{T}^1_{2\pi})}.
\end{align}

\end{theorem}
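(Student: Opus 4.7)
The plan is to Fourier-expand $\mu_t(x,\theta,s,h)=\sum_{k\in\mathbb{Z}^2}\mu_t^k(\theta,s,h)e^{-2\pi ik\cdot x}$ in the spatial variable and apply Theorems \ref{thm:convergenza_v,s,h} and \ref{thm:mutk} mode by mode. The $k=0$ coefficient $\mu_t^0$ is the $x$-average of $\mu_t$, so it solves \eqref{eq:ev_x}, and Theorem \ref{thm:convergenza_v,s,h} guarantees its convergence to $\frac{\langle\mu_0^0\rangle}{2\pi}E=\frac{\langle\mu_0\rangle}{2\pi}E$, the candidate limit; the coefficients with $k\ne 0$ are controlled by Theorem \ref{thm:mutk}. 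The mixed-norm hypothesis combined with H\"older on the unit-measure torus $\mathbb{T}^2$ yields $\mu_0\in L^1$, so $\langle\mu_0\rangle$ is well-defined.

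For the quantitative estimate \eqref{thm3:st3}, Plancherel in $x$ gives
\[
\left\|\mu_t-\tfrac{\langle\mu_0\rangle}{2\pi}E\right\|_{L^2}^2=\left\|\mu_t^0-\tfrac{\langle\mu_0\rangle}{2\pi}E\right\|_{L^2}^2+\sum_{k\ne 0}\|\mu_t^k\|_{L^2}^2,
\]
into which I would plug \eqref{thm1:st1} and \eqref{thm2:st1} with $p=2$. The resulting sum $\sum_k\|\mu_0^k\|_{L^2}^2$ equals $\|\mu_0\|_{L^2}^2$ by Plancherel, while for $\sum_k\|\mu_0^k\|_{L^1}^2$ I use the bound
\[
\sum_k\|\mu_0^k\|_{L^1}^2=\iint\sum_k|\mu_0^k(\xi)||\mu_0^k(\xi')|\,d\xi\,d\xi'\le\left(\int\|\mu_0(\cdot,\theta,s,h)\|_{L^2(\mathbb{T}^2)}\,d\theta\,ds\,dh\right)^2,
\]
obtained from the pointwise Plancherel identity $\sum_k|\mu_0^k(\xi)|^2=\|\mu_0(\cdot,\xi)\|_{L^2(\mathbb{T}^2)}^2$ followed by Cauchy--Schwarz in $k$; this is exactly the mixed-norm hypothesis squared. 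The tail terms with $s\ge t/4$ are treated identically and produce the last two lines of \eqref{thm3:st3}. The factorized special case \eqref{thm3:st4} follows from \eqref{thm2:st2} mode by mode together with Plancherel, exploiting the integrability of $E$.

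For the qualitative statement \eqref{thm3:st1} with $p\in[1,\infty)$, Fourier summation does not reconstruct the $L^p$ norm directly, so I would invoke a density argument. Trigonometric polynomials in $x$, i.e.\ data of the form $\tilde\mu_0=\sum_{k\in K}c_k(\theta,s,h)e^{-2\pi ik\cdot x}$ with $K\subset\mathbb{Z}^2$ finite and $c_k\in L^1\cap L^p(\mathbb{T}^1_{2\pi}\times[0,\infty)\times[-1,1])$, are dense both in $L^p$ and in the mixed-norm space specified by the hypothesis; for such $\tilde\mu_0$ the expansion of $\tilde\mu_t$ is a finite sum, and Theorems \ref{thm:convergenza_v,s,h}--\ref{thm:mutk} give termwise $L^p$ decay, hence $\tilde\mu_t\to\frac{\langle\tilde\mu_0\rangle}{2\pi}E$ in $L^p$. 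The $L^p$-non-expansiveness of the evolution semigroup provided by Proposition \ref{prop:es/un} controls the remainder $\|\mu_t-\tilde\mu_t\|_{L^p}\le\|\mu_0-\tilde\mu_0\|_{L^p}$, and a triangle inequality closes the argument. The weak-$^*$ statement \eqref{thm3:st2} follows by testing $\mu_t-\frac{\langle\mu_0\rangle}{2\pi}E$ against $\phi\in L^1$ and approximating $\phi$ by a trigonometric polynomial in $x$: the claim reduces to the decay of $\int\mu_t^k\bar\phi^{-k}\,d\theta\,ds\,dh$ for the finitely many $k\ne 0$ involved, which is handled by Theorem \ref{thm:mutk}.

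The main obstacle is \eqref{thm3:st1} for $p\ne 2$: the Fourier-side estimates of Theorem \ref{thm:mutk} do not sum to the $L^p$ norm, so one cannot bypass the combination of the finite-mode truncation with the $L^p$-contractivity input from Proposition \ref{prop:es/un}. The remainder of the proof is essentially bookkeeping, pairing Plancherel with the mode-by-mode decay already at hand.
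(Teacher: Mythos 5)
Your strategy is essentially the paper's: Plancherel plus Theorems \ref{thm:convergenza_v,s,h} and \ref{thm:mutk} for \eqref{thm3:st3}--\eqref{thm3:st4} (your Cauchy--Schwarz bound on $\sum_k\|\mu_0^k\|_{L^1}^2$ is exactly the integral Minkowski step the paper uses), and a density argument for \eqref{thm3:st1}--\eqref{thm3:st2}. The one genuine error is the claim that Proposition \ref{prop:es/un} gives $L^p$-non-expansiveness, $\|\mu_t-\tilde\mu_t\|_{L^p}\le\|\mu_0-\tilde\mu_0\|_{L^p}$: only the $L^1$ norm is non-increasing (\eqref{distanzadecr}); for $p>1$ the stability estimate \eqref{normaLpfinita} reads
\begin{equation*}
\|\mu_t-\tilde\mu_t\|_{L^p}\le C\|\mu_0-\tilde\mu_0\|_{L^p}+C\int_{\mathbb{T}^1_{2\pi}}d\theta\int_0^{\infty}ds\int_{-1}^1dh\,\|\mu_0(\cdot,\theta,s,h)-\tilde\mu_0(\cdot,\theta,s,h)\|_{L^p(\mathbb{T}^2)},
\end{equation*}
so the remainder is controlled only if the approximation is good in \emph{both} the $L^p$ norm and the mixed norm. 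Since you already require density of trigonometric polynomials in both spaces (and this is precisely why the mixed-norm hypothesis appears in the statement), the fix is to quote \eqref{normaLpfinita} correctly rather than non-expansiveness; the argument then closes.

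Where you genuinely diverge from the paper is in handling the truncated datum: you keep finitely many Fourier modes and apply Theorem \ref{thm:mutk} termwise in $L^p$ directly, which works because \eqref{thm2:st1} holds for every $p\in[1,+\infty]$ on the torus and the evolution decouples the modes (so a trigonometric polynomial stays a trigonometric polynomial). The paper instead truncates $\mu_0$ in $s$ and in height, proves $L^1$ convergence of the truncated evolution by a Fourier cutoff, and then interpolates $L^p$ between $L^1$ and a uniform-in-time $L^\infty$ bound. Your route is somewhat more direct and avoids the interpolation step, at the price of having to verify that each coefficient $c_k$ of the approximant lies in $L^1\cap L^p(\mathbb{T}^1_{2\pi}\times[0,+\infty)\times[-1,1])$ (true, e.g., for Fej\'er means, since $|\mu_0^k|\le\|\mu_0(\cdot,\theta,s,h)\|_{L^p(\mathbb{T}^2)}$ pointwise and $\|\mu_0^k\|_{L^p}\le\|\mu_0\|_{L^p}$). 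Do also spell out the third leg of the triangle inequality, $|\langle\mu_0\rangle-\langle\tilde\mu_0\rangle|\cdot\|E\|_{L^p}$, which is controlled by the mixed-norm approximation.
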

The two hypothesis on $\mu_0$ in the previous Theorem, actually coincident if $p=1$, are exactly the hypothesis that ensure that $\{\mu_t\}_{t\geq0}$ is bounded in $L^p(\mathbb{T}^2\times\mathbb{T}^1_{2\pi}\times[0,+\infty)\times[-1,1])$ (see \eqref{normaLpfinita} of Proposition \ref{prop:es/un}). These conditions cover, for example, the cases $\mu_0(x,\theta,s,h)=\mu_{in}(x,\theta)E(s,h)$, with $\mu_{in}\in L^p(\mathbb{T}^2\times\mathbb{T}^1_{2\pi})$, as well as any $\mu_0(x,\theta,s,h)=\mu_{in}(x)\nu_0(\theta,s,h)$, $\mu_{in}(x,\theta)\nu_0(s,h)$, $\mu_{in}(x,\theta,h)\nu_0(s)$ with $\mu_{in}\in L^p$ and $\nu_0\in L^1\cap L^p$ on the respective spaces.

The $L^2$ norm is the only one that we can use to get quantitative estimates about the rate of the approach to the equilibrium because the results are achieved by studying the long time behavior of the Fourier coefficients, as stated in Theorems \ref{thm:convergenza_v,s,h} and \ref{thm:mutk}.

We also point out that \eqref{thm3:st2} extends the result \eqref{convergenza*debole} in \cite{CG2010} to a slightly more general class of initial data, and also that \eqref{thm3:st3} complies with the negative result in \cite{CG2010} we mentioned before, according to which the rate of the approach to the equilibrium with respect to the $L^2$ norm should be worse than $t^{-\frac{3}{2}}$, for given initial data $\mu_{in}(x,\theta)E(s,h)$, $\mu_{in}\in L^2(\mathbb{T}^2\times\mathbb{T}^1_{2\pi})$.

Lastly, we prove the following result concerning the mild solutions of equation \eqref{eq:ev} for $x\in\mathbb{R}^2$. Before that, notice that if $\mu_t$ is a mild solution defined on $\mathbb{R}^2$, the previous results on the flat torus $\mathbb{T}^2$ can also be applied to 
\[
\sum_{k\in\mathbb{Z}^2}\mu_t(\cdot+k,\cdot,\cdot,\cdot),
\]
indeed the previous one is a periodic solution of the equation.
\begin{theorem}\label{thm:conv_R2}
Let $\mu_0\in L^1(\mathbb{R}^2\times\mathbb{T}^1_{2\pi}\times[0,+\infty)\times[-1,1])$ and $\mu_t$ the mild solution of  \eqref{eq:ev} with initial datum $\mu_0$. Then for any $\eta\in{\mathcal S}(\mathbb{R}^2)$ it holds:

\begin{align*}
\left\|\int_{\mathbb{R}^2}dx\eta(x)\mu_t(x,\cdot,\cdot,\cdot)\right\|_{L^1(\mathbb{T}^1_{2\pi}\times[0,+\infty)\times[-1,1])}\xrightarrow[t\to+\infty]{}0,
\end{align*}

where ${\mathcal S}(\mathbb{R}^2)$ is the Schwartz space of $\mathbb{R}^2$.
\end{theorem}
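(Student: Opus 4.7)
The plan is to pass to Fourier variables in $x$ and reduce to Theorem~\ref{thm:mutk}. For $\eta\in\mathcal{S}(\mathbb{R}^2)$ set $\tilde\eta(k):=\int_{\mathbb{R}^2}e^{-2\pi ik\cdot x}\eta(x)\,dx$; then $\tilde\eta\in\mathcal{S}(\mathbb{R}^2)\subset L^1(\mathbb{R}^2)$, and the inversion formula $\eta(x)=\int_{\mathbb{R}^2}e^{2\pi ik\cdot x}\tilde\eta(k)\,dk$ holds pointwise. Since $\mu_0\in L^1$, Fubini gives $\mu_t(\cdot,\theta,s,h)\in L^1(\mathbb{R}^2)$ for a.e.\ $(\theta,s,h)$. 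A further application of Fubini to $\tilde\eta(k)\mu_t(x,\theta,s,h)\in L^1_{k,x}$ yields the Parseval-type identity
\[
\int_{\mathbb{R}^2}\eta(x)\mu_t(x,\theta,s,h)\,dx=\int_{\mathbb{R}^2}\tilde\eta(k)\mu_t^k(\theta,s,h)\,dk
\]
for a.e.\ $(\theta,s,h)$. Taking $L^1(\mathbb{T}^1_{2\pi}\times[0,+\infty)\times[-1,1])$-norm in $(\theta,s,h)$ and exchanging integrals by Tonelli gives
\[
\left\|\int_{\mathbb{R}^2}\eta(x)\mu_t(x,\cdot,\cdot,\cdot)\,dx\right\|_{L^1}\leq\int_{\mathbb{R}^2}|\tilde\eta(k)|\,\|\mu_t^k\|_{L^1(\mathbb{T}^1_{2\pi}\times[0,+\infty)\times[-1,1])}\,dk.
\]

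I would then establish pointwise decay of the integrand for each $k\neq 0$ using the $\mathbb{R}^2$-statement of Theorem~\ref{thm:mutk} with $p=1$. Because $\mu_0^k$ is integrable on $\mathbb{T}^1_{2\pi}\times[0,+\infty)\times[-1,1]$, the tail $\|\mu_0^k\|_{L^1(\mathbb{T}^1_{2\pi}\times[t/4,+\infty)\times[-1,1])}$ vanishes as $t\to+\infty$ by dominated convergence, and the $1/(t+1)$ factor in the first line of \eqref{thm2:st1} takes care of the other term. Hence for every $k\in\mathbb{R}^2\setminus\{0\}$ one has $\|\mu_t^k\|_{L^1}\xrightarrow[t\to+\infty]{}0$, which means that the integrand above vanishes for Lebesgue-a.e.\ $k\in\mathbb{R}^2$.

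The final step is dominated convergence in $k$. Because $0$ is trivially a solution of the linear equation \eqref{eq:ev}, the $L^1$ non-expansivity of the flow cited from \cite{MS2008bgl} gives $\|\mu_t\|_{L^1(\mathbb{R}^2\times\mathbb{T}^1_{2\pi}\times[0,+\infty)\times[-1,1])}\leq\|\mu_0\|_{L^1}$ for every $t\geq 0$; combining with the elementary estimate $|\mu_t^k(\theta,s,h)|\leq\int_{\mathbb{R}^2}|\mu_t(x,\theta,s,h)|\,dx$ and integrating in $(\theta,s,h)$ yields the $t$- and $k$-uniform bound $\|\mu_t^k\|_{L^1(\mathbb{T}^1_{2\pi}\times[0,+\infty)\times[-1,1])}\leq\|\mu_0\|_{L^1}$. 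Thus $|\tilde\eta(k)|\,\|\mu_0\|_{L^1}$ is an $L^1(\mathbb{R}^2)$-majorant independent of $t$, and the dominated convergence theorem concludes the argument.

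The main conceptual obstacle I foresee is that one cannot exploit the quantitative rate $1/(t+1)$ of Theorem~\ref{thm:mutk} uniformly in $k$: the prefactor $1/\min\{1,|k|^6\}$ is not integrable near $k=0$, so no estimate of the form $t^{-\alpha}$ can be propagated through the $k$-integral by that bound alone. This is precisely why the statement is only qualitative and why one must combine the pointwise decay for each $k\neq 0$ with the crude but $k$-uniform bound coming from $L^1$-non-expansivity; the singular behaviour at $k=0$ is then harmless because $\{0\}$ has Lebesgue measure zero in $\mathbb{R}^2$.
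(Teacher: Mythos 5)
Your proposal is correct and follows essentially the same route as the paper: Parseval/Fourier inversion to reduce to the coefficients $\mu_t^k$, pointwise-in-$k$ decay of $\|\mu_t^k\|_{L^1}$ from the $\mathbb{R}^2$-version of Theorem \ref{thm:mutk}, and dominated convergence in $k$ with the majorant $|\hat\eta(k)|\|\mu_0\|_{L^1}$ coming from the $L^1$ non-expansivity of the flow (property \eqref{distanzadecr} of Proposition \ref{prop:es/un}). Your closing remark about the non-integrability of $1/\min\{1,|k|^6\}$ near $k=0$ correctly explains why the statement is only qualitative.
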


Let us point out that the theorems above do not require any assumption about the sign or the total mass of $\mu_0$, and also that the convergence result in Theorem \ref{thm:conv_R2} of course can not be improved by a convergence with respect to $L^1$ norm because the total mass of $\mu_t$ is preserved in time. 
\newline
\subsubsection{Outline of the paper.} In Section \ref{app:funzionidiQ} we recall and prove some properties of $\Pi$, $Q^{(n)}$, $E^{(n)}$, $f$ and $g^k$ defined in Subsection \ref{sez:defnot}. Then, in Section \ref{app:es/un} we focus on the existence and the uniqueness of the mild solutions in $L^p$ of the three equations, and we spend a few lines about the stationary solutions. In Section \ref{thetash} we prove Theorem \ref{thm:convergenza_v,s,h} and all the preliminary Lemmas we need for this purpose. In Section \ref{xthetash} we first focus on Theorem \ref{thm:mutk}, whose proof is quite similar to the proof of Theorem \ref{thm:convergenza_v,s,h}. Then, we use it to prove Theorems \ref{thm:conv_x,v,s,h} and \ref{thm:conv_R2}.

\subsection{Notations and Definitions.}\label{sez:defnot}
\subsubsection{Notations.} As we said in the introduction, we will denote by $\mathbb{T}^2:=\mathbb{R}^2/\mathbb{Z}^2$ the two-dimensional flat torus, and by $\mathbb{T}^1_{2\pi}:=\mathbb{R}/(2\pi \mathbb{Z})$ the one-dimensional flat torus with period $2\pi$. This last notations may be uncommon but we decided to use it since using only $\mathbb{T}^1$ could create misunderstandings about the period.

We also denote by $\langle f\rangle$ the integral of $f$ over the space it is defined on.

Moreover,  for $\theta\in\mathbb{T}^1_{2\pi}$, we denote $v(\theta):=(\cos\theta,\sin\theta)$ and $v^{\perp}(\theta):=(-\sin\theta,\cos\theta)$.
\newline
\subsubsection{Definitions.} Here we define some quantities that we will need in the following.
First, we recall from \cite{CG2008,MS2008d2,BU2009} that for $Q$ and $E$ as in Definitions \ref{defQ} and \ref{defE} it holds:
\[
\int_0^{\infty}ds\int_{-1}^1dhQ(s,h|h')=1,\qquad\forall h'\in[-1,1],\quad\textrm{ and }\quad\int_0^{\infty}ds\int_{-1}^1dhE(s,h)=1.
\]
Other properties are stated in Section \ref{app:funzionidiQ}. Recall also the definition of the transition probability $\Pi(h|h')$ in \cite{CG2010}, that is, the probability that the impact parameter of the next collision is $h$ if in the previous one it was $h'$.
\begin{definition}\label{def:Pi} The transition probability $\Pi:[-1,1]\times[-1,1]\to(0,+\infty)$ is

\begin{align*}
\Pi(h|h'):=\int_0^{\infty}dsQ(s,h|h').
\end{align*}

\end{definition}
By Definition \ref{defQ}, we know that $\Pi$ writes as

\begin{align*}
\Pi(h|h'):=\frac{6}{\pi^2}\frac{\log(1+h)-\log(1+h')}{h-h'}\quad\forall|h'|\leq h,
\end{align*}

and that it has the symmetries

{\footnotesize\begin{align*}
\Pi(h|h')=\Pi(h'|h)=\Pi(-h|-h')>0\quad\forall (h,h')\in[-1,1]^2,\quad\int_{-1}^1dh'\Pi(h|h')=1\quad\forall h\in[-1,1].
\end{align*}}

We also need a generalization of the kernel $Q$, as follows.
\begin{definition}\label{def:Qn} The kernels $Q^{(n)}:[0,+\infty)\times[-1,1]\times[-1,1]\to[0,+\infty)$ are defined inductively

\begin{align*}
Q^{(n)}(s,h|h'):=\int_0^sds'\int_{-1}^1dh''Q(s-s',h|h'')Q^{(n-1)}(s',h''|h'),\qquad n\geq2,
\end{align*}

with $Q^{(1)}:=Q$ of Definition \ref{defQ}.
\end{definition}
For fixed $n$, $Q^{(n)}(s,h|h')$ is to be understood as the probability (density) of having impact parameter $h$ exactly $n$ collisions and time $s$ after a collision with impact parameter $h'$.

We also define the functions $E^{(n)}$, extending the Definition \ref{defE} of $E$, but replacing $Q$ by $Q^{(n)}$.
\begin{definition}\label{def:En} The functions $E^{(n)}:[0,+\infty)\times[-1,1]\to[0,+\infty)$ are

\begin{align*}
E^{(n)}(s,h):=\int_s^{\infty}ds'\int_{-1}^1dh'Q^{(n)}(s',h|h'),\qquad n\geq1,
\end{align*}

with kernels $Q^{(n)}$ by Definition \ref{def:Qn}.
\end{definition}
We will also use the following functions $f$, mostly in Section \ref{thetash}, and $\{g^k\}_{k\in\mathbb{R}^2,k\neq(0,0)}$, mostly in Section \ref{xthetash}. We start by defining the function $h''(\theta,h')$.
\begin{definition}\label{defh''} The function $h'':\mathbb{R}\times[-1,1]\to[-1,1]$ is
\[
h''(\theta,h'):=\sin\left(\frac{\theta+2\pi-2\arcsin(h')}{2}\right)\mathbbm{1}_{[2\arcsin(h')-3\pi,2\arcsin(h')-\pi)}(\theta).
\]
\end{definition}
Then we use $h''$ to define $f$.
\begin{definition}\label{defeffe}
The function $f:\mathbb{T}^1_{2\pi}\times[0,+\infty)\times[-1,1]\times[-1,1]\to[0,+\infty)$ is

{\footnotesize\begin{align*}
f(\theta,t,h|h'):=\sum_{\ell\in\mathbb{Z}}\frac{\partial h''(\theta+2\ell\pi,h')}{\partial\theta}\int_0^tdt' Q(t-t',h|h''(\theta+2\ell\pi,h'))Q(t',h''(\theta+2\ell\pi,h')|h').
\end{align*}}

\end{definition}
Notice that by integrating $f$ over $\theta$ one gets exactly $Q^{(2)}$, as in Definition \ref{def:Qn}:

{\scriptsize\begin{align}
\nonumber\int_{\mathbb{T}^1_{2\pi}}d\theta f(\theta,t,h|h')&=\int_{2\arcsin(h')-3\pi}^{2\arcsin(h')-\pi}d\theta\frac{\partial h''(\theta,h')}{\partial\theta}\int_0^tdt' Q(t-t',h|h''(\theta,h'))Q(t',h''(\theta,h')|h')
\\
\nonumber&=\int_{-1}^1dh''\int_0^tdt'Q(t-t',h|h'')Q(t',h''|h')\quad\textrm{ changing variables }h''=h''(\theta,h')
\\
\label{intfQ2}&=Q^{(2)}(t,h|h'),\quad\textrm{ by Definition \ref{def:Qn}}.
\end{align}}

Finally, we also use $h''$ to define $g^k$.
\begin{definition}\label{def:gk} For $k\in\mathbb{R}^2,k\neq(0,0)$, and $h''$ as in Definition \ref{defh''}, the functions
\[
g^k:\mathbb{T}^1_{2\pi}\times[0,+\infty)\times[-1,1]\times\mathbb{T}^1_{2\pi}\times[-1,1]\to\mathbb{C},
\]
are defined as

{\scriptsize\begin{align*}
 g^k(\theta,t,h|\theta',h'):=\sum_{\ell\in\mathbb{Z}}\frac{\partial h''}{\partial\theta}e^{2\pi itk\cdot v(\theta)}\int_0^tdt' Q(t-t',h|h'') Q(t',h''|h')e^{2\pi it' k\cdot[v(\theta'-\pi+2\arcsin(h'))-v(\theta)]},
\end{align*}}

with
\[
h'':=h''(\theta-\theta'+2\ell\pi,h')\textrm{ from Definition \ref{defh''}}.
\]
\end{definition}
Notice also that $g^{(0,0)}$ is not included in the previous Definition because one would have
\[
g^{(0,0)}(\theta,t,h|\theta',h')=f(\theta-\theta',t,h|h'),
\]
and also that for any $k$ it holds $|g^k(\theta,t,h|\theta',h')|\leq f(\theta-\theta',t,h|h')$, with $f$ from Definition \ref{defeffe}.

\newpage
\section{Properties of the collision operator and related functions.}
\label{app:funzionidiQ}
In this preliminary Section we study the properties of collision kernel $Q$ and functions derived from it.
\subsection{Properties of $Q$ and $Q^{(n)}$.}
First we recall some basic properties of $Q$ of Definition \ref{defQ}.
\newline
\subsubsection{Behavior of $ Q$ for large $s$.}
For fixed $h,h'\in(-1,1)$, $Q$ is compactly supported in $s$. But we need a bound not depending on $h,h'$, and therefore we assert that there exists a constant $C>0$ such that

\begin{align}\label{decQ}
 Q(s,h|h')\leq\frac{C}{s+1}\quad\forall s\in[0,+\infty),h,h'\in[-1,1].
\end{align}

As pointed out in \cite{CG2010,MS2008d2}, this can be easily derived from Definition \ref{defQ} because, if $|h'|\leq h$ then
\begin{itemize}
\item if $h-h'\leq\frac{1}{2}$, then $\frac{1}{1+h'}=\frac{1}{\underbrace{1+h}_{\geq1}-\underbrace{(h-h')}_{\leq\frac{1}{2}}}\leq 2$ and therefore
\[
Q(s,h|h')=0\quad\forall s\geq2,h>0,h'\in[h-\frac{1}{2},h],
\]
\item instead if $h-h'\geq\frac{1}{2}$ then, since $\frac{1}{1+h}\leq1,\quad\forall s\geq1$ one has
\[
Q(s,h|h')=\frac{6}{\pi^2}\frac{\frac{1}{s}-(1+h')}{h-h'}\leq\frac{6}{\pi^2}\frac{2}{s}.
\]
\end{itemize}
Since $Q$ is also bounded, this proves \eqref{decQ} for $|h'|\leq h$, and thanks to the symmetries of $Q$ this exhausts all the other cases and proves \eqref{decQ}.

Now we study the properties of $Q^{(n)}$, from Definition \ref{def:Qn}, which we collect in the following Lemma.
\begin{lemma}\label{proprQn} $Q^{(n)}$ has the following three properties:

\begin{align}
\label{Qnsimm}Q^{(n)}(s,h|h')&=Q^{(n)}(s,h'|h);
\\
\label{Qnsegno}Q^{(n)}(s,h|h')&=Q^{(n)}(s,-h|-h');
\\
\label{Qnint1}\int_0^{\infty}ds\int_{-1}^1dhQ^{(n)}(s,h|h')&=1\textrm{ and in particular }E^{(n)}\leq1.
\end{align}

\end{lemma}
\begin{proof}
To begin with, we prove \eqref{Qnsimm}. For this purpose, it is sufficient to observe that, if $s_0:=s$ and $s_n:=0$, since $Q$ is symmetric we can rewrite $Q^{(n)}$ as

\begin{align*}
Q^{(n)}(s,h_0|h_n)&=\int_{s_0>s_1>\dots>s_{n-1}>0}\prod_{i=1}^{n-1}ds_i\int_{[-1,1]^{n-1}}\prod_{i=1}^{n-1}dh_i\prod_{i=0}^{n-1}Q(s_i-s_{i+1},h_i|h_{i+1})
\\
&=\int_{s_0>s_1>\dots>s_{n-1}>0}\prod_{i=1}^{n-1}ds_i\int_{[-1,1]^{n-1}}\prod_{i=1}^{n-1}dh_i\prod_{i=0}^{n-1}Q(s_i-s_{i+1},h_{i+1}|h_i),
\end{align*}

and if we change variables $\tau_i:=s-s_{n-i},k_i=h_{n-i}$ we get

{\scriptsize\begin{align*}
Q^{(n)}(s,h_0|h_n)&=\int_{\tau_0>\tau_1>\dots>\tau_{n-1}>0}\prod_{i=1}^{n-1}d\tau_i\int_{[-1,1]^{n-1}}\prod_{i=1}^{n-1}dk_i\underbrace{\prod_{i=0}^{n-1}Q(\tau_{n-1-i}-\tau_{n-i},k_{n-i-1}|k_{n-i})}_{=\prod_{i=0}^{n-1}Q(\tau_i-\tau_{i+1},k_i|k_{i+1})}
\\
&=Q^{(n)}(s,h_n|h_0).
\end{align*}}

Then, \eqref{Qnsegno} can be proven inductively on $n$, indeed by changing variables $h''\mapsto-h''$ in Definition \ref{def:Qn}, we get

{\small\begin{align*}
Q^{(n)}(s,-h|-h')&=\int_0^sds'\int_{-1}^1dh''Q(s-s',-h|h'')Q^{(n-1)}(s',h''|-h')
\\
&=\int_0^sds'\int_{-1}^1dh''\underbrace{Q(s-s',-h|-h'')}_{=Q(s-s',h|h'')}\underbrace{Q^{(n-1)}(s',-h''|-h')}_{Q^{(n-1)}(s',h''|h') \textrm{ by inductive hypothesis}}
\\
&=\int_0^sds'\int_{-1}^1dh''Q(s-s',h|h'')Q^{(n-1)}(s',h''|h')
\\
&=Q^{(n)}(s,h|h')\textrm{ by Definition }\ref{def:Qn}.
\end{align*}}

Lastly, to prove \eqref{Qnint1}, we only use that $Q$ preserves $L^1$ norm, indeed

{\footnotesize\begin{align*}
\int_0^{\infty}ds\int_{-1}^1dhQ^{(n)}(s,h|h')&=\int_0^{\infty}ds\int_{-1}^1dh\int_0^sds'\int_{-1}^1dh''Q(s-s',h|h'')Q^{(n-1)}(s',h''|h')
\\
&=\int_0^{\infty}ds'\int_{-1}^1dh''Q^{(n-1)}(s',h''|h')\underbrace{\int_{s'}^{\infty}ds\int_{-1}^1dhQ(s-s',h|h'')}_{=1}
\\
&=\int_0^{\infty}ds'\int_{-1}^1dh''Q^{(n-1)}(s',h''|h')=1\textrm{ by inductive hypothesis}.
\end{align*}}

\end{proof}
\subsection{Properties of $E$ and $E^{(n)}$.}
To begin with, we recall some properties of $E$. We now focus on the $(s,h)$ where $E$ is supported on.
\newline
\subsubsection{Support of $E$.}
The structure of the support of $E$ is easily understood: for $h\geq0$ ($h<0$ is symmetric because $E(s,-h)=E(s,h)$), we have

\begin{align*}
E(s,h)=0\textrm{ if and only if } Q(s',h|h')=0\quad\forall s'\geq s,h'\in[-1,1],
\end{align*}

that is

{\scriptsize\begin{align*}
E(s,h)=0\textrm{ if and only if }\left\{\begin{array}{lcr}s'\geq\frac{1}{1+h'}\quad\forall h'\in[-h,h],s'\geq s,\\ s'\geq\frac{1}{1+h}\quad\forall s'\geq s,h'\in(h,1],\\s'\geq\frac{1}{1-h}\quad\forall h'\in[-1,-h),s'\geq s,\end{array}\right.\textrm{ that is if and only if }s\geq\frac{1}{1-h}.
\end{align*}}

Therefore 

\begin{align}\label{sptE}
\textrm{Support of $E$ }=\left\{(s,h):h\in[-1,1],0\leq s\leq\frac{1}{1-|h|}\right\},
\end{align}

and in particular $E(s,h)>0\quad\forall h\in[-1,1],0\leq s<1$.
\begin{figure}[h!]
\centering
\includegraphics[width=8cm]{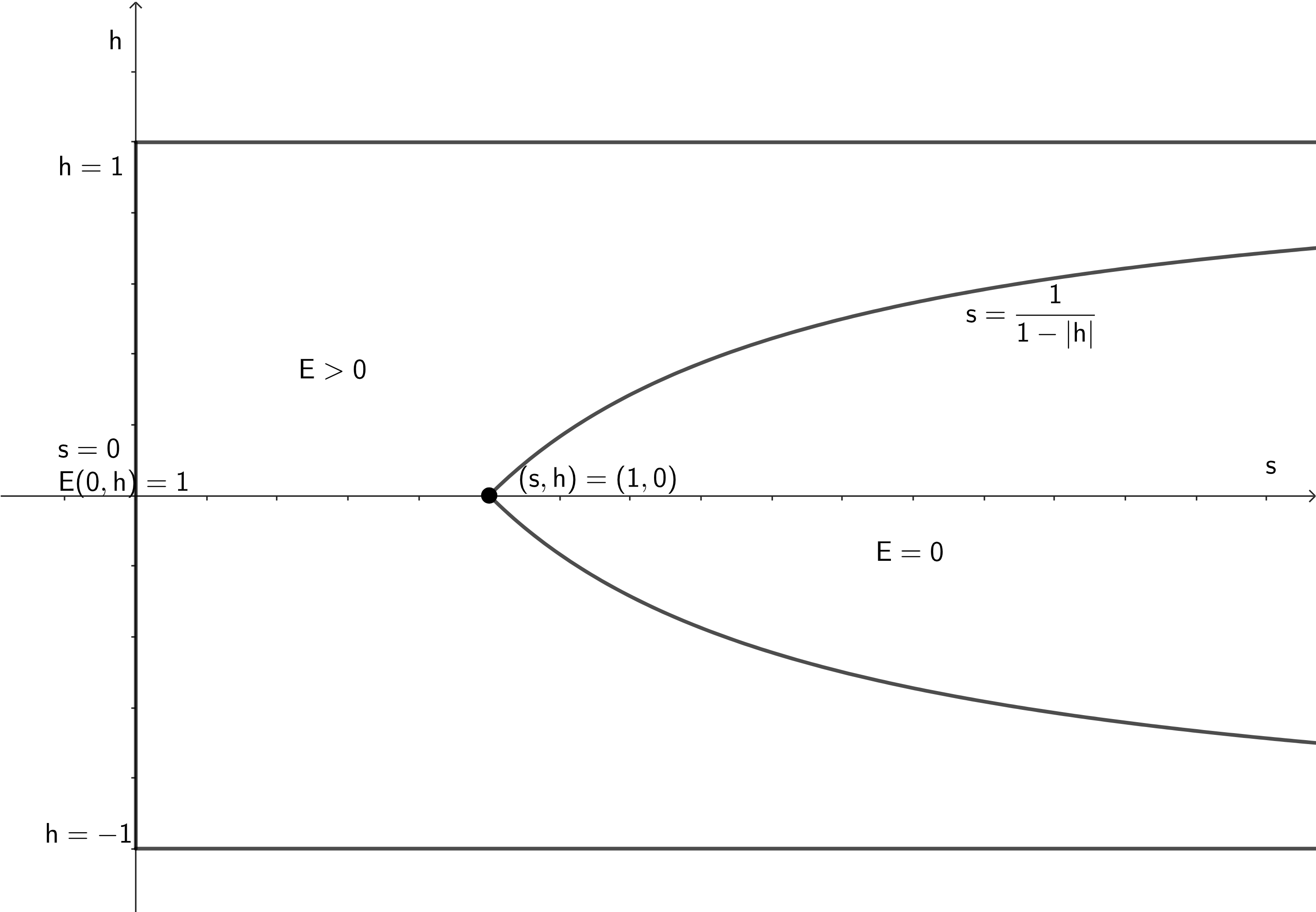}
\caption{The support of $E$ is defined by the curve
$
\{s\geq1,s=\frac{1}{1-|h|}\}\cup\{s=0,h\in[-1,1]\}\cup\{s\geq0,h=\pm1\}
$.}
\end{figure}
Moreover we have the following asymptotic estimate.
\begin{lemma}[\cite{MS2008d2}]\label{decE} There exists a constant $C>0$ such that

\begin{align*}
E(s,h)\leq\frac{C}{s+1}\mathbbm{1}_{s\leq\frac{1}{1-|h|}},\quad\forall s\in[0,+\infty),h\in[-1,1].
\end{align*}

\end{lemma}
The previous Lemma can be obtained by direct computations on $Q$. The main consequence of this Lemma is that, as a function of $s$, the support of $E$ is compact for any $h\in(-1,1)$ and that, for fixed $s$, $E$ is non zero only in an interval (in $h$) whose amplitude is $\frac{2}{s}$.

Moreover by \cite{CG2008,MS2008d2} we have

\begin{align}\label{decintE}
\int_{-1}^1dhE(s,h)\simeq\frac{1}{\pi^2s^2}\textrm{ and therefore }\int_s^{\infty}ds'\int_{-1}^1dhE(s',h)\simeq\frac{1}{\pi^2s},
\end{align}

but the rougher estimate $\int_s^{\infty}ds'\int_{-1}^1dhE(s',h)\leq\frac{C}{s+1}$ can also be proved by using Lemma \ref{decE} and is sufficient for our purposes.
Then, in the following Lemma we collect some properties of the function $E^{(n)}$, $n\geq1$.
\begin{lemma}\label{proprEn}
$\forall n\in\mathbb{N}$, $E^{(n)}$ of Definition \ref{def:En} writes also as:

\begin{align}
\label{espr:En1} E^{(n)}(s,h)&=E^{(n-1)}(s,h)+\int_0^sds'\int_{-1}^1dh'Q^{(n-1)}(s-s',h|h')E(s',h'),\quad n\geq2.
\\
\nonumber E^{(n)}(s,h)&=E^{(n-1)}(s,h)+\int_0^sds'\int_{-1}^1dh'Q(s-s',h|h')E^{(n-1)}(s',h')
\\
\label{espr:En2}&-\int_0^sds'\int_{-1}^1dh'Q(s-s',h|h')E^{(n-2)}(s',h'),\quad n\geq 3.
\end{align}

Moreover $E^{(n)}$ has the following properties: 

\begin{align}
\label{Ensimm}&E^{(n)}(s,h)=E^{(n)}(s,-h),
\\
\label{intEn}&\int_0^{\infty}ds\int_{-1}^1dhE^{(n)}(s,h)=n,
\\
\label{ubEn}&E^{(n)}(s,h)\leq \frac{c_n}{s+1},\quad c_n>0,
\\
\label{restoEn}&\int_s^{\infty}ds'\int_{-1}^1dhE^{(n)}(s',h)\leq\frac{c_n'}{s+1}.
\end{align}

\end{lemma}
\begin{proof}
To begin with, we prove property \eqref{espr:En1}. We first look at the identity

\begin{align*}
E(s,h)&=\int_s^{\infty}ds'\int_{-1}^1dh'Q(s',h|h')
\\
&=1-\int_0^sds'\int_{-1}^1dh'Q(s',h|h')
\\
&=1-\int_0^sds'\int_{-1}^1dh'Q(s-s',h|h').
\end{align*}

Taking the convolution with $Q^{(n-1)}$ of both sides, we get

\begin{align*}
&\int_0^sds'\int_{-1}^1dh'Q^{(n-1)}(s-s',h|h')E(s',h')
\\
&=\int_0^sds'\int_{-1}^1dh'Q^{(n-1)}(s-s',h|h')
\\
&-\int_0^sds'\int_{-1}^1dh'Q^{(n-1)}(s-s',h|h')\int_0^{s'}ds''\int_{-1}^1dh''Q(s'-s'',h'|h'')
\\
&=1-E^{(n-1)}(s,h)
\\
&-\int_0^sds''\int_{-1}^1dh''\int_0^{s-s''}ds'\int_{-1}^1dh'Q^{(n-1)}(s-s''-s',h|h')Q(s',h'|h'').
\end{align*}

The integral of the third summand in the right hand side of the above inequality can be also written as:

\begin{align*}
&\int_0^{s-s''}ds'\int_{-1}^1dh'Q^{(n-1)}(s-s''-s',h|h')Q(s',h'|h'')
\\
&=\int_0^{s-s''}ds'\int_{-1}^1dh'\underbrace{Q^{(n-1)}(s',h|h')}_{=Q^{(n-1)}(s',h'|h)\textrm{ by }\eqref{Qnsimm}}\underbrace{Q(s-s''-s',h'|h'')}_{=Q(s-s''-s',h''|h')\textrm{ by }\eqref{Qnsimm}}
\\
&=\int_0^{s-s''}ds'\int_{-1}^1dh'Q(s-s''-s',h''|h')Q^{(n-1)}(s',h'|h)
\\
&=Q^{(n)}(s-s'',h''|h)\textrm{ by Definition }\ref{def:Qn}
\\
&=Q^{(n)}(s-s'',h|h'')\textrm{ for the symmetry property }\eqref{Qnsimm},
\end{align*}

hence recalling the previous expression we have

\begin{align*}
&\int_0^sds'\int_{-1}^1dh'Q^{(n-1)}(s-s',h|h')E(s',h')
\\
&=1-E^{(n-1)}(s,h)-\underbrace{\int_0^sds''\int_{-1}^1dh''Q^{(n)}(s'',h|h'')}_{=1-E^{(n)}(s,h)\textrm{ thanks to }\eqref{Qnint1}\textrm{ and Definition }\ref{def:En}}
\\
&=E^{(n)}(s,h)-E^{(n-1)}(s,h),
\end{align*}

and this proves property \eqref{espr:En1}.

Property \eqref{espr:En2} follows from \eqref{espr:En1}: if $n\geq3$ we have:

\begin{align*}
&\int_0^sds'\int_{-1}^1dh'Q^{(n-1)}(s-s',h|h')E(s',h')
\\
&=\int_0^sds'\int_{-1}^1dh'E(s',h')\underbrace{\int_0^{s-s'}ds''\int_{-1}^1dh''Q(s-s'-s'',h|h'')Q^{(n-2)}(s'',h''|h')}_{\textrm{ by Definition \ref{def:Qn} of }Q^{(n-1)}}
\\
&=\int_0^sds'\int_{-1}^1dh'E(s',h')\int_{s'}^sds''\int_{-1}^1dh''Q(s-s'',h|h'')Q^{(n-2)}(s''-s',h''|h')
\\
&=\int_0^sds''\int_{-1}^1dh''Q(s-s'',h|h'')\underbrace{\int_0^{s''}ds'\int_{-1}^1dh'E(s',h')Q^{(n-2)}(s''-s',h''|h')}_{=E^{(n-1)}(s'',h'')-E^{(n-2)}(s'',h'')\textrm{ thanks to }\eqref{espr:En1}},
\end{align*}

that is, by using again property \eqref{espr:En1}, we have

{\small\begin{align*}
E^{(n)}(s,h)-E^{(n-1)}(s,h)=\int_0^sds'\int_{-1}^1dh'Q(s-s',h|h')\left[E^{(n-1)}(s',h')-E^{(n-2)}(s',h')\right],
\end{align*}}

that is \eqref{espr:En2}.

The other three properties follow from the first and the second ones, and also from the fact that they hold for $n=1$. We begin with the proof of \eqref{Ensimm}: by changing variables $h'\mapsto-h'$ in Definition \ref{def:En} of $E^{(n)}$, we have

{\scriptsize\begin{align*}
E^{(n)}(s,-h)&=\int_s^{\infty}ds'\int_{-1}^1dh'Q^{(n)}(s',-h|h')
\\
&=\int_s^{\infty}ds'\int_{-1}^1dh'\underbrace{Q^{(n)}(s',-h|-h')}_{=Q^{(n)}(s',h|h')\textrm{ thanks to }\eqref{Qnsegno}\textrm{ of Lemma \ref{proprQn}}}\textrm{ changing variables }h'\to-h'
\\
&=E^{(n)}(s,h).
\end{align*}}

To prove \eqref{intEn} we proceed by induction. It follows from \eqref{espr:En1}, indeed

{\small\begin{align*}
\int_0^{\infty}ds\int_{-1}^1dhE^{(n)}(s,h)&=\int_0^{\infty}ds\int_{-1}^1dhE^{(n-1)}(s,h)
\\
&+\int_0^{\infty}ds\int_{-1}^1dh\int_0^sds'\int_{-1}^1dh'Q^{(n-1)}(s-s',h|h')E(s',h')
\\
&=\underbrace{\int_0^{\infty}ds\int_{-1}^1dhE^{(n-1)}(s,h)}_{=n-1\textrm{ by inductive hypothesis}}
\\
&+\int_0^{\infty}ds'\int_{-1}^1dh'E(s',h')\underbrace{\int_{s'}^{\infty}ds\int_{-1}^1dhQ^{(n-1)}(s-s',h|h')}_{=1\textrm{ by }\eqref{Qnint1}}
\\
&=n-1+1=n.
\end{align*}}

To prove \eqref{ubEn} we proceed by induction too. By using again \eqref{espr:En2} we infer that

\begin{align*}
E^{(n)}(s,h)&\leq E^{(n-1)}(s,h)+\int_0^sds'\int_{-1}^1dh'Q(s-s',h|h')E^{(n-1)}(s',h')
\\
&\leq\underbrace{E^{(n-1)}(s,h)}_{\leq\frac{c_{n-1}}{s+1}}+\int_0^{s/2}ds'\int_{-1}^1dh'\underbrace{Q(s-s',h|h')}_{\leq\frac{C}{s-s'+1}\leq\frac{2C}{s+2}\textrm{ by \eqref{decQ}}}E^{(n-1)}(s',h')
\\
&+\int_{s/2}^sds'\int_{-1}^1dh'Q(s-s',h|h')\underbrace{E^{(n-1)}(s',h')}_{\leq\frac{c_{n-1}}{s'+1}\leq\frac{2c_{n-1}}{s+2}\textrm{ by inductive hypothesis }}
\\
&\leq\frac{c_{n-1}}{s+1}+\frac{2C}{s+1}\underbrace{\int_0^{s/2}ds'\int_{-1}^1dh'E^{(n-1)}(s',h')}_{\leq n-1\textrm{ by }\eqref{intEn}}
\\
&+\frac{2c_{n-1}}{s+1}\underbrace{\int_{s/2}^sds'\int_{-1}^1dh'Q(s-s',h|h')}_{\leq 1}
\\
&\leq\frac{3c_{n-1}+2C(n-1)}{s+1}=:\frac{c_n}{s+1}.
\end{align*}

Lastly, we are proving also \eqref{restoEn} by induction. Thanks to \eqref{espr:En1}, we have

{\small\begin{align*}
\int_s^{\infty}ds'\int_{-1}^1dhE^{(n)}(s',h)&=\underbrace{\int_s^{\infty}ds'\int_{-1}^1dhE^{(n-1)}(s',h)}_{\leq\frac{c_{n-1}'}{s}\textrm{ by inductive hypothesis}}
\\
&+\int_s^{\infty}ds'\int_{-1}^1dh\int_0^{s'}ds''\int_{-1}^1dh'Q^{(n-1)}(s'-s'',h|h')E(s'',h').
\end{align*}}

Now since we are applying the inductive hypothesis to the first of the previous two summands, we look at the second one, which we rewrite as

\begin{align*}
&\int_s^{\infty}ds'\int_{-1}^1dh\int_0^{s'}ds''\int_{-1}^1dh'Q^{(n-1)}(s'-s'',h|h')E(s'',h')
\\
&=\int_0^{\infty}ds''\int_{-1}^1dh'E(s'',h')\int_{\max\{s'',s\}}^{\infty}ds' \int_{-1}^1dhQ^{(n-1)}(s'-s'',h|h'),
\\
&=\int_0^sds''\int_{-1}^1dh'E(s'',h')\underbrace{\int_s^{\infty}ds' \int_{-1}^1dhQ^{(n-1)}(s'-s'',h|h')}_{=E^{(n-1)}(s-s'',h')}
\\
&+\int_s^{\infty}ds''\int_{-1}^1dh'E(s'',h')\underbrace{\int_{s''}^{\infty}ds' \int_{-1}^1dhQ^{(n-1)}(s'-s'',h|h')}_{=1\textrm{ by }\eqref{Qnint1}}
\\
&=\int_0^sds''\int_{-1}^1dh'E(s'',h')E^{(n-1)}(s-s'',h')+\int_s^{\infty}ds''\int_{-1}^1dh'E(s'',h')
\\
&=\int_0^{s/2}ds''\int_{-1}^1dh'\underbrace{E(s'',h')}_{\leq1}E^{(n-1)}(s-s'',h')
\\
&+\int_{s/2}^sds''\int_{-1}^1dh'E(s'',h')\underbrace{E^{(n-1)}(s-s'',h')}_{\leq1}+\int_s^{\infty}ds''\int_{-1}^1dh'E(s'',h')
\\
&\leq\int_{s/2}^{\infty}ds''\int_{-1}^1dh'E^{(n-1)}(s'',h')+\int_{s/2}^{\infty}ds''\int_{-1}^1dh'E(s'',h')
\\
&+\int_s^{\infty}ds''\int_{-1}^1dh'E(s'',h')
\\
&\leq\frac{2c_{n-1}'+2C}{s+2}+\frac{C}{s+1}\textrm{ by inductive hypothesis and with }C\textrm{ derived by }\eqref{decintE}.
\end{align*}

This concludes the proof of Lemma \ref{proprEn}.
\end{proof}
\subsection{Properties of $\Pi$, $f$ and $g^k$.}
In this Subsection we want to prove some properties of the functions written above.
\newline
\subsubsection{Properties of $\Pi$.}
$\Pi$ is not finite for any choice of $h$ and $h'$, indeed $\Pi(1|-1)=+\infty$, but we can prove that $\Pi$ diverges at most logarithmically at most.
\begin{lemma}\label{divPi}
For any $\varepsilon\in(0,1)$, the transition kernel $\Pi$ in Definition \ref{def:Pi} satisfies
\[
\Pi(h|h')\leq\frac{6}{\pi^2}\max\left\{\frac{1}{\varepsilon},\frac{\log 2-\log(1-|h|)}{2(1-\varepsilon)}\right\}.
\]
\end{lemma}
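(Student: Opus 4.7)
The plan is to reduce to the explicit expression
\[
\Pi(h|h')=\frac{6}{\pi^{2}}\,\frac{\log(1+h)-\log(1+h')}{h-h'}\qquad(|h'|\le h)
\]
and then apply the mean value theorem to $t\mapsto\log(1+t)$. Thanks to the symmetries $\Pi(h|h')=\Pi(h'|h)=\Pi(-h|-h')$, I can assume without loss of generality that $0\le |h'|\le h$, which fixes $h\ge 0$ and places us in the regime where the explicit formula holds. By the mean value theorem there exists $\xi\in[h',h]$ with
\[
\frac{\log(1+h)-\log(1+h')}{h-h'}=\frac{1}{1+\xi}.
\]

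I would then split into two cases according to whether $1+\xi\ge\varepsilon$ or not. In the first case we immediately obtain $\Pi(h|h')\le\tfrac{6}{\pi^{2}\varepsilon}$, which matches the first term in the maximum. In the second case, $1+\xi<\varepsilon$, and since $h'\le\xi$ and $h\ge -h'$ (because $h\ge|h'|$), this forces
\[
h'<\varepsilon-1\quad\text{and}\quad h>1-\varepsilon,\qquad\text{hence}\qquad h-h'>2(1-\varepsilon).
\]
In the same case I use the two elementary bounds $\log(1+h)\le\log 2$ (since $h\le 1$) and $\log(1+h')\ge\log(1-h)$ (since $h'\ge -h$) to estimate the numerator by $\log 2-\log(1-h)$, obtaining
\[
\Pi(h|h')\le\frac{6}{\pi^{2}}\,\frac{\log 2-\log(1-h)}{2(1-\varepsilon)},
\]
which matches the second term of the maximum. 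Taking the max over the two cases and passing from $h$ to $|h|$ via the sign symmetry concludes the proof.

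No step looks genuinely hard: the only thing that requires a bit of care is arranging the case split so that the lower bound $h-h'>2(1-\varepsilon)$ in Case~2 comes out with the precise constant $2(1-\varepsilon)$ appearing in the denominator of the statement; this is what forces the threshold in the case split to be $1+\xi\gtrless\varepsilon$ rather than any other quantity.
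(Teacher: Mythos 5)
Your strategy --- write $\Pi(h|h')=\frac{6}{\pi^2}\frac{1}{1+\xi}$ via the mean value theorem and split according to whether $1+\xi\ge\varepsilon$ or $1+\xi<\varepsilon$ --- is sound, and it is a more economical packaging of what the paper does. The paper runs an explicit four-region case analysis (same signs; both $|h|,|h'|\le1-\varepsilon$; mixed; both $\ge1-\varepsilon$ with opposite signs), using the mean value form $\frac{1}{1+\xi}$ in the first three regions and the numerator/denominator bounds $\log 2-\log(1-|h|)$ and $2(1-\varepsilon)$ only in the last. Your threshold on $1+\xi$ collapses the first three regions into one line and isolates the fourth automatically.

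There is, however, one step that is not justified as written: the reduction ``WLOG $0\le|h'|\le h$''. The sign symmetry $\Pi(h|h')=\Pi(-h|-h')$ preserves $|h|$ and is harmless, but the swap symmetry $\Pi(h|h')=\Pi(h'|h)$ does \emph{not} preserve the right-hand side of the inequality you are proving, which singles out $|h|$ (the first argument) and not $|h'|$. If $|h|<|h'|$, applying the swap and then your argument yields a bound featuring $\log 2-\log(1-|h'|)$, which is \emph{larger} than the claimed $\log 2-\log(1-|h|)$ and therefore does not imply it. The gap is easy to close: write the oriented pair as $(a,b)$ with $|b|\le a$, so that $\{|a|,|b|\}=\{|h|,|h'|\}$ and $|b|=\min\{|h|,|h'|\}$. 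In your Case 2 the condition $1+\xi<\varepsilon$ with $\xi\ge b$ forces $b<\varepsilon-1$, hence $|b|>1-\varepsilon$, $a\ge -b>1-\varepsilon$ and $a-b>2(1-\varepsilon)$, while the numerator is at most $\log 2-\log(1+b)=\log 2-\log(1-|b|)\le\log 2-\log(1-|h|)$, the last inequality because $\min\{|h|,|h'|\}\le|h|$. So the same computation gives the stated bound in every orientation; you just need to argue this way rather than invoking the swap as a WLOG. With that repair the proof is complete.
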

\begin{proof}
We prove this property case by case, and such cases are not necessarily disjoints.

We begin with the following case:

\begin{align}\label{caso1}
\textrm{If }h\textrm{ and }h'\textrm{ have the same sign, then } \Pi(h|h')\leq\frac{6}{\pi^2}.
\end{align}

To prove inequality \eqref{caso1} we can notice that, since the hypothesis is symmetric in $h$ and $h'$, we can restrict to the case $h\geq h'\geq0$, where we have

{\scriptsize\begin{align*}
\quad\Pi(h|h')=\frac{6}{\pi^2}\frac{\log(1+h)-\log(1+h')}{h-h'}=\frac{6}{\pi^2}\frac{1}{1+\xi},\quad\xi\in[h',h]\subseteq[0,1],\textrm{ and therefore }\Pi(h|h')\leq\frac{6}{\pi^2}.
\end{align*}}

Now let us fix $\varepsilon\in(0,1)$.

\begin{align}\label{caso2}
\textrm{If }|h|,|h'|\leq1-\varepsilon\textrm{ then }\Pi(h|h')\leq\frac{6}{\varepsilon\pi^2}.
\end{align}

This case is similar to the previous one, and since the hypothesis in \eqref{caso2} is symmetric with respect to exchanging $h$ and $h'$, we can restrict to the case $h\geq |h'|$, indeed by symmetry this also exhausts the other possibilities:

{\footnotesize\begin{align*}
\Pi(h|h')=\frac{6}{\pi^2}\frac{\log(1+h)-\log(1+h')}{h-h'}=\frac{6}{\pi^2}\frac{1}{1+\xi},\quad\xi\in[h',h]\subseteq[-h,h]\subseteq[-1+\varepsilon,1-\varepsilon],
\end{align*}}

therefore 
\[
\Pi(h|h')\leq\frac{1}{\varepsilon}\frac{6}{\pi^2}.
\]

Therefore in the previous cases $\Pi$ is bounded by a constant that depends on $\varepsilon$.
The same holds in the following case:

\begin{align}\label{caso3}
\textrm{If }|h|\leq1-\varepsilon\leq|h'|\textrm{ or }|h'|\leq1-\varepsilon\leq|h|,\textrm{ and }hh'\leq0,\textrm{ then }\Pi(h|h')\leq\frac{6}{\varepsilon\pi^2}.
\end{align}

To prove property \eqref{caso3} we look at the case $|h'|\leq1-\varepsilon\leq|h|$:

{\footnotesize\begin{align*}
\textrm{if }h>0\geq h',&\textrm{ then } h\geq1-\varepsilon\geq|h'|=-h',\textrm{ then }\Pi(h|h')=\frac{6}{\pi^2}\frac{\log(1+h)-\log(1+h')}{h-h'}
\\
&=\frac{6}{\pi^2(1+\xi)},\textrm{ with }\xi\in[h',h]\subseteq[-1+\varepsilon,1],\textrm{ therefore }\Pi(h|h')\leq\frac{1}{\varepsilon}\frac{6}{\pi^2},
\\
\textrm{if }h<0\leq h',&\textrm{ then }-h=|h|\geq1-\varepsilon\geq|h'|=|-h'|=h'\geq-h',\textrm{ then }\Pi(h|h')=\Pi(-h|-h')
\\
&=\frac{6}{\pi^2}\frac{\log(1-h)-\log(1-h')}{(-h)-(-h')}=\frac{1}{1+\xi},\textrm{ with }\xi\in[-h',-h]\subseteq[-1+\varepsilon,1],
\\
&\textrm{ therefore }\Pi(h|h')\leq\frac{6}{\varepsilon\pi^2}.
\end{align*}}

By symmetry this exhausts also the case $|h|\leq1-\varepsilon\leq|h'|$ and therefore \eqref{caso3} is proved.

It remains to be proven that

\begin{align}\label{caso4}
\textrm{If }|h|,|h'|\geq1-\varepsilon\textrm{ and }hh'<0,\textrm{ then }\Pi(h|h')\leq\frac{6(\log 2-\log(1-|h|))}{2\pi^2(1-\varepsilon)}.
\end{align}

We prove \eqref{caso4} case by case, as follows:

{\scriptsize\begin{align*}
\textrm{if }h\geq-h'\geq1-\varepsilon\quad\textrm{ then }\quad\Pi(h|h')&=\frac{6}{\pi^2}\frac{\log(1+h)-\log(1+h')}{\underbrace{h-h'}_{\geq2(1-\varepsilon)}}\leq\frac{6(\log 2-\log(1+h'))}{2\pi^2(1-\varepsilon)}
\\
&\leq\frac{6(\log 2-\log(1-|h|))}{2\pi^2(1-\varepsilon)},
\\
\textrm{if }-h'\geq h\geq1-\varepsilon\quad\textrm{ then }\quad\Pi(h|h')&=\Pi(-h|-h')=\Pi(-h'|-h)=\frac{6}{\pi^2}\frac{\log(1-h')-\log(1-h)}{\underbrace{(-h')-(-h)}_{\geq2(1-\varepsilon)}}
\\
&\leq\frac{6(\log 2-\log(1-h))}{2\pi^2(1-\varepsilon)}=\frac{6(\log 2-\log(1-|h|))}{2\pi^2(1-\varepsilon)},
\\
\textrm{if }h'\geq-h\geq1-\varepsilon\quad\textrm{ then }\quad\Pi(h|h')&=\Pi(h'|h)=\frac{6}{\pi^2}\frac{\log(1+h')-\log(1+h)}{\underbrace{h'-h}_{\geq2(1-\varepsilon)}}
\\
&\leq\frac{6(\log 2-\log(1+h))}{2\pi^2(1-\varepsilon)}=\frac{6(\log 2-\log(1-|h|))}{2\pi^2(1-\varepsilon)},
\\
\textrm{if }-h\geq h'\geq1-\varepsilon\quad\textrm{ then }\quad\Pi(h|h')&=\Pi(-h|-h')=\frac{6}{\pi^2}\frac{\log(1-h)-\log(1-h')}{\underbrace{(-h)-(-h')}_{\geq2(1-\varepsilon)}}
\\
&\leq\frac{6(\log 2-\log(1-h'))}{2\pi^2(1-\varepsilon)}\leq\frac{6(\log 2-\log(1+h))}{2\pi^2(1-\varepsilon)}\\
&=\frac{6(\log 2-\log(1-|h|))}{2\pi^2(1-\varepsilon)}.
\end{align*}}

By collecting the estimates \eqref{caso1}, \eqref{caso2}, \eqref{caso3} and \eqref{caso4}, since $\varepsilon\in(0,1)$ implies $\frac{1}{\varepsilon}>1$, we get the desired conclusion.
\end{proof}
\subsubsection{Properties of $f$.}
We study here the properties of the function $f$ in Definition \ref{defeffe}. 
\begin{lemma}\label{propreffe}
The function $f:\mathbb{T}^1_{2\pi}\times[0,+\infty)\times[-1,1]\times[-1,1]\to[0,+\infty)$ in Definition \ref{defeffe} has the following properties:

\begin{align}
\label{fint1}&\int_{\mathbb{T}^1_{2\pi}}d\theta\int_0^{\infty}dt\int_{-1}^1dhf(\theta,t,h|h')=1\quad\forall h'\in[-1,1],
\\
\label{fint1dinuovo}&\int_{\mathbb{T}^1_{2\pi}}d\theta\int_0^{\infty}dt\int_{-1}^1dh'f(\theta,t,h|h')=1\quad\forall h\in[-1,1],
\\
\label{deceffe}&f(\theta,t,h|h')\leq\frac{C}{t+1}\qquad\forall(\theta,t,h|h')\in\mathbb{T}^1_{2\pi}\times[0,+\infty)\times[-1,1]\times[-1,1],
\\
\label{decQf}&\int_0^tdt'\int_{-1}^1dh''Q(t-t',h|h'')f(\theta+\pi-2\arcsin(h''),t',h''|h')\leq \frac{C}{t+1}.
\end{align}

\end{lemma}
\begin{proof} We begin with the proof of \eqref{fint1}. We have

{\footnotesize\begin{align*}
\int_{\mathbb{T}^1_{2\pi}}d\theta\int_0^{\infty}dt\int_{-1}^1dhf(\theta,t,h|h')&=\int_0^{\infty}dt\int_{-1}^1dh\int_{\mathbb{T}^1_{2\pi}}d\theta f(\theta,t,h|h')\textrm{ changing integration order}
\\
&=\int_0^{\infty}dt\int_{-1}^1dhQ^{(2)}(t,h|h')\textrm{ by property \eqref{intfQ2}}
\\
&=1\textrm{ by property \eqref{Qnint1} of Lemma \ref{proprQn}.}
\end{align*}}

The proof of \eqref{fint1dinuovo} works in the same way:

{\footnotesize\begin{align*}
\int_{\mathbb{T}^1_{2\pi}}d\theta\int_0^{\infty}dt\int_{-1}^1dh'f(\theta,s,h|h')&=\int_0^{\infty}dt\int_{-1}^1dh'\int_{\mathbb{T}^1_{2\pi}}d\theta f(\theta,t,h|h')\textrm{ changing integration order}
\\
&=\int_0^{\infty}dt\int_{-1}^1dh'Q^{(2)}(t,h|h')\textrm{ by property \eqref{intfQ2}}
\\
&=\int_0^{\infty}dt\int_{-1}^1dh'Q^{(2)}(t,h'|h)=1\textrm{ by property \eqref{Qnsimm} of Lemma \ref{proprQn}}
\\
&=1\textrm{ by property \eqref{Qnint1} of Lemma \ref{proprQn}.}
\end{align*}}

As for \eqref{deceffe}, first recall $f$ is obtained as

{\footnotesize\begin{align*}
f(\theta,t,h|h')=\sum_{\ell\in\mathbb{Z}}\frac{\partial h''(\theta+2\ell\pi,h')}{\partial\theta}\int_0^tdt' Q(t-t',h|h''(\theta+2\ell\pi,h'))Q(t',h''(\theta+2\ell\pi,h')|h'),
\end{align*}}

with

\begin{align*}
h''(\theta,h')=\sin\left(\frac{\theta+2\pi-2\arcsin(h')}{2}\right)\mathbbm{1}_{[2\arcsin(h')-3\pi,2\arcsin(h')-\pi]}(\theta).
\end{align*}

Since $\cos^2+\sin^2=1$ and $f$ is obtained by extending periodically its definition for
\[
\theta\in[2\arcsin(h')-3\pi,2\arcsin(h')-\pi],
\]
where $\cos\left(\frac{\theta+2\pi-2\arcsin(h')}{2}\right)\geq0$, we have

{\footnotesize\begin{align*}
f(\theta,t,h|h')=\sum_{\ell\in\mathbb{Z}}\frac{\sqrt{1-h''(\theta+2\ell\pi,h')^2}}{2}\int_0^tdt' Q(t-t',h|h''(\theta+2\ell\pi,h'))Q(t',h''(\theta+2\ell\pi,h')|h').
\end{align*}}

It is then sufficient to prove that there exists a constant $C>0$ such that

\begin{align*}
\frac{\sqrt{1-h''^2}}{2}\int_0^tdt' Q(t-t',h|h'')Q(t',h''|h')\leq\frac{C}{t+1}\quad\forall h,h',h''\in[-1,1].
\end{align*}

This holds because

\begin{align*}
&\frac{\sqrt{1-h''^2}}{2}\int_0^tdt' Q(t-t',h|h'')Q(t',h''|h')
\\
&=\frac{\sqrt{1-h''^2}}{2}\int_0^{\frac{t}{2}}dt'  \underbrace{Q(t-t',h'|h'')}_{\leq\frac{C}{t-t'+1}\leq\frac{2C}{t+2}\textrm{ by }\eqref{decQ}}Q(t',h''|h)
\\
&+\frac{\sqrt{1-h''^2}}{2}\int_{\frac{t}{2}}^tdt' Q(t-t',h|h'')\underbrace{Q(t',h''|h)}_{\leq\frac{C}{t'+1}\leq\frac{2C}{t+2}\textrm{ by }\eqref{decQ}}
\\
&\leq\frac{C\sqrt{1-h''^2}}{t+2}\left[\int_0^{\frac{t}{2}}dt' Q(t',h''|h)+\int_{\frac{t}{2}}^tdt' Q(t-t',h'|h'')\right]
\\
&\leq\frac{C\sqrt{1-h''^2}}{t+2}\left[\Pi(h''|h)+\Pi(h'|h'')\right]
\\
&\leq\frac{C\sqrt{1-h''^2}}{t+2}\frac{6}{\pi^2}\max\left\{\frac{1}{\varepsilon},\frac{\log 2-\log(1-|h''|)}{2(1-\varepsilon)}\right\},
\end{align*}

where in the last inequality we used Lemma \ref{divPi}.

Therefore

{\tiny\begin{align*}
\frac{\sqrt{1-h''^2}}{2}\int_0^tdt' Q(t-t,h|h'')Q(t',h''|h')\leq\frac{C\sqrt{1+|h''|}\sqrt{1-|h''|}}{t+2}\frac{6}{\pi^2}\max\left\{\frac{1}{\varepsilon},\frac{\log 2-\log(1-|h''|)}{2(1-\varepsilon)}\right\},
\end{align*}}

hence the desired conclusion, since the function $\sqrt{x}\log x$ is bounded around $0$.

Lastly, \eqref{decQf} follows from property \eqref{deceffe}, indeed

\begin{align*}
&\int_0^tdt'\int_{-1}^1dh''Q(t-t',h|h'')f(\theta+\pi-2\arcsin(h''),t',h''|h')
\\
&=\int_0^{\frac{t}{2}}dt'\int_{-1}^1dh''Q(t-t',h|h'')\underbrace{f(\theta+\pi-2\arcsin(h''),t',h''|h')}_{\leq C\textrm{ by }\eqref{deceffe}}
\\
&+\int_{\frac{t}{2}}^tdt'\int_{-1}^1dh''Q(t-t',h|h'')\underbrace{f(\theta+\pi-2\arcsin(h''),t',h''|h')}_{\leq \frac{C}{t'+1}\leq\frac{2C}{t+2}\textrm{ by }\eqref{deceffe}}
\\
&\leq C\underbrace{\int_{\frac{t}{2}}^{\infty}dt'\int_{-1}^1dh''Q(t',h|h'')}_{=E(\frac{t}{2},h)\leq\frac{C}{\frac{t}{2}+1}\textrm{ by Lemma \ref{decE}}}+\frac{2C}{s+2}\underbrace{\int_0^{\frac{t}{2}}dt'\int_{-1}^1dh''Q(t',h|h'')}_{\leq1}
\\
&\leq\frac{C'}{s+2}.
\end{align*}

\end{proof}
\subsubsection{Properties of $g^k$.}
We state here some properties of the function $g^k$ introduced in Definition \ref{def:gk}.
\begin{lemma}\label{lemma:proprgk}
For $k\in\mathbb{R}^2,k\neq(0,0)$, the function $g^k:\mathbb{T}^1_{2\pi}\times[0,+\infty)\times[-1,1]\times\mathbb{T}^1_{2\pi}\times[-1,1]\to\mathbb{C}$ by Definition \ref{def:gk} has the following properties: for any $(\theta,t,h|\theta',h')$

{\footnotesize\begin{align}
\label{decgk}&|g^k(\theta,t,h|\theta',h')|\leq\frac{C}{t+1},
\\
\label{decQgk}&\left|\int_0^tdt'\int_{-1}^1dh''Q(t-t',h|h'')e^{2\pi i(t-t')k\cdot v(\theta)}g^k(\theta+\pi-2\arcsin(h''),t',h''|\theta',h')\right|\leq\frac{C}{t+1},
\\
\label{gkint<1}&\|g^k(\theta,\cdot,h|\cdot,\cdot)\|_{L^1}\leq1-C'\min\{1,|k|^2\}\qquad\forall(\theta,h)\in\mathbb{T}^1_{2\pi}\times[-1,1],
\end{align}}

where $C>0$ and $C'\in(0,1)$ do not depend on $k\in\mathbb{R}^2$, $\theta\in\mathbb{T}^1_{2\pi}$ or $h\in[-1,1]$. 
\end{lemma}
\begin{proof} The properties \eqref{decgk} and \eqref{decQgk} follow immediately from Definitions \ref{def:gk} and \ref{defeffe} of $g^k$ and $f$, indeed by definition we have
\[
|g^k(\theta,t,h|\theta',h')|\leq f(\theta-\theta',s,h|h'),
\]
and since the estimates \eqref{deceffe} and \eqref{decQf} apply to $f$ the proof of the first two statements is concluded.

To prove the other two properties, we write $\|g^k(\theta,\cdot,h|\cdot,\cdot)\|_{L^1}$ as

{\footnotesize\begin{align*}
&\|g^k(\theta,\cdot,h|\cdot,\cdot)\|_{L^1}
\\
&=\int_0^{\infty}dt\int_{-1}^1dh'\int_{-1}^1dh''\left|\int_0^tdt' Q(t-t',h|h'')Q(t',h''|h')e^{2\pi it' k\cdot(v(\theta+\pi-2\arcsin(h''))-v(\theta))}\right|.
\end{align*}}

Now we prove \eqref{gkint<1}. For this purpose we use the properties of $Q^{(2)}$ introduced by Definition \ref{def:Qn}.

Thanks to \eqref{Qnint1} of Lemma \ref{proprQn}, $Q^{(2)}$ has integral 1. Hence we have

{\footnotesize\begin{align*}
1-\|g^k(\theta,\cdot,h|\cdot,\cdot)\|_{L^1}&=\int_0^{\infty}dt\int_{-1}^1dh'\int_{-1}^1dh''\left[\int_0^tdt' Q(t-t',h|h'')Q(t',h''|h')\right.
\\
&-\left.\left|\int_0^tdt' Q(t-t',h|h'')Q(t',h''|h')e^{2\pi it' k\cdot(v(\theta+\pi-2\arcsin(h''))-v(\theta))}\right|\right]
\\
&\geq\int_0^{\frac{1}{2}}dt\underbrace{\int_{-1}^1dh'}_{=2}\int_{-1}^1dh''\left[\int_0^tdt' \underbrace{Q(t-t',h|h'')}_{=\frac{6}{\pi^2}}\underbrace{Q(t',h''|h')}_{=\frac{6}{\pi^2}}\right.
\\
&-\left.\left|\int_0^tdt' \underbrace{Q(t-t',h|h'')}_{=\frac{6}{\pi^2}}\underbrace{Q(t',h''|h')}_{=\frac{6}{\pi^2}}e^{2\pi it' k\cdot(v(\theta+\pi-2\arcsin(h''))-v(\theta))}\right|\right]
\\
&=\frac{72}{\pi^4}\int_0^{\frac{1}{2}}dt\int_{-1}^1dh''\left[t-\left|\int_0^tdt'e^{it'2\pi k\cdot(v(\theta+\pi-2\arcsin(h''))-v(\theta))}\right|\right],
\end{align*}}

and therefore, since
\[
\left|\int_0^tdt'e^{it'\omega}\right|=\left|\frac{e^{it\omega}-1}{\omega}\right|=t\frac{\sqrt{2(1-\cos(|\omega|t))}}{|\omega|t},
\]
if $x(\theta,t,h''):=2\pi|k\cdot(v(\theta+\pi-2\arcsin(h''))-v(\theta))|t$, we got

\begin{align}\label{primastima}
1-\|g^k(\theta,\cdot,h|\cdot,\cdot)\|_{L^1}\geq\frac{72}{\pi^4}\int_0^{\frac{1}{2}}dtt\int_{-1}^1dh''\left[1-\frac{\sqrt{2(1-\cos x(\theta,t,h''))}}{x(\theta,t,h'')}\right].
\end{align}

First we want to estimate from below the integrand. To this purpose, notice that since by direct computations one can find a constant $c''>0$ such that
\[
1-\frac{\sqrt{2(1-\cos x)}}{x}\geq c''\min\{x,c''\}^2,
\]
by \eqref{primastima} one gets

\begin{align}\label{secondastima}
1-\|g^k(\theta,\cdot,h|\cdot,\cdot)\|_{L^1}\geq\frac{72c''}{\pi^4}\int_0^{\frac{1}{2}}dtt\int_{-1}^1dh''\min\left\{x(\theta,t,h''),c''\right\}^2,
\end{align}

and therefore we finally have to bound from below $x(\theta,t,h'')$, at least in a suitable region that we will denote by $A^k(\theta)$. To this purpose, we first observe that

\begin{align*}
x(\theta,t,h'')&=2\pi|k\cdot(v(\theta+\pi-2\arcsin(h''))-v(\theta))|t
\\
&=4\pi t\sqrt{1-h''^2}|k|\left|\sqrt{1-h''^2}\hat k\cdot v(\theta)-h''\hat k\cdot v^{\perp}(\theta)\right|,
\end{align*}

with $\hat k:=\frac{k}{|k|}$ and $v^{\perp}(\theta)=(-\sin\theta,\cos\theta)$. 

Now
{\footnotesize\[
\textrm{ if }k\cdot v^{\perp}(\theta)=0,\textrm{ then }\hat k\cdot v(\theta)=\pm1\textrm{ and therefore }\left|\sqrt{1-h''^2}\hat k\cdot v(\theta)-h''\hat k\cdot v^{\perp}(\theta)\right|=\sqrt{1-h''^2},
\]}
thus we fix $\delta\in(0,\frac{1}{3})$ and in this case we define

\begin{align}\label{misuraAkkvperp=0}
A^k(\theta):=[-1+\delta,1-\delta],\textrm{ with measure }|A^k(\theta)|=2-2\delta,
\end{align}

 such that for any $h''\in A^k(\theta)$
 
{\small\begin{align}\label{kvperp=0}
x(\theta,t,h'')=4\pi t\sqrt{1-h''^2}|k|\left|\sqrt{1-h''^2}\hat k\cdot v(\theta)-h''\hat k\cdot v^{\perp}(\theta)\right|=4\pi t(1-h''^2)|k|\geq4\pi t\delta|k|.
\end{align}}

Therefore hereafter we can assume $\hat k\cdot v^{\perp}(\theta)\neq0$. In this case
\[
\left|\sqrt{1-h''^2}\hat k\cdot v(\theta)-h''\hat k\cdot v^{\perp}(\theta)\right|=0\Leftrightarrow h''=\hat k\cdot v(\theta)\textrm{sign}(\hat k\cdot v^{\perp}(\theta))=:h''(\theta),
\]
and we can rearrange the previous term as follows:

{\scriptsize\begin{align}
\nonumber\left|\sqrt{1-h''^2}\hat k\cdot v(\theta)-h''\hat k\cdot v^{\perp}(\theta)\right|&=\left|(\sqrt{1-h''^2}-\sqrt{1-h''^2(\theta)})\hat k\cdot v(\theta)-(h''-h''(\theta))\hat k\cdot v^{\perp}(\theta)\right|
\\
\nonumber&=|h''-h''(\theta)|\left|\frac{h''(\theta)+h''}{\sqrt{1-h''^2}+\sqrt{1-h''^2(\theta)}}\hat k\cdot v(\theta)+\hat k\cdot v^{\perp}(\theta)\right|
\\
\nonumber&=|h''-h''(\theta)|\left|\frac{(\hat k\cdot v(\theta))^2\textrm{sign}(\hat k\cdot v^{\perp}(\theta))+h''\hat k\cdot v(\theta)}{\sqrt{1-h''^2}+|\hat k\cdot v^{\perp}(\theta)|}+\hat k\cdot v^{\perp}(\theta)\right|
\\
\nonumber&=|h''-h''(\theta)|\left|\frac{\textrm{sign}(\hat k\cdot v^{\perp}(\theta))+h''\hat k\cdot v(\theta)+\sqrt{1-h''^2}\hat k\cdot v^{\perp}(\theta)}{\sqrt{1-h''^2}+|\hat k\cdot v^{\perp}(\theta)|}\right|
\\
\label{terzastima}&=|h''-h''(\theta)|\frac{1+h''h''(\theta)+\sqrt{1-h''^2}\sqrt{1-h''^2(\theta)}}{\sqrt{1-h''^2}+|\hat k\cdot v^{\perp}(\theta)|}.
\end{align}}

Now, for $\delta<\frac{1}{3}$ fixed as before, we define the region

\begin{align}\label{misuraAtheta}
A^k(\theta):=\left\{h''\in[-1,1]:h''h''(\theta)\geq0,|h''-h''(\theta)|\geq\delta,1-|h''|\geq\delta\right\},
\end{align}

with measure 
\[
|A^k(\theta)|\geq1-3\delta.
\]
Using the previous computation \eqref{terzastima}, for any $h''\in A^k(\theta)$ we have

{\footnotesize\begin{align}\label{stimabassox}
x(\theta,t,h'')=4\pi t\underbrace{\sqrt{1-h''^2}}_{\geq\sqrt\delta}|k|\underbrace{|h''-h''(\theta)|}_{\geq\delta}\frac{1+\overbrace{h''h''(\theta)}^{\geq0}+\overbrace{\sqrt{1-h''^2}\sqrt{1-h''^2(\theta)}}^{\geq0}}{\underbrace{\sqrt{1-h''^2}}_{\leq1}+\underbrace{|\hat k\cdot v^{\perp}(\theta)|}_{\leq1}}\geq2\pi t|k|\delta\sqrt\delta,
\end{align}}

and combining this estimate with \eqref{secondastima} we have

{\small\begin{align*}
1-\|g^k(\theta,\cdot,h|\cdot,\cdot)\|_{L^1}&\geq\frac{72c''}{\pi^4}\int_0^{\frac{1}{2}}dtt\int_{-1}^1dh''\min\left\{x(\theta,t,h''),c''\right\}^2
\\
&\geq\frac{72c''}{\pi^4}\int_0^{\frac{1}{2}}dtt\int_{A^k(\theta)}dh''\min\left\{x(\theta,t,h''),c''\right\}^2
\\
&\geq\frac{72c''}{\pi^4}\int_0^{\frac{1}{2}}dtt\int_{A^k(\theta)}dh''\min\left\{2\pi|k|\delta\sqrt\delta t,c''\right\}^2\textrm{ by \eqref{kvperp=0} and \eqref{stimabassox}}
\\
&=\frac{72c''}{\pi^4}\int_0^{\frac{1}{2}}dtt\min\left\{2\pi|k|\delta\sqrt\delta t,c''\right\}^2|A^k(\theta)|
\\
&\geq\frac{72c''(1-3\delta)}{\pi^4}\int_0^{\frac{1}{2}}dtt\min\left\{2\pi|k|\delta\sqrt\delta t,c''\right\}^2\textrm{ by \eqref{misuraAkkvperp=0} and \eqref{misuraAtheta}}
\\
&=\frac{72c''(1-3\delta)}{\pi^4}\left\{\begin{array}{lcr}\frac{(\pi\delta\sqrt\delta)^2|k|^2}{16},&&|k|\leq\frac{c''}{\pi\delta\sqrt\delta},\\\frac{c''^4}{(4\pi\delta\sqrt\delta|k|)^2}+\frac{c''^2}{8}(1-\frac{c''^2}{(\pi\delta\sqrt\delta|k|)^2}),&&|k|\geq\frac{c''}{\pi\delta\sqrt\delta},\end{array}\right.
\\
&\geq C\min\{1,|k|^2\},
\end{align*}}

up to another constant $C\in(0,1)$ not depending on $\theta$ or $h$. Therefore property \eqref{gkint<1} is proved.
\end{proof}

\newpage
\section{Existence and uniqueness in $L^p$.}\label{app:es/un}
In this Section we focus on the existence and the uniqueness of the mild solution of the three equations \eqref{eq:ev}, \eqref{eq:ev_x} and \eqref{eq:ev_x,v}, meaning \eqref{rapp:ev}, \eqref{rapp:ev_x} and \eqref{rapp:ev_x,v} respectively, as in Definition \ref{def:soluzione}. The main result we need is the following.
\begin{proposition}\label{prop:es/un} Let $\mathbb{X}=\mathbb{R}^2$ or $\mathbb{X}=\mathbb{T}^2=\mathbb{R}^2/\mathbb{Z}^2$, $p\in[1,+\infty]$ and $\mu_0\in \cap_{T>0}L^p(\mathbb{X}\times\mathbb{T}^1_{2\pi}\times[0,T]\times[-1,1])$. Then the mild solution of the equation \eqref{eq:ev} is unique in the class $\cap_{T>0}L^p(\mathbb{X}\times\mathbb{T}^1_{2\pi}\times[0,T]\times[-1,1])$. Moreover for any $T>0$ and at any time $t>0$ it holds

{\small\begin{align}\label{mutL1cpt}
\int_{\mathbb{X}}dx\int_{\mathbb{T}^1_{2\pi}}d\theta\int_0^Tds\int_{-1}^1dh|\mu_t(x,\theta,s,h)|\leq\int_{\mathbb{X}}dx\int_{\mathbb{T}^1_{2\pi}}d\theta\int_0^{T+t}ds\int_{-1}^1dh|\mu_0(x,\theta,s,h)|,
\end{align}}

and if $\mu_0\in L^1(\mathbb{X}\times\mathbb{T}^1_{2\pi}\times[0,+\infty)\times[-1,1])$ then also $\mu_t\in L^1(\mathbb{X}\times\mathbb{T}^1_{2\pi}\times[0,+\infty)\times[-1,1])$ and the following properties hold

{\small\begin{align}
\label{distanzadecr}\int_{\mathbb{X}}dx\int_{\mathbb{T}^1_{2\pi}}d\theta\int_0^{\infty}ds\int_{-1}^1dh|\mu_t(x,\theta,s,h)|&\leq\int_{\mathbb{X}}dx\int_{\mathbb{T}^1_{2\pi}}d\theta\int_0^{\infty}ds\int_{-1}^1dh|\mu_0(x,\theta,s,h)|,
\\
\label{consmassa}\int_{\mathbb{X}}dx\int_{\mathbb{T}^1_{2\pi}}d\theta\int_0^{\infty}ds\int_{-1}^1dh\mu_t(x,\theta,s,h)&=\int_{\mathbb{X}}dx\int_{\mathbb{T}^1_{2\pi}}d\theta\int_0^{\infty}ds\int_{-1}^1dh\mu_0(x,\theta,s,h).
\end{align}}

Finally it holds

\begin{align}
\nonumber\|\mu_t\|_{L^p(\mathbb{X}\times\mathbb{T}^1_{2\pi}\times[0,T]\times[-1,1])}&\leq C\|\mu_0\|_{L^p(\mathbb{R}^2\times\mathbb{T}^1_{2\pi}\times[0,t+T]\times[-1,1])}
\\
\label{normaLp}&+C\int_{\mathbb{T}^1_{2\pi}}d\theta'\int_0^tdt'\int_{-1}^1dh'\|\mu_0(\cdot,\theta',t',h')\|_{L^p(\mathbb{X})},
\end{align}

and therefore, if $\mu_0\in L^p(\mathbb{X}\times\mathbb{T}^1_{2\pi}\times[0,+\infty)\times[-1,1])$, also $\mu_t\in L^p(\mathbb{X}\times\mathbb{T}^1_{2\pi}\times[0,+\infty)\times[-1,1])$ at any time $t>0$ and

\begin{align}
\nonumber\|\mu_t\|_{L^p(\mathbb{X}\times\mathbb{T}^1_{2\pi}\times[0,+\infty)\times[-1,1])}&\leq C\|\mu_0\|_{L^p(\mathbb{R}^2\times\mathbb{T}^1_{2\pi}\times[0,+\infty)\times[-1,1])}
\\
\label{normaLpfinita}&+C\int_{\mathbb{T}^1_{2\pi}}d\theta'\int_0^tdt'\int_{-1}^1dh'\|\mu_0(\cdot,\theta',t',h')\|_{L^p(\mathbb{X})}.
\end{align}

Moreover we have:
\begin{itemize}
\item if $\mu_0\geq0$ then $\mu_t\geq0$ at any time $t>0$,
\item if $\mu_0$ does not depend on $x$ or $(x,\theta)$ neither $\mu_t$ does at any time $t>0$, and the same properties hold without integrating with respect to $x$ or $(x,\theta)$.
\end{itemize}
\end{proposition}
Before proving Proposition \ref{prop:es/un} we recall that the existence and the uniqueness of the mild solutions have already been proved in \cite{CG2010,MS2008bgl}, as well as the fact that the $L^1$ distance between two solutions is not increasing in time. Nevertheless, for self consistency of the paper we prove it, both because our proof is slightly different and because the main Lemma we need to prove it will be also useful for other results in this paper.

Notice also that the quantity in \eqref{normaLpfinita} ensures that $\mu_t\in L^p$, since for any $p\in[1,+\infty]$ using H\"older inequality we get

\begin{align*}
\int_{\mathbb{T}^1_{2\pi}}d\theta'\int_0^tdt'\int_{-1}^1dh'\|\mu_0(\cdot,\theta',t',h')\|_{L^p(\mathbb{R}^2)}\leq(4\pi t)^{1-\frac{1}{p}}\|\mu_0\|_{L^p(\mathbb{R}^2\times\mathbb{T}^1_{2\pi}\times[0,t]\times[-1,1])}.
\end{align*}

Of course, a priori this estimate does not ensure that the family $\{\mu_t\}_{t\geq0}$ is bounded in $L^p$ if $p\neq1$.

Finally, we point out that if $\mu_0$ does not depend on $x$ (respectively on $(x,\theta)$), the second summand in both \eqref{normaLp} and \eqref{normaLpfinita} can be expressed as $\|\mu_0\|_{L^1(\mathbb{T}^1_{2\pi}\times[0,t]\times[-1,1])}$ (respectively as $\|\mu_0\|_{L^1([0,t]\times[-1,1])}$).
\newline
\subsubsection{An intermediate step to prove Proposition \ref{prop:es/un}.}
To prove Proposition \ref{prop:es/un} we need an intermediate result. If we look at the equation \eqref{rapp:ev} and we evaluate $\mu_t$ for $s=0$, we get an equation that involves only $\mu_t(x,\theta,s=0,h)$ and $\mu_0$, that is

{\small\begin{align*}
\mu_t(x,\theta,0,h)&=\mu_0(x-v(\theta)t,\theta,t,h)
\\
&+\int_0^tdt'\int_{-1}^1dh'Q(t-t',h|h')\mu_{t'}(x-(t-t')v(\theta),\theta+\pi-2\arcsin(h'),0,h').
\end{align*}}

Therefore the crucial point is proving the existence and the uniqueness of $\mu_t(x,\theta,0,h)$, since it implies the ones of $\mu_t(x,\theta,s,h)$ by \eqref{rapp:ev}.

Therefore we first prove the following Lemma, where $\mu_t(x,\theta,0,h)$ is replaced by $\rho(x,\theta,t,h)$ and $\mu_0(x,\theta,t,h)$ is replaced by $\mu(x,\theta,t,h)$. We write it in a more general way, since we will need it in a slightly different formulation.
\begin{lemma}\label{lemma:rho}
Let $\mathbb{X}=\mathbb{R}^2$ or $\mathbb{X}=\mathbb{T}^2=\mathbb{R}^2/\mathbb{Z}^2$, $T>0$, $p\in[1,+\infty]$ and $\mu\in L^p(\mathbb{R}^2\times\mathbb{T}^1_{2\pi}\times[0,T]\times[-1,1])$. Then there exists a unique $\rho\in L^p(\mathbb{R}^2\times\mathbb{T}^1_{2\pi}\times[0,T]\times[-1,1])$ such that

{\small\begin{align}
\nonumber\rho(x,\theta,t,h)&=\mu(x-tv(\theta),\theta,t,h)
\\
\label{proprho}&+\int_0^tdt'\int_{-1}^1dh'Q(t-t',h|h')\rho(x-(t-t')v(\theta),\theta+\pi-2\arcsin(h'),t',h').
\end{align}}

Moreover for any $t\leq T$ we have

{\footnotesize\begin{align}
\label{propr1}&\int_{\mathbb{X}}dx\int_{\mathbb{T}^1_{2\pi}}d\theta \int_0^tdt'\int_{-1}^1dh|\rho(x,\theta,t',h)|E(t-t',h)\leq \int_{\mathbb{X}}dx\int_{\mathbb{T}^1_{2\pi}}d\theta \int_0^tdt'\int_{-1}^1dh|\mu(x,\theta,t',h)|,
\\
\label{propr2}&\int_{\mathbb{X}}dx\int_{\mathbb{T}^1_{2\pi}}d\theta \int_0^tdt'\int_{-1}^1dh\rho(x,\theta,t',h)E(t-t',h)=\int_{\mathbb{X}}dx\int_{\mathbb{T}^1_{2\pi}}d\theta \int_0^tdt'\int_{-1}^1dh\mu(x,\theta,t',h).
\end{align}}

Moreover, there exists a constant $C>0$ such that

\begin{align}
\nonumber\|\rho(\cdot,\theta,t,h)\|_{L^p(\mathbb{X})}&\leq\|\mu(\cdot,\theta,t,h)\|_{L^p(\mathbb{X})}
\\
\nonumber&+\int_0^tdt'\int_{-1}^1dh'Q(t-t',h|h')\|\mu(\cdot,\theta+\pi-2\arcsin(h'),t',h')\|_{L^p(\mathbb{X})}
\\
\label{normaLprho}&+C\int_{\mathbb{T}^1_{2\pi}}d\theta'\int_0^tdt'\int_{-1}^1dh'\|\mu(\cdot,\theta',t',h')\|_{L^p(\mathbb{X})}.
\end{align}

Furthermore, the following properties hold:
\begin{itemize}
\item if $\mu\geq0$ then $\rho\geq0$,
\item if $\mu$ does not depend on $x$ or $(x,\theta)$ neither does $\rho$. In such case, \eqref{propr1} and \eqref{propr2} hold also without integrating with respect to $x$ or $(x,\theta)$.
\end{itemize}
\end{lemma}
Before proving the Lemma, we point out that if $\mu$ does not depend on $x$ or $(x,\theta)$, \eqref{normaLprho} can be expressed by substituting the norm $\|\mu(\cdot,\theta,s,h)\|_{L^p(\mathbb{X})}$ with the modulus $|\mu(\theta,s,h)|$ or $|\mu(s,h)|$.
\begin{proof} Since the proof technique does not depend on whenever $\mathbb{X}=\mathbb{R}^2$ or $\mathbb{X}=\mathbb{T}^2$, $p\in[1,+\infty]$, we prove this result only in the case $\mathbb{X}=\mathbb{R}^2$ and $p\in[1,+\infty)$.

To prove the existence of such $\rho$ in the equation \eqref{proprho} we use the Banach fixed-point Theorem being careful that the integral of the function $Q(t,h|h')$ in the domain $\{(t,h):0\leq t\leq T,h\in[-1,1]\}$ is exactly 1 (and not $<1$) as soon as $T\geq\frac{1}{1-|h'|}$ (see \eqref{sptE}). Therefore, splitting $\mathbb{R}^2\times\mathbb{T}^1_{2\pi}\times[0,+\infty)\times[-1,1]$ into $\left\{\mathbb{R}^2\times\mathbb{T}^1_{2\pi}\times[\frac{k}{2},\frac{k+1}{2})\times[-1,1]\right\}_{k\in\mathbb{N}}$, we define

\begin{align*}
{\mathcal M}^k:=\mathbb{R}^2\times\mathbb{T}^1_{2\pi}\times\left[\frac{k}{2},\frac{k+1}{2}\wedge T\right)\times[-1,1],
\end{align*}

and we find step by step $\rho$ as a function $L^1({\mathcal M}^k)$, for $ k=0,1,\dots,\lfloor2T\rfloor$, going on by induction on $k$. We only prove the inductive step because it has no substantial differences from the basic step $k=0$. Therefore, for $t\in[\frac{k}{2},\frac{k+1}{2})$, we write the equation \eqref{proprho} splitting the integral with respect to $t'$ in two summands

{\small\begin{align*}
\rho(x,\theta,t,h)&=\mu(x-tv(\theta),\theta,t,h)
\\
&+\int_0^{\frac{k}{2}}dt'\int_{-1}^1dh'Q(t-t',h|h')\rho(x-(t-t')v(\theta),\theta+\pi-2\arcsin(h'),t',h')
\\
&+\int_{\frac{k}{2}}^tdt'\int_{-1}^1dh'Q(t-t',h|h')\rho(x-(t-t')v(\theta),\theta+\pi-2\arcsin(h'),t',h'),
\end{align*}}

that is, supposing that we have already defined $\bar\rho$ as the solution of \eqref{proprho} for any $t\in[0,\frac{k}{2})$, we denote

{\small\begin{align*}
{\mathcal F}&:L^p({\mathcal M}^k)\to L^p({\mathcal M}^k)
\\
{\mathcal F}[\rho](x,\theta,t,h)&:=\mu(x-tv(\theta),\theta,t,h)
\\
&+\int_0^{k/2}dt'\int_{-1}^1dh' Q(t-t',h|h')\bar\rho(x-(t-t')v(\theta),\theta+\pi-2\arcsin(h'),t',h')
\\
&+\int_{k/2}^tdt'\int_{-1}^1dh' Q(t-t',h|h')\rho(x-(t-t')v(\theta),\theta+\pi-2\arcsin(h'),t',h').
\end{align*}}

This map preserves the periodicity in $\theta$ and therefore it makes sense for $\theta\in\mathbb{T}^1_{2\pi}$. It preserves also the periodicity with respect to $x$, and therefore the argument works also when $\mathbb{X}=\mathbb{T}^2$.

${\mathcal F}$ is a contraction with respect to the canonical norm associated with $L^p({\mathcal M}^k)$. Indeed by Jensen's inequality \eqref{Jensencons}, one has

{\small\begin{align*}
&\|{\mathcal F}[\rho]-{\mathcal F}[\rho']\|_{L^p({\mathcal M}^k)}^p
\\
&=\int_{\mathbb{R}^2}dx\int_{\mathbb{T}^1_{2\pi}}d\theta\int_{\frac{k}{2}}^{\frac{k+1}{2}\wedge T}dt\int_{-1}^1dh\left|\int_{k/2}^tdt'\int_{-1}^1dh' Q(t-t',h|h')(\rho-\rho')(x',\theta',t',h')\right|^p,
\end{align*}}

with 
\[
x':=x-(t-t')v(\theta),\textrm{ and }\theta':=\theta+\pi-2\arcsin(h').
\]
Therefore

{\footnotesize\begin{align*}
&\|{\mathcal F}[\rho]-{\mathcal F}[\rho']\|_{L^p({\mathcal M}^k)}^p
\\
&=\int_{\mathbb{R}^2}dx\int_{\mathbb{T}^1_{2\pi}}d\theta\int_{\frac{k}{2}}^{\frac{k+1}{2}\wedge T}dt\int_{-1}^1dh\left|\int_{k/2}^tdt'\int_{-1}^1dh' Q(t-t',h|h')(\rho-\rho')(x',\theta',t',h')\right|^p
\\
&\leq\int_{\mathbb{R}^2}dx\int_{\mathbb{T}^1_{2\pi}}d\theta\int_{\frac{k}{2}}^{\frac{k+1}{2}\wedge T}dt\int_{-1}^1dh\int_{k/2}^tdt'\int_{-1}^1dh' Q(t-t',h|h')|\rho-\rho'|^p(x',\theta',t',h')\xi_p,
\end{align*}}

with
\[
\xi_p=\xi_p(t,h)=(1-E(t-\frac{k}{2},h))^{p-1}.
\]
Now if one defines

\begin{align*}
C_E:=\min \left\{E(t,h):t\in\left[0,\frac{1}{2}\right],h\in[-1,1]\right\}>0,
\end{align*}

in such a way to get by the previous computations

{\tiny\begin{align*}
&\|{\mathcal F}[\rho]-{\mathcal F}[\rho']\|_{L^p({\mathcal M}^k)}^p
\\
&\leq(1-C_E)^{p-1}\int_{\mathbb{R}^2}dx\int_{\mathbb{T}^1_{2\pi}}d\theta\int_{\frac{k}{2}}^{\frac{k+1}{2}\wedge T}dt'\int_{-1}^1dh'|\rho-\rho'|^p(x,\theta,t',h')\int_{t'}^{\frac{k+1}{2}\wedge T}dt\int_{-1}^1dh Q(t-t',h|h')
\\
&=(1-C_E)^{p-1}\int_{\mathbb{R}^2}dx\int_{\mathbb{T}^1_{2\pi}}d\theta\int_{\frac{k}{2}}^{\frac{k+1}{2}\wedge T}dt'\int_{-1}^1dh'|\rho-\rho'|^p(x,\theta,t',h')\left(1-E\left(\frac{k+1}{2}\wedge T-t',h'\right)\right)
\\
&\leq(1-C_E)^p\|\rho-\rho'\|_{L^p({\mathcal M}^k)}^p,
\end{align*}}

one can use the Banach fixed-point Theorem on $\mathcal{F}$ to extend $\rho$ to ${\mathcal M}^k$.

The uniqueness is also a direct consequence of the Banach fixed-point Theorem applied step by step to $L^p({\mathcal M}^k)$.

To prove now inequality \eqref{propr1} we simply apply the triangle inequality in the integral defining $\rho$ in \eqref{proprho} and for any $t\leq T$ we get

{\small\begin{align*}
|\rho(x,\theta,t,h)|&\leq|\mu(x-tv(\theta),\theta,t,h)|
\\
&+\int_0^tdt'\int_{-1}^1dh' Q(t-t',h|h')|\rho(x-(t-t')v(\theta),\theta+\pi-2\arcsin(h'),t',h')|,
\end{align*}}

Therefore, we integrate both sides of the above inequality with respect to $(x,\theta,t,h)$. In this way, in the right-hand side one can change variables
\[
x-tv(\theta)\to x,
\]
in the first summand and
\[
x-(t-t')v(\theta)\to x,\theta+\pi-2\arcsin(h')\to\theta,
\]
in the second one, we get

\begin{align}
\nonumber&\int_{\mathbb{R}^2}dx\int_{\mathbb{T}^1_{2\pi}}d\theta\int_0^tdt'\int_{-1}^1dh|\rho(x,\theta,t',h)|
\\
\nonumber&\leq\int_{\mathbb{R}^2}dx\int_{\mathbb{T}^1_{2\pi}}d\theta\int_0^tdt'\int_{-1}^1dh\left|\mu(x,\theta,t',h)\right|
\\
\label{integraleE}&+\int_{\mathbb{R}^2}dx\int_{\mathbb{T}^1_{2\pi}}d\theta\int_0^tdt'\int_{-1}^1dh\int_0^{t'}dt''\int_{-1}^1dh' Q(t'-t'',h|h')|\rho(x,\theta,t'',h')|.
\end{align}

Since the second summand \eqref{integraleE} in the right-hand side of the above inequality can also be written as

\begin{align*}
\eqref{integraleE}&=\int_{\mathbb{R}^2}dx\int_{\mathbb{T}^1_{2\pi}}d\theta\int_0^tdt''\int_{-1}^1dh'|\rho(x,\theta,t'',h')|\int_{t''}^tdt'\int_{-1}^1dhQ(t'-t'',h|h')
\\
&=\int_{\mathbb{R}^2}dx\int_{\mathbb{T}^1_{2\pi}}d\theta\int_0^tdt'\int_{-1}^1dh'|\rho(x,\theta,t',h')|(1-E(t-t',h')),
\end{align*}

the term in the left-hand side of the inequality containing \eqref{integraleE} gets deleted and we obtain

{\tiny\begin{align*}
\int_{\mathbb{R}^2}dx\int_{\mathbb{T}^1_{2\pi}}d\theta\int_0^tdt'\int_{-1}^1dh|\rho(x,\theta,t',h)|E(t-t',h')\leq\int_{\mathbb{R}^2}dx\int_{\mathbb{T}^1_{2\pi}}d\theta\int_0^tdt'\int_{-1}^1dh|\mu(x,\theta,t',h)|\quad\forall t\leq T.
\end{align*}}

The same steps, without the modulus and therefore without the triangle inequality, prove property \eqref{propr2}.

Before proving \eqref{normaLprho}, we focus on the last statement of the Theorem: $\mu\geq0$ implies $\rho\geq0$ because the map ${\mathcal F}$ preserves also the positivity of the argument and the metric space of non negative $L^1$ functions is complete with respect to $L^1$ norm, therefore the Banach fixed-point Theorem can be applied in this space. The same holds for the dependance on $x$ and on $(x,\theta)$, where the same argument works.

We finally have to prove \eqref{normaLprho}. To this purpose, we split $\rho$ in the sum of three contributions, that is

{\scriptsize\begin{align}
\nonumber&\rho(x,\theta,t,h)
\\
\nonumber&=\beta(x,\theta,t,h)+\mu(x-tv(\theta),\theta,t,h)
\\
\nonumber&+\int_0^tdt'\int_{-1}^1dh'Q(t-t',h|h')\mu(x-(t-t')v(\theta)-t'v(\theta+\pi-2\arcsin(h')),\theta+\pi-2\arcsin(h'),t',h'),
\\
\label{defbeta}
\end{align}}

with $\beta$ satisfying

\begin{align}
\nonumber&\beta(x,\theta,t,h)
\\
\nonumber&=\int_0^tdt'\int_{-1}^1dh' Q(t-t',h|h')\int_0^{t'}dt''\int_{-1}^1dh''Q(t'-t'',h'|h'')\mu(x',\theta'',t'',h'')
\\
\label{proprbeta}&+\int_0^tdt'\int_{-1}^1dh' Q(t-t',h|h')\beta(x-(t-t')v(\theta),\theta+\pi-2\arcsin(h'),t',h'),
\end{align}

and where
{\small\[
x':=x-(t-t')v(\theta)-(t'-t'')v(\theta+\pi-2\arcsin(h'))-t''v(\theta+2\pi-2\arcsin(h')-2\arcsin(h'')),
\]}
and
\[
\theta'':=\theta+2\pi-2\arcsin(h')-2\arcsin(h'').
\]
By applying Minkowski's inequality to equation \eqref{defbeta}, one gets

{\small\begin{align}
\nonumber\|\rho(\cdot,\theta,t,h)\|_{L^p(\mathbb{R}^2)}&\leq\|\beta(\cdot,\theta,t,h)\|_{L^p(\mathbb{R}^2)}+\|\mu(\cdot,\theta,t,h)\|_{L^p(\mathbb{R}^2)}
\\
\label{stimaconbeta}&+\int_0^tdt'\int_{-1}^1dh'Q(t-t',h|h')\|\mu(\cdot,\theta+\pi-2\arcsin(h'),t',h')\|_{L^p(\mathbb{R}^2)},
\end{align}}

and therefore we can conclude the proof if we estimate $\|\beta(\cdot,\theta,t,h)\|_{L^p(\mathbb{R}^2)}$.

\textbf{Step 1:} we want to prove that if

\begin{align*}
G(\theta,t,h)&:=\int_{\mathbb{T}^1_{2\pi}}d\theta'\int_0^tdt'\int_{-1}^1dh'\|\mu(\cdot,\theta',t',h')\|_{L^p(\mathbb{R}^2)}f(\theta-\theta',t-t',h|h'),
\end{align*}

with $f$ as in Definition \ref{defeffe}, then

{\small\begin{align}
\nonumber& \|\beta(\cdot,\theta,t,h)\|_{L^p(\mathbb{R}^2)}
\\
\label{normaLpcorrezioni1}&\leq G(\theta,t,h)+\int_0^tdt'\int_{-1}^1dh' Q(t-t',h|h')G(\theta+\pi-2\arcsin(h'),t',h')
\\
\label{normaLpcorrezioni2}&+2\frac{c}{2\pi}\int_{\mathbb{T}^1_{2\pi}}d\theta\int_0^tdt'\int_{-1}^1dh'G(\theta,t',h')
\\
\nonumber&+\int_{\mathbb{T}^1_{2\pi}}d\theta'\int_0^tdt'\int_{-1}^1dh'\left|f(\theta-\theta',t-t',h|h')-\frac{c}{2\pi}E^{(2)}(t-t',h')\right|\|\beta(\cdot,\theta',t',h')\|_{L^p(\mathbb{R}^2)}.
\end{align}}

To get this, we first shorten
\[
F(\theta,t,h):=\|\beta(\cdot,\theta,t,h)\|_{L^p(\mathbb{R}^2)},
\]
and then we notice that by using Minkowski's inequality in the definition of $\beta$ \eqref{proprbeta}, changing then the integration order and shortening again
\[
\theta'':=\theta+2\pi-2\arcsin(h')-2\arcsin(h''), 
\]
we get 

{\small\begin{align}
\nonumber &F(\theta,t,h)
\\
\nonumber&\leq\left\|\int_0^tdt'\int_{-1}^1dh' Q(t-t',h|h')\int_0^{t'}dt''\int_{-1}^1dh''Q(t'-t'',h'|h'')\mu(\cdot,\theta'',t'',h'')\right\|_{L^p(\mathbb{R}^2)}
\\
\nonumber&+\left\|\int_0^tdt'\int_{-1}^1dh' Q(t-t',h|h')\beta(\cdot,\theta+\pi-2\arcsin(h'),t',h')\right\|_{L^p(\mathbb{R}^2)}
\\
\nonumber&\leq\int_0^tdt'\int_{-1}^1dh' Q(t-t',h|h')\int_0^{t'}dt''\int_{-1}^1dh''Q(t'-t'',h'|h'')\|\mu(\cdot,\theta'',t'',h'')\|_{L^p(\mathbb{R}^2)}
\\
\nonumber&+\int_0^tdt'\int_{-1}^1dh' Q(t-t',h|h')\|\beta(\cdot,\theta+\pi-2\arcsin(h'),t',h')\|_{L^p(\mathbb{R}^2)}
\\
\label{ref:rel}&=G(\theta,t,h)+\int_0^tdt'\int_{-1}^1dh' Q(t-t',h|h')F(\theta+\pi-2\arcsin(h'),t',h').
\end{align}}

Now we argue as in the proof of Lemma \ref{lemma:phimeglio}: first notice that by iterating twice over \eqref{ref:rel} we get

\begin{align}
\nonumber F(\theta,t,h)&\leq G(\theta,t,h)+\int_0^tdt'\int_{-1}^1dh' Q(t-t',h|h')G(\theta+\pi-2\arcsin(h'),t',h')
\\
\label{normaLprhostep1}&+\int_{\mathbb{T}^1_{2\pi}}d\theta'\int_0^tdt'\int_{-1}^1dh'f(\theta-\theta',t-t',h|h')F(\theta',t',h'),
\end{align}

with $f$ as in Definition \ref{defeffe}. Notice also that with the same steps as Lemma \ref{lemma:phimeglio}, that is, integrating both sides of \eqref{normaLprhostep1} over $\theta\in\mathbb{T}^1_{2\pi},t\in[0,T],h\in[-1,1]$, we get

\begin{align}
\nonumber&\int_{\mathbb{T}^1_{2\pi}}d\theta\int_0^Tdt\int_{-1}^1dhF(\theta,t,h)E^{(2)}(T-t,h)
\\
\nonumber&\leq\int_{\mathbb{T}^1_{2\pi}}d\theta\int_0^Tdt\int_{-1}^1dhG(\theta,t,h)
\\
\nonumber&+\int_{\mathbb{T}^1_{2\pi}}d\theta\int_0^Tdt'\int_{-1}^1dh'G(\theta,t',h')\underbrace{\int_{t'}^Tdt\int_{-1}^1dhQ(t-t',h|h')}_{\leq1}
\\
\label{normaLprhostep2}&\leq2\int_{\mathbb{T}^1_{2\pi}}d\theta\int_0^Tdt\int_{-1}^1dhG(\theta,t,h).
\end{align}

Now if we multiply both the sides of \eqref{normaLprhostep2} by $\frac{c}{2\pi}$, with $c\in(0,1)$, and if we subtract and add back the left-hand side into \eqref{normaLprhostep1}, we get \textbf{Step 1}.

\textbf{Step 2:} to estimate $F$, we begin with estimating all the three terms concerning $G$ in \eqref{normaLpcorrezioni1} and \eqref{normaLpcorrezioni2}.

Notice that by definition

\begin{align}
\nonumber G(\theta,t,h)&=\int_{\mathbb{T}^1_{2\pi}}d\theta'\int_0^tdt'\int_{-1}^1dh'\|\mu(\cdot,\theta',t',h')\|_{L^p(\mathbb{R}^2)}\underbrace{f(\theta-\theta',t-t',h|h')}_{\leq C\textrm{ by \eqref{deceffe} of Lemma \ref{propreffe}}}
\\
\label{normaLprhostep6}&\leq C\int_{\mathbb{T}^1_{2\pi}}d\theta'\int_0^tdt'\int_{-1}^1dh'\|\mu(\cdot,\theta',t',h')\|_{L^p(\mathbb{R}^2)},
\end{align}

and therefore also the second one has a very similar bound, indeed

\begin{align}
\nonumber&\int_0^tdt'\int_{-1}^1dh' Q(t-t',h|h')\underbrace{G(\theta+\pi-2\arcsin(h'),t',h')}_{\leq C\int_{\mathbb{T}^1_{2\pi}}d\theta'\int_0^{t'}dt''\int_{-1}^1dh''\|\mu(\cdot,\theta',t',h')\|_{L^p(\mathbb{R}^2)}\textrm{ by \eqref{normaLprhostep6}}}
\\
\nonumber&\leq C\int_{\mathbb{T}^1_{2\pi}}d\theta'\int_0^tdt''\int_{-1}^1dh''\|\mu(\cdot,\theta',t',h')\|_{L^p(\mathbb{R}^2)}\underbrace{\int_0^tdt'\int_{-1}^1dh' Q(t-t',h|h')}_{\leq1}
\\
\label{normaLprhostep7}&\leq C\int_{\mathbb{T}^1_{2\pi}}d\theta'\int_0^tdt''\int_{-1}^1dh''\|\mu(\cdot,\theta',t',h')\|_{L^p(\mathbb{R}^2)}.
\end{align}

That is, by \eqref{normaLprhostep6} and \eqref{normaLprhostep7}, we get

\begin{align}\label{normaLpcorrezioni3}
\eqref{normaLpcorrezioni1}\leq C\int_{\mathbb{T}^1_{2\pi}}d\theta'\int_0^tdt''\int_{-1}^1dh''\|\mu(\cdot,\theta',t',h')\|_{L^p(\mathbb{R}^2)}.
\end{align}

Finally, \eqref{normaLpcorrezioni2} can be bounded as

{\scriptsize\begin{align}
\nonumber\eqref{normaLpcorrezioni2}&=\int_{\mathbb{T}^1_{2\pi}}d\theta\int_0^tdt'\int_{-1}^1dh'\int_{\mathbb{T}^1_{2\pi}}d\theta'\int_0^{t'}dt''\int_{-1}^1dh'' f(\theta-\theta',t-t'',h'|h'')\|\mu(\cdot,\theta',t'',h'')\|_{L^p(\mathbb{R}^2)}
\\
\nonumber&=\int_{\mathbb{T}^1_{2\pi}}d\theta'\int_0^tdt''\int_{-1}^1dh''\|\mu(\cdot,\theta',t'',h'')\|_{L^p(\mathbb{R}^2)}\underbrace{\int_{\mathbb{T}^1_{2\pi}}d\theta\int_{t''}^tdt'\int_{-1}^1dh'f(\theta-\theta',t'-t'',h'|h'')}_{\leq1\textrm{ by \eqref{fint1} of Lemma \ref{propreffe}}}
\\
\label{normaLprhostep8}&\leq\int_{\mathbb{T}^1_{2\pi}}d\theta'\int_0^tdt''\int_{-1}^1dh''\|\mu(\cdot,\theta',t'',h'')\|_{L^p(\mathbb{R}^2)}.
\end{align}}

Therefore, by substituting \eqref{normaLpcorrezioni3} and \eqref{normaLprhostep8} into \textbf{Step 1}, we can write

{\small\begin{align}
\nonumber F(\theta,t,h)&\leq C\int_{\mathbb{T}^1_{2\pi}}d\theta'\int_0^tdt'\int_{-1}^1dh'\|\mu(\cdot,\theta',t',h')\|_{L^p(\mathbb{R}^2)}
\\
\label{normaLprhostep4}&+\int_{\mathbb{T}^1_{2\pi}}d\theta'\int_0^tdt'\int_{-1}^1dh'\left|f(\theta-\theta',t-t',h|h')-\frac{c}{2\pi}E^{(2)}(t-t',h')\right|F(\theta',t',h').
\end{align}}

\textbf{Step 3: } we prove here that, up to another constant $C'>0$, the following estimate holds:

{\tiny\begin{align*}
\|\beta(\cdot,\theta,t,h)\|_{L^p(\mathbb{R}^2)}&=F(\theta,t,h)
\\
&\leq C'\int_{\mathbb{T}^1_{2\pi}}d\theta'\int_0^tdt''\int_{-1}^1dh''\|\mu(\cdot,\theta',t',h')\|_{L^p(\mathbb{R}^2)},\quad\forall(\theta,t,h)\in\mathbb{T}^1_{2\pi}\times[0,+\infty)\times[-1,1].
\end{align*}}

To this purpose, also define

\begin{align*}
\tilde F(\theta,t,h):=\frac{F(\theta,t,h)}{\int_{\mathbb{T}^1_{2\pi}}d\theta'\int_0^tdt'\int_{-1}^1dh'\|\mu(\cdot,\theta',t',h')\|_{L^p(\mathbb{R}^2)}},
\end{align*}

and notice that the quantity above is well-defined. Indeed, since if the denominator is zero, then the same holds for the numerator. To prove this claim, suppose that $\int_{\mathbb{T}^1_{2\pi}}d\theta'\int_0^Tdt'\int_{-1}^1dh'\|\mu(\cdot,\theta',t',h')\|_{L^p(\mathbb{R}^2)}=0$, then by \eqref{normaLprhostep4} we have
{\scriptsize
\[
F(\theta,t,h)\leq\int_{\mathbb{T}^1_{2\pi}}d\theta'\int_0^tdt'\int_{-1}^1dh' \left|f(\theta-\theta',t-t',h|h')-\frac{c}{2\pi}E^{(2)}(t-t',h')\right|F(\theta',t',h'),\quad\forall t\leq T.
\]}
Therefore denoting, for $c$ small enough,
\[
d:=\sup_{h\in[-1,1]}\left\|f(\cdot,\cdot,h|\cdot)-\frac{c}{2\pi}E^{(2)}\right\|_{L^1(\mathbb{T}^1_{2\pi}\times[0,+\infty)\times[-1,1])}<1,
\]
the constant provided by Lemma \ref{lemmaok}, which we will prove Section \ref{thetash}, if $t\leq T$ one gets

{\scriptsize\begin{align*}
&F(\theta,t,h)\leq\|F\|_{L^{\infty}(\mathbb{T}^1_{2\pi}\times[0,T]\times[-1,1])}\int_{\mathbb{T}^1_{2\pi}}d\theta'\int_0^tdt'\int_{-1}^1dh' \left|f(\theta-\theta',t-t',h|h')-\frac{c}{2\pi}E^{(2)}(t-t',h')\right|,
\end{align*}}

then

\begin{align*}
\|F\|_{L^{\infty}(\mathbb{T}^1_{2\pi}\times[0,T]\times[-1,1])}\leq d\|F\|_{L^{\infty}(\mathbb{T}^1_{2\pi}\times[0,T]\times[-1,1])},
\end{align*}

and therefore 

\begin{align*}
\|F\|_{L^{\infty}(\mathbb{T}^1_{2\pi}\times[0,T]\times[-1,1])}=0\textrm{ since }d<1.
\end{align*}

Now that we know that $\tilde F$ is finite, by noticing that the denominator in the definition of $\tilde F$ is increasing with respect to $t$, dividing both sides of \eqref{normaLprhostep4} by $\int_{\mathbb{T}^1_{2\pi}}d\theta''\int_0^tdt''\int_{-1}^1dh''\|\mu(\cdot,\theta'',t'',h'')\|_{L^p(\mathbb{R}^2)}$, we get

{\footnotesize\begin{align*}
\tilde F(\theta,t,h)\leq C+\int_{\mathbb{T}^1_{2\pi}}d\theta'\int_0^tdt'\int_{-1}^1dh' \left|f(\theta-\theta',t-t',h|h')-\frac{c}{2\pi}E^{(2)}(t-t',h')\right|\tilde F(\theta',t',h').
\end{align*}}

Now let $T>0$: if $t\leq T$, using again $d:=\sup_{h\in[-1,1]}\|f(\cdot,\cdot,h|\cdot)-\frac{c}{2\pi}E^{(2)}\|_{L^1}<1$ provided by Lemma \ref{lemmaok}, by the previous inequality we have

{\tiny\begin{align*}
\tilde F(\theta,t,h)&\leq C+\|\tilde F\|_{L^{\infty}(\mathbb{T}^1_{2\pi}\times[0,T]\times[-1,1])}\int_{\mathbb{T}^1_{2\pi}}d\theta'\int_0^tdt'\int_{-1}^1dh' \left|f(\theta-\theta',t-t',h|h')-\frac{c}{2\pi}E^{(2)}(t-t',h')\right|
\\
&\leq C+\|\tilde F\|_{L^{\infty}(\mathbb{T}^1_{2\pi}\times[0,T]\times[-1,1])}d,
\end{align*}}

then
\[
\|\tilde F\|_{L^{\infty}(\mathbb{T}^1_{2\pi}\times[0,T]\times[-1,1])}\leq C+\|\tilde F\|_{L^{\infty}(\mathbb{T}^1_{2\pi}\times[0,T]\times[-1,1])}d,
\]
and therefore

{\small\[
\|\tilde F\|_{L^{\infty}(\mathbb{T}^1_{2\pi}\times[0,T]\times[-1,1])}\leq\frac{C}{1-d},\quad\forall T>0,\textrm{ that is, }\|\tilde F\|_{L^{\infty}(\mathbb{T}^1_{2\pi}\times[0,+\infty)\times[-1,1])}\leq\frac{C}{1-d}.
\]}

Hence, by the definition of $F$ and $\tilde F$, one gets
\[
\|\beta(\cdot,\theta,t,h)\|_{L^p(\mathbb{R}^2)}=F(\theta,t,h)\leq\frac{C}{1-d}\int_{\mathbb{T}^1_{2\pi}}d\theta'\int_0^tdt'\int_{-1}^1dh'\|\mu(\cdot,\theta',t',h')\|_{L^p(\mathbb{R}^2)},
\]
that is the content of \textbf{Step 3}. Combining the previous estimate with \eqref{stimaconbeta} we can conclude the proof of \eqref{normaLprho}.
\end{proof}
When studying the dependance on $x$, a slightly different version of Lemma \ref{lemma:rho} is needed:
\begin{lemma}\label{lemma:rhovarianteC}
Let $k\in\mathbb{R}^2,k\neq(0,0)$, $T>0$, $p\in[1,+\infty]$ and $\mu\in L^p(\mathbb{T}^1_{2\pi}\times[0,T]\times[-1,1];\mathbb{C})$. Then there exists a unique $\rho\in L^p(\mathbb{T}^1_{2\pi}\times[0,T]\times[-1,1];\mathbb{C})$ such that

{\footnotesize\begin{align*}
\rho(\theta,t,h)=\mu(\theta,t,h)+\int_0^tdt'\int_{-1}^1dh'Q(t-t',h|h')e^{2\pi i(t-t')k\cdot v(\theta)}\rho(\theta+\pi-2\arcsin(h'),t',h').
\end{align*}}

\end{lemma}
\subsubsection{Proof of Proposition \ref{prop:es/un}.}
\begin{proof}
We start by proving the existence and the uniqueness: Lemma \ref{lemma:rho} with $\mathbb{X}=\mathbb{R}^2$ or $\mathbb{X}=\mathbb{T}^2$, $p\in[1,+\infty]$ and $\mu=\mu_0$ provides the existence of a function $\rho$ satisfying \eqref{proprho}. Therefore, if we define $\mu_t(x,\theta,0,h)=\rho(x,\theta,t,h)$, and we then use equation \eqref{rapp:ev} to obtain $\mu_t(x,\theta,s,h)$, we get a mild solution of the equation \eqref{eq:ev}. The uniqueness is a consequence of the fact that the relation \eqref{rapp:ev} forces $\mu_t(x,\theta,0,h)$ to be a solution of \eqref{proprho}, which is unique thanks to Lemma \ref{lemma:rho}, and therefore also $\mu_t(x,\theta,s,h)$ is uniquely defined for any $s\geq0$.

The next step is to prove property \eqref{mutL1cpt}.

By representation formula \eqref{rapp:ev} we get

{\tiny\begin{align*}
&\int_{\mathbb{X}}dx\int_{\mathbb{T}^1_{2\pi}}d\theta\int_0^Tds\int_{-1}^1dh|\mu_t(x,\theta,s,h)|
\\
&\leq\int_{\mathbb{X}}dx\int_{\mathbb{T}^1_{2\pi}}d\theta\int_0^Tds\int_{-1}^1dh|\mu_0(x-tv(\theta),\theta,s+t,h)|
\\
&+\int_{\mathbb{X}}dx\int_{\mathbb{T}^1_{2\pi}}d\theta\int_0^Tds\int_{-1}^1dh\int_0^tdt'\int_{-1}^1dh' Q(s+t-t',h|h')|\mu_{t'}(x-(t-t')v(\theta),\theta+\pi-2\arcsin(h'),0,h')|
\\
&=\int_{\mathbb{X}}dx\int_{\mathbb{T}^1_{2\pi}}d\theta\int_t^{t+T}ds\int_{-1}^1dh|\mu_0(x,\theta,s,h)|
\\
&+\int_{\mathbb{X}}dx\int_{\mathbb{T}^1_{2\pi}}d\theta\int_0^Tds\int_{-1}^1dh\int_0^tdt'\int_{-1}^1dh' Q(s+t-t',h|h')|\mu_{t'}(x,\theta,0,h')|
\\
&=\int_{\mathbb{X}}dx\int_{\mathbb{T}^1_{2\pi}}d\theta\int_t^{t+T}ds\int_{-1}^1dh|\mu_0(x,\theta,s,h)|
\\
&+\int_{\mathbb{X}}dx\int_{\mathbb{T}^1_{2\pi}}d\theta\int_0^tdt'\int_{-1}^1dh' |\mu_{t'}(x,\theta,0,h')|\int_0^Tds\int_{-1}^1dhQ(s+t-t',h|h').
\end{align*}}

Now we look at the last term, and since
\[
\int_0^Tds\int_{-1}^1dhQ(s+t-t',h|h')=E(t-t',h')-E(t+T-t',h')\leq E(t-t',h'),
\]
we can write

\begin{align*}
&\int_{\mathbb{X}}dx\int_{\mathbb{T}^1_{2\pi}}d\theta\int_0^Tds\int_{-1}^1dh|\mu_t(x,\theta,s,h)|
\\
&\leq\int_{\mathbb{X}}dx\int_{\mathbb{T}^1_{2\pi}}d\theta\int_t^{t+T}ds\int_{-1}^1dh|\mu_0(x,\theta,s,h)|
\\
&+\underbrace{\int_{\mathbb{X}}dx\int_{\mathbb{T}^1_{2\pi}}d\theta\int_0^tdt'\int_{-1}^1dh' |\mu_{t'}(x,\theta,0,h')|E(t-t',h')}_{\leq\int_{\mathbb{X}}dx\int_{\mathbb{T}^1_{2\pi}}d\theta\int_0^tdt'\int_{-1}^1dh' |\mu_0(x,\theta,t',h')|\textrm{ by }\eqref{propr1}\textrm{ of Lemma \ref{lemma:rho}}}
\\
&\leq\int_{\mathbb{X}}dx\int_{\mathbb{T}^1_{2\pi}}d\theta\int_t^{t+T}ds\int_{-1}^1dh|\mu_0(x,\theta,s,h)|
\\
&+\int_{\mathbb{X}}dx\int_{\mathbb{T}^1_{2\pi}}d\theta\int_0^tdt'\int_{-1}^1dh' |\mu_0(x,\theta,t',h')|
\\
&=\int_{\mathbb{X}}dx\int_{\mathbb{T}^1_{2\pi}}d\theta\int_0^{t+T}ds\int_{-1}^1dh|\mu_0(x,\theta,s,h)|,
\end{align*}

and this proves property \eqref{mutL1cpt}.

As for \eqref{distanzadecr}, it is sufficient to take the limit $T\to+\infty$ in the equation \eqref{mutL1cpt}.

Then, to prove property \eqref{consmassa}, the same argument as the one used to prove property \eqref{mutL1cpt} works: the only differences are that \eqref{propr2} should be used instead of  \eqref{propr1} and that all the computations have to be meant in the limit $T\to+\infty$.

The positivity of $\mu_t$ is preserved in time since Lemma \ref{lemma:rho} ensures that $\mu_0\geq0$ a.e. implies $\rho(x,\theta,t,h)=\mu_t(x,\theta,0,h)\geq0$ a.e. and therefore thanks to \eqref{rapp:ev} we have also $\mu_t\geq0$ a.e..

Moreover, if $\mu_0$ does not depend on $x$ or on $(x,\theta)$ neither $\rho(x,\theta,t,h)=\mu_t(x,\theta,0,h)$ does thanks to Lemma \ref{lemma:rho}. Therefore, thanks to the representation formula \eqref{rapp:ev}, the same holds for $\mu_t$.

Now we only have to prove \eqref{normaLp}. To this purpose, if we use the representation \eqref{rapp:ev}, with triangle inequality and by extending the $L^p$ norm of the second summand to all $s\in[0,+\infty)$, we get

{\scriptsize\begin{align}
\label{normaLpstep1}\|\mu_t\|_{L^p(\mathbb{X}\times\mathbb{T}^1_{2\pi}\times[0,T]\times[-1,1])}&\leq\|\mu_0(\cdot,\cdot,\cdot+t,\cdot)\|_{L^p(\mathbb{X}\times\mathbb{T}^1_{2\pi}\times[0,T]\times[-1,1])}
\\
\label{normaLpstep2}&+\left\|\int_0^tdt'\int_{-1}^1dh'Q(\cdot+t-t',\cdot|h')\mu_{t'}(\cdot,\cdot,0,h')\right\|_{L^p(\mathbb{X}\times\mathbb{T}^1_{2\pi}\times[0,+\infty)\times[-1,1])}.
\end{align}}

Now since the first term in \eqref{normaLpstep1} can be expressed as
\[
\eqref{normaLpstep1}=\|\mu_0(\cdot,\cdot,\cdot+t,\cdot)\|_{L^p(\mathbb{X}\times\mathbb{T}^1_{2\pi}\times[0,T]\times[-1,1])}=\|\mu_0\|_{L^p(\mathbb{X}\times\mathbb{T}^1_{2\pi}\times[t,T+t]\times[-1,1])},
\]
we focus on the second summand. Since
\[
\|\cdot\|_{L^p(\mathbb{X}\times\mathbb{T}^1_{2\pi}\times[0,+\infty)\times[-1,1])}=\left\|\|\cdot\|_{L^p(\mathbb{X})}\right\|_{L^p(\mathbb{T}^1_{2\pi}\times[0,+\infty)\times[-1,1])},
\]
using again Minkowski's inequality, we can write

\begin{align*}
\eqref{normaLpstep2}&=\left\|\int_0^tdt'\int_{-1}^1dh'Q(\cdot+t-t',\cdot|h')\mu_{t'}(\cdot,\cdot,0,h')\right\|_{L^p(\mathbb{X}\times\mathbb{T}^1_{2\pi}\times[0,+\infty)\times[-1,1])}
\\
&\leq\left\|\int_0^tdt'\int_{-1}^1dh'Q(\cdot+t-t',\cdot|h')\|\mu_{t'}(\cdot,\cdot,0,h')\|_{L^p(\mathbb{X})}\right\|_{L^p(\mathbb{T}^1_{2\pi}\times[0,+\infty)\times[-1,1])},
\end{align*}

and using \eqref{normaLprho} of Lemma \ref{lemma:rho} and applying triangle inequality to the external norm $\|\|_{L^p(\mathbb{T}^1_{2\pi}\times[0,+\infty)\times[-1,1])}$, we have

\begin{align*}
\eqref{normaLpstep2}&\leq\left\|\int_0^tdt'\int_{-1}^1dh'Q(\cdot+t-t',\cdot|h')\left[\|\mu_0(\cdot,\cdot,t',h')\|_{L^p(\mathbb{X})}\right.\right.
\\
&+\int_0^{t'}dt''\int_{-1}^1dh''Q(t'-t'',h'|h'')\|\mu_0(\cdot,\cdot+\pi-2\arcsin(h''),t'',h'')\|_{L^p(\mathbb{X})}
\\
&+C\left.\left.\int_{\mathbb{T}^1_{2\pi}}d\theta'\int_0^{t'}dt''\int_{-1}^1dh''\|\mu_0(\cdot,\theta',t'',h'')\|_{L^p(\mathbb{X})}\right]\right\|_{L^p(\mathbb{T}^1_{2\pi}\times[0,+\infty)\times[-1,1])}.
\end{align*}

By triangle inequality we get

{\footnotesize\begin{align}
\nonumber&\eqref{normaLpstep2}
\\
\label{normaLpstep3}&\leq\left\|\int_0^tdt'\int_{-1}^1dh'Q(\cdot+t-t',\cdot|h')\|\mu_0(\cdot,\cdot,t',h')\|_{L^p(\mathbb{X})}\right\|_{L^p}
\\
\label{normaLpstep5}&+\left\|\int_0^tdt'\int_{-1}^1dh'Q(\cdot+t-t',\cdot|h')\int_0^{t'}dt''\int_{-1}^1dh'' Q(t'-t'',h'|h'')\|\mu_0(\cdot,\theta',t'',h'')\|_{L^p(\mathbb{X})}\right\|_{L^p}
\\
\label{normaLpstep4}&+C\left\|\int_0^tdt'\int_{-1}^1dh'Q(\cdot+t-t',\cdot|h')\int_0^{t'}dt''\int_{-1}^1dh''\|\mu_0(\cdot,\cdot,t'',h'')\|_{L^p(\mathbb{X})}\right\|_{L^p},
\end{align}}

where in the previous inequality we shortened
\[
L^p:=L^p(\mathbb{T}^1_{2\pi}\times[0,+\infty)\times[-1,1]),
\]
and
\[
\theta':=\theta+\pi-2\arcsin(h'').
\]
Hereafter, we will assume that $p$ is finite, but the case $p=\infty$ can be studied in the same way.

By using Jensen's inequality in the inner integral as in \eqref{Jensencons} and changing the integration order, one has

{\tiny\begin{align*}
&\eqref{normaLpstep3}
\\
&\leq\left(\int_{\mathbb{T}^1_{2\pi}}d\theta\int_0^{\infty}ds\int_{-1}^1dh\int_0^tdt'\int_{-1}^1dh'Q(s+t-t',h|h')\|\mu_0(\cdot,\theta,t',h')\|_{L^p(\mathbb{X})}^p\underbrace{(E(s,h)-E(s+t,h))^{p-1}}_{\leq1}\right)^{\frac{1}{p}}
\\
&=\left(\int_{\mathbb{T}^1_{2\pi}}d\theta\int_0^tdt'\int_{-1}^1dh'\|\mu_0(\cdot,\theta,t',h')\|_{L^p(\mathbb{X})}^p\underbrace{\int_0^{\infty}ds\int_{-1}^1dhQ(s+t-t',h|h')}_{=E(t-t',h')\leq1}\right)^{\frac{1}{p}}
\\
&=\|\mu_0\|_{L^p(\mathbb{X}\times\mathbb{T}^1_{2\pi}\times[0,t]\times[-1,1])},
\end{align*}}

and, first applying again twice Jensen's inequality as in \eqref{Jensencons}, and then changing the integration order, one has also

{\tiny\begin{align*}
\eqref{normaLpstep5}^p\leq\int d\theta dsdh\int_0^tdt'\int_{-1}^1dh'Q(s+t-t',h|h')\int_0^{t'}dt''\int_{-1}^1dh''Q(t'-t'',h'|h'')\|\mu_0(\cdot,\theta,t'',h'')\|_{L^p(\mathbb{X})}^p\xi,
\end{align*}}

with
\[
\xi=\xi(t,s,h,t',h'):=\underbrace{(1-E(t',h'))^{p-1}}_{\leq1}\underbrace{(E(s,h)-E(s+t,h))^{p-1}}_{\leq 1})^{\frac{1}{p}}\leq1,
\]
and
\[
\int d\theta dsdh:=\int_{\mathbb{T}^1_{2\pi}}d\theta\int_0^{\infty}ds\int_{-1}^1dh.
\]
Therefore

{\small\begin{align*}
\eqref{normaLpstep5}\leq\left(\int_{\mathbb{T}^1_{2\pi}}d\theta\int_0^tdt''\int_{-1}^1dh''\|\mu_0(\cdot,\theta,t'',h'')\|_{L^p(\mathbb{X})}^p\int_{t''}^tdt'\int_{-1}^1dh'Q(t'-t'',h'|h'')\xi_1\right)^{\frac{1}{p}}
\end{align*}}

with
\[
\xi_1=\xi_1(t,t',h'):=\int_0^{\infty}ds\int_{-1}^1dhQ(s+t-t',h|h')=E(t-t',h')\leq1.
\]
Hence

{\footnotesize\begin{align*}
\eqref{normaLpstep5}&\leq\left(\int_{\mathbb{T}^1_{2\pi}}d\theta\int_0^tdt''\int_{-1}^1dh''\|\mu_0(\cdot,\theta,t'',h'')\|_{L^p(\mathbb{X})}^p\underbrace{\int_{t''}^tdt'\int_{-1}^1dh'Q(t'-t'',h'|h'')}_{=E(t-t'',h'')\leq1}\right)^{\frac{1}{p}}
\\
&\leq\left(\int_{\mathbb{T}^1_{2\pi}}d\theta\int_0^tdt''\int_{-1}^1dh''\|\mu_0(\cdot,\theta,t'',h'')\|_{L^p(\mathbb{X})}^p\right)^{\frac{1}{p}}
\\
&=\|\mu_0\|_{L^p(\mathbb{X}\times\mathbb{T}^1_{2\pi}\times[0,t]\times\mathbb[-1,1])}.
\end{align*}}

Finally, one has

{\footnotesize\begin{align*}
\eqref{normaLpstep4}&=C\left\|\int_0^tdt'\int_{-1}^1dh'Q(\cdot+t-t',\cdot|h')\underbrace{\int_{\mathbb{T}^1_{2\pi}}d\theta'\int_0^{t'}dt''\int_{-1}^1dh''\|\mu_0(\cdot,\theta',t'',h'')\|_{L^p(\mathbb{X})}}_{\leq\int_{\mathbb{T}^1_{2\pi}}d\theta'\int_0^tdt''\int_{-1}^1dh''\|\mu_0(\cdot,\theta',t'',h'')\|_{L^p(\mathbb{X})}}\right\|_{L^p}
\\
&\leq C\int_{\mathbb{T}^1_{2\pi}}d\theta'\int_0^tdt''\int_{-1}^1dh''\|\mu_0(\cdot,\theta',t'',h'')\|_{L^p(\mathbb{X})}\left\|\underbrace{\int_0^tdt'\int_{-1}^1dh'Q(\cdot+t-t',\cdot|h')}_{=E-E(\cdot+t,\cdot)\leq E}\right\|_{L^p}
\\
&\leq(2\pi)^{\frac{1}{p}} C\int_{\mathbb{T}^1_{2\pi}}d\theta'\int_0^tdt''\int_{-1}^1dh''\|\mu_0(\cdot,\theta',t'',h'')\|_{L^p(\mathbb{X})},
\end{align*}}

where, in the first two lines of the previous inequality, we shortened
\[
L^p:=L^p(\mathbb{T}^1_{2\pi}\times[0,+\infty)\times[-1,1]).
\]

Summing \eqref{normaLpstep1} and \eqref{normaLpstep2}, which is smaller than \eqref{normaLpstep3}+\eqref{normaLpstep5}+\eqref{normaLpstep4}, we get \eqref{normaLp}.

Finally, \eqref{normaLpfinita} is obtained by sending $T\to+\infty$ in \eqref{normaLp}.
\end{proof}
\subsection{Stationary solutions.} Now we briefly focus on the stationary solutions of the equations \eqref{eq:ev}, \eqref{eq:ev_x} and \eqref{eq:ev_x,v}. It is straightforward to see that the only stationary solutions (in $L^1$) of the three equations are:
\begin{itemize}
\item $\mu_t(x,\theta,s,h)\equiv0$ for $x\in\mathbb{X}$, equation \eqref{eq:ev};
\item $\mu_t(x,\theta,s,h)\equiv\frac{c}{2\pi}E(s,h)$ for $x\in\mathbb{T}^2$, equation \eqref{eq:ev};
\item $\mu_t(\theta,s,h)\equiv\frac{c}{2\pi}E(s,h)$, equation \eqref{eq:ev_x};
\item $\mu_t(s,h)\equiv cE(s,h)$, equation \eqref{eq:ev_x,v}.
\end{itemize}
It is immediately checked that the previous mild solutions are stationary, because they correspond to $\mu_t(x,\theta,0,h)=c$, or $\mu_t(\theta,0,h)=c$ or $\mu_t(0,h)=c$, in the case of not dependance on $x$ or $(x,\theta)$. Following \cite{CG2010}, the reverse implication can also be proved: these solutions are the only stationary ones. An alternative way to prove it is exactly to use Theorems \ref{thm:convergenza_v,s,h}, \ref{thm:conv_x,v,s,h} and Theorem \ref{thm:conv_R2}, since if a stationary solution converges to another solution, either strongly or weakly, then it coincides with it.

\newpage
\section{The long time evolution of a density depending on $(\theta,s,h)$.}\label{thetash}
The aim of this Section is to prove Theorem \ref{thm:convergenza_v,s,h}, which states that if $\mu_0\in L^1\cap L^p(\mathbb{T}^1_{2\pi}\times[0,+\infty)\times[-1,1])$, with $p\in[1,+\infty]$, and $\mu_t$ is a solution of

{\small\begin{align}\label{rapp:ev_nuova}
\mu_t(\theta,s,h)=\mu_0(\theta,s+t,h)+\int_0^tdt'\int_{-1}^1dh'Q(s+t-t',h|h')\mu_{t'}(\theta+\pi-2\arcsin(h'),0,h'),
\end{align}}

then

{\tiny\begin{align*}
\left\|\mu_t-\frac{\langle\mu_0\rangle}{2\pi}E\right\|_{L^p}&\leq C\left[\frac{\|\mu_0\|_{L^1}+\|\mu_0\|_{L^p}}{t+1}+\|\mu_0\|_{L^1(\mathbb{T}^1_{2\pi}\times[t/4,+\infty)\times[-1,1])}+\|\mu_0\|_{L^p(\mathbb{T}^1_{2\pi}\times[t/4,+\infty)\times[-1,1])}\right],
\end{align*}}

up for a constant $C>0$ depending only on $Q$.

As we said in the introduction, the first term of \eqref{rapp:ev_nuova} represents the probability density of the particles for which no collision has occurred until time $t$. The second one can be understood as the probability density that at least a collision has occurred before time $t$. From equation \eqref{rapp:ev_nuova}, we can preliminary notice that the long time behavior of $\mu_t$ is determined  by that of $\mu_t(\theta,0,h)$, which is in turn determined by the equation

\begin{align}\label{eq:mut0}
\mu_t(\theta,0,h)=\mu_0(\theta,t,h)+\int_0^tdt'\int_{-1}^1dh'Q(t-t',h|h')\mu_{t'}(\theta+\pi-2\arcsin(h'),0,h').
\end{align}

Now if $\mu_t(\theta,0,h)=c$, by \eqref{eq:mut0} we get $\mu_0(\theta,t,h)=cE(t,h)$ and therefore by \eqref{rapp:ev_x} $\mu_t\equiv cE$ at any time $t$. Therefore the rate of convergence of $\mu_t(\theta,s,h)$ to the equilibrium state is determined by the rate of convergence of $\mu_t(\theta,0,h)$ to the constant $\frac{\langle\mu_0\rangle}{2\pi}$. This is our purpose in the next steps.

\subsection{Writing $\mu_t(\theta,0,h)$ as a linear function of $\mu_0$.}
We collect here some results that allow us to prove Theorem \ref{thm:convergenza_v,s,h}. In particular we want to write $\mu_t(\theta,0,h)$ as a linear function of $\mu_0$ to get a better estimate of its long time behavior. Therefore the first purpose is to prove now the following intermediate result.
\begin{proposition}\label{prop:musvarphi}
There exists a function $\varphi\in L^{\infty}(\mathbb{T}^1_{2\pi}\times[0,+\infty)\times[-1,1]^2;\mathbb{R})$ depending only on the kernel $Q$ such that for every $\mu_0\in L^1(\mathbb{T}^1_{2\pi}\times[0,+\infty)\times[-1,1])$ the function $(\theta,h)\mapsto\mu_t(\theta,0,h)$ writes as an affine function of $\varphi$ and a linear function of $\mu_0$ as

\begin{align}
\nonumber\mu_t(\theta,0,h)&=\mu_0(\theta,t,h)+\int_0^tdt'\int_{-1}^1dh'Q(t-t',h|h')\mu_0(\theta+\pi-2\arcsin(h'),t',h')
\\
\nonumber&+\frac{1}{2\pi}\int_{\mathbb{T}^1_{2\pi}}d\theta'\int_0^tdt'\int_{-1}^1dh'\mu_0(\theta',t',h')
\\
\label{prop:musvarphiineq}&+\int_{\mathbb{T}^1_{2\pi}}d\theta'\int_0^tdt'\int_{-1}^1dh'\varphi(\theta-\theta',t-t',h|h')\mu_0(\theta',t',h'),
\end{align}

and moreover there exists a constant $C>0$ such that

\begin{align*}
|\varphi(\theta,t,h|h')|\leq\frac{C}{t+1}\quad\forall(\theta,t,h|h')\in\mathbb{T}^1_{2\pi}\times[0,+\infty)\times[-1,1]^2.
\end{align*}

\end{proposition}
This Proposition is all what we need to prove Theorem \ref{thm:convergenza_v,s,h}. We observe that the most similar to $\frac{\langle\mu_0\rangle}{2\pi}$ is the third term in the right-hand side of \eqref{prop:musvarphiineq}. We will prove that the other terms vanish in the long time limit.

We are finally splitting the probability density $\mu_t(\theta,0,h)$ of a collision occurring at time $t$ in four contributions. The first and the second one are understood respectively as the probability that the first and the second collision occur exactly at time $t$. The sum of third and the fourth one as the probability that at least two collisions happened before time $t$.

To prove Proposition \ref{prop:musvarphi} we make several steps by some preliminary Lemmas. We first recall the Definition \ref{defeffe} of the function $f$, that is,

{\footnotesize\begin{align*}
f(\theta,t,h|h'):=\sum_{\ell\in\mathbb{Z}}\frac{\partial h''(\theta+2\ell\pi,h')}{\partial\theta}\int_0^tdt' Q(t-t',h|h''(\theta+2\ell\pi,h'))Q(t',h''(\theta+2\ell\pi,h')|h'),
\end{align*}}

with $h''$ as in Definition \ref{defh''}

\begin{align*}
h''(\theta,h'):=\sin\left(\frac{\theta+2\pi-2\arcsin(h')}{2}\right)\mathbbm{1}_{[2\arcsin(h')-3\pi,2\arcsin(h')-\pi)}(\theta).
\end{align*}

Lemma \ref{propreffe} in the Section \ref{app:funzionidiQ} establishes some properties of $f$: it has integral $1$ and it is bounded by $\frac{C}{t}$.

We shall start the proof of Proposition \ref{prop:musvarphi} by proving the following Lemma.
\begin{lemma}\label{lemma:esphi}
There exists a unique function $\varphi\in L^{\infty}_{loc}(\mathbb{T}^1_{2\pi}\times[0,+\infty)\times[-1,1]^2)$ such that

\begin{align}
\nonumber\varphi(\theta,t,h|h')&=f(\theta,t,h|h')-\frac{1}{2\pi}E(t,h)
\\
\label{espr:phi}&+\int_0^tdt'\int_{-1}^1dh''Q(t-t',h|h'')\varphi(\theta+\pi-2\arcsin(h''),t',h''|h'),
\end{align}

and moreover, for every $\mu_0\in L^1(\mathbb{T}^1_{2\pi}\times[0,+\infty)\times[-1,1])$, $\mu_t(\theta,0,h)$ is an affine function of $\varphi$ and a linear function of $\mu_0$:

\begin{align}
\nonumber\mu_t(\theta,0,h)&=\mu_0(\theta,t,h)+\int_0^tdt'\int_{-1}^1dh'Q(t-t',h|h')\mu_0(\theta+\pi-2\arcsin(h'),t',h')
\\
\nonumber&+\frac{1}{2\pi}\int_{\mathbb{T}^1_{2\pi}}d\theta'\int_0^tdt'\int_{-1}^1dh'\mu_0(\theta',t',h')
\\
\label{rapp:musphi}&+\int_{\mathbb{T}^1_{2\pi}}d\theta'\int_0^tdt'\int_{-1}^1dh'\varphi(\theta-\theta',t-t',h|h')\mu_0(\theta',t',h').
\end{align}

\end{lemma}
We recall that $\varphi$ is exactly the one in Proposition \ref{prop:musvarphi}. Therefore, once we will have proved it, we will only need to prove that, for large $t$, $\varphi$ behaves as $\frac{1}{t}$ at most.

We prove now Lemma \ref{lemma:esphi}
\begin{proof} We prove it in two steps.

\textbf{Step 1:} we start by looking for a function $\gamma:\mathbb{T}^1_{2\pi}\times[0,+\infty)\times[-1,1]^2\to[0,+\infty)$ such that for any $\mu_0$, the function $(\theta,h)\mapsto\mu_t(\theta,0,h)$ satisfies, for any $t>0,\theta,h$

\begin{align}
\nonumber\mu_t(\theta,0,h)&=\mu_0(\theta,t,h)+\int_0^tdt'\int_{-1}^1dh'Q(t-t',h|h')\mu_0(\theta+\pi-2\arcsin(h'),t',h')
\\
\label{rapp_gamma}&+\int_{\mathbb{T}^1_{2\pi}}d\theta'\int_0^tdt'\int_{-1}^1dh'\gamma(\theta-\theta',t-t',h|h')\mu_0(\theta',t',h').
\end{align}

We want to separate the dependance on the initial datum $\mu_0$, and the last term, that is the convolution of $\gamma$ and $\mu_0$, represents the probability that at least two collisions have occurred before time $t$.

To find such a function $\gamma$ we write $\mu_t$ as in the equation \eqref{rapp_gamma} and we substitute it in both sides of the equation \eqref{eq:mut0}. By doing this we get

{\scriptsize\begin{align}
\label{esprA0}&\mu_0(\theta,t,h)+\int_0^tdt'\int_{-1}^1dh'Q(t-t',h|h')\mu_0(\theta+\pi-2\arcsin(h'),t',h')
\\
\label{esprA}&+\int_{\mathbb{T}^1_{2\pi}}d\theta'\int_0^tdt'\int_{-1}^1dh'\gamma(\theta-\theta',t-t',h|h')\mu_0(\theta',t',h')
\\
\label{esprA0copia}&=\mu_0(\theta,t,h)+\int_0^tdt'\int_{-1}^1dh'Q(t-t',h|h')\mu_0(\theta+\pi-2\arcsin(h'),t',h')
\\
\label{esprB}&+\int_0^tdt'\int_{-1}^1dh'Q(t-t',h|h')\int_0^{t'}dt''\int_{-1}^1dh''Q(t'-t'',h'|h'')\mu_0(\theta'',t'',h'')
\\
\label{esprC}&+\int_0^tdt'\int_{-1}^1dh'Q(t-t',h|h')\int_{\mathbb{T}^1_{2\pi}}d\theta'\int_0^{t'}dt''\int_{-1}^1dh''\gamma(\tilde\theta-\theta',t'-t'',h'|h'')\mu_0(\theta',t'',h''),
\end{align}}

with
{\footnotesize\[
\theta''=\theta''(\theta,h',h''):=\theta+2\pi-2\arcsin(h')-2\arcsin(h'')\textrm{ and }\tilde\theta=\tilde\theta(\theta,h'):=\theta+\pi-2\arcsin(h').
\]}

The first two summands in \eqref{esprA0} and \eqref{esprA0copia} cancel the one each other, therefore we just have to write better the equality
\[
\eqref{esprA}=\eqref{esprB}+\eqref{esprC}.
\]
\eqref{esprA} is fine and does not need to be expressed in another way.

Instead we write better \eqref{esprB}: first we change the integration order, and then we rename the variables (exchange $t''$ with $t'$ and $h'$ with $h''$). By doing this we get

\begin{align*}
\eqref{esprB}=\int_0^tdt'\int_{-1}^1dh'\int_{-1}^1dh''\mu_0(t',h',\theta'')\int_0^{t-t'}dt''Q(t-t'-t'',h|h'')Q(t'',h''|h'),
\end{align*}

and now we change variables from $h''$ to $\theta'$ in such a way to get

\begin{align*}
\theta''=\theta+2\pi-2\arcsin(h')-2\arcsin(h'')=\theta',
\end{align*}

that is

\begin{align*}
h''=\sin\left(\frac{\theta-\theta'+2\pi-2\arcsin(h')}{2}\right)\mathbbm{1}_{2\arcsin(h')+[-3\pi,-\pi)}(\theta-\theta')=h''(\theta-\theta',h')
\end{align*}

as in Definition \ref{defh''}.

Therefore, shortening $h''=h''(\theta-\theta',h')$, we have obtained

{\scriptsize\begin{align}
\nonumber \eqref{esprB}&=\int_0^tdt'\int_{-1}^1dh'\int_{\theta+\pi-2\arcsin(h')}^{\theta+3\pi-2\arcsin(h')}d\theta'\mu_0(\theta',t',h')\frac{\partial h''}{\partial\theta}\int_0^{t-t'}dt''Q(t-t'-t'',h|h'')Q(t'',h''|h')
\\
\nonumber&=\int_0^tdt'\int_{-1}^1dh'\int_{\theta+\pi-2\arcsin(h')}^{\theta+3\pi-2\arcsin(h')}d\theta'\mu_0(\theta',t',h')f(\theta-\theta',t-t',h|h')\textrm{ by Definition \ref{defeffe} of }f
\\
\label{nuovaB}&=\int_{\mathbb{T}^1_{2\pi}}d\theta'\int_0^tdt'\int_{-1}^1dh'\mu_0(\theta',t',h')f(\theta-\theta',t-t',h|h'),
\end{align}}

since the integrand is periodic with respect to $\theta'$ and therefore we can choose any period to compute the integral.

As for \eqref{esprC} we operate similarly to \eqref{esprB}: first we change the integration order and we integrate before with respect to $(\theta',t'',h'')$ and then with respect to $(t',h')$, and then we exchange the variables names, that is, we exchange $t''$ with $t'$ and $h''$ with $h'$. In this way we get

 {\scriptsize\begin{align}
\label{nuovaC}\eqref{esprC}=\int_0^tdt'\int_{-1}^1dh'\int_{\mathbb{T}^1_{2\pi}}d\theta'\mu_0(\theta',t',h')\int_0^{t-t'}dt''\int_{-1}^1dh''Q(t-t'-t'',h|h'')\gamma(t'',h'',\tilde\theta-\theta'|h').
\end{align}}

Therefore we obtained

\begin{align*}
\eqref{esprA}=\eqref{esprB}+\eqref{esprC}=\eqref{nuovaB}+\eqref{nuovaC},
\end{align*}

which, for any $\mu_0\in L^1$, writes as

\begin{align*}
&\int_{\mathbb{T}^1_{2\pi}}d\theta'\int_0^tdt'\int_{-1}^1dh'\mu_0(\theta',t',h')\gamma(\theta-\theta',t-t',h|h')
\\
&=\int_{\mathbb{T}^1_{2\pi}}d\theta'\int_0^tdt'\int_{-1}^1dh'\mu_0(\theta',t',h')\biggl[f(\theta-\theta',t-t',h|h')
\\
&+\int_0^{t-t'}dt''\int_{-1}^1dh''Q(t-t'-t'',h|h'')\gamma(\theta+\pi-2\arcsin(h'')-\theta',t'',h''|h')\biggr].
\end{align*}

Therefore property \eqref{rapp_gamma} holds for any $\mu_0\in L^1$ if and only if, for any $(\theta,t,h|h')\in\mathbb{T}^1_{2\pi}\times[0,+\infty)\times[-1,1]^2$, $\gamma$ verifies

{\small\begin{align}
\label{espr:gamma}\gamma(\theta,t,h|h')=f(\theta,t,h|h')+\int_0^tdt'\int_{-1}^1dh''Q(t-t',h|h'')\gamma(\theta+\pi-2\arcsin(h''),t',h''|h').
\end{align}}

Now we show that, for fixed $h'$, $\gamma(\cdot,\cdot,\cdot|h')$ exists in $\cap_{T>0}L^{\infty}(\mathbb{R}\times[0,T]\times[-1,1])$ and is non negative. To prove it, it is sufficient to use Lemma \ref{lemma:rho}, with $h'$ fixed, in the case: no dependance on $x$, with $p=\infty$ and $\mu(\theta,t,h)=f(\theta,t,h|h')$.

We need an estimate in the space $L^{\infty}(\mathbb{T}^1_{2\pi}\times[0,T]\times[-1,1])$ uniformly with respect to $h'$ with a possible dependance on $T$, that is an estimate in the space $L^{\infty}(\mathbb{T}^1_{2\pi}\times[0,T]\times[-1,1]^2)$. To get it, we recall that $\gamma$ is obtained by using Lemma \ref{lemma:rho}, which is based on the Banach fixed-point Theorem, hence $\|\gamma(\cdot,\cdot,\cdot|h')\|_{L^{\infty}(\mathbb{T}^1_{2\pi}\times[0,T]\times[-1,1])}$ is bounded by a linear function of $\|f(\cdot,\cdot,\cdot|h')\|_{L^{\infty}(\mathbb{T}^1_{2\pi}\times[0,T]\times[-1,1])}$. By \eqref{deceffe} of Lemma \ref{propreffe}, $\|f(\cdot,\cdot,\cdot|h')\|_{L^{\infty}(\mathbb{T}^1_{2\pi}\times[0,T]\times[-1,1])}$ is uniformly bounded with respect to $h'$, hence $\|\gamma(\cdot,\cdot,\cdot|h')\|_{L^{\infty}(\mathbb{T}^1_{2\pi}\times[0,T]\times[-1,1])}$ is too. 

\textbf{Step 2:} we want to use $\gamma-\frac{1}{2\pi}$ instead of $\gamma$: since we would like to state that $\mu_t(\theta,0,h)\xrightarrow[t\to+\infty]{}\frac{\langle\mu_0\rangle}{2\pi}$, what we expect from $\gamma$ by looking the equation \eqref{rapp_gamma} is that $\gamma(\theta,t,h|h')\xrightarrow[t\to+\infty]{}\frac{1}{2\pi}$. We intentionally do not precise the setting which this convergence is to be understood in, we will clarify it later.

Let us then define
\[
\varphi:=\gamma-\frac{1}{2\pi}.
\]
With this notation, by \eqref{rapp_gamma}, $(\theta,h)\mapsto\mu_t(\theta,0,h)$ as a function of $\varphi$ verifies:

{\small\begin{align*}
\mu_t(\theta,0,h)&=\mu_0(\theta,t,h)+\int_0^tdt'\int_{-1}^1dh'Q(t-t',h|h')\mu_0(\theta+\pi-2\arcsin(h'),t',h')
\\
&+\frac{1}{2\pi}\int_0^{2\pi}d\theta'\int_0^tdt'\int_{-1}^1dh'\mu_0(\theta',t',h')
\\
&+\int_0^{2\pi}d\theta'\int_0^tdt'\int_{-1}^1dh'\varphi(\theta-\theta',t-t',h|h')\mu_0(\theta',t',h'),
\end{align*}}

and, using \eqref{espr:gamma}, $\varphi$ solves

\begin{align*}
\varphi(\theta,t,h|h')&=f(\theta,t,h|h')-\frac{1}{2\pi}E(t,h)
\\
&+\int_0^tdt'\int_{-1}^1dh''Q(t-t',h|h'')\varphi(\theta+\pi-2\arcsin(h''),t',h''|h'),
\end{align*}

with $f$ of Definition \ref{defeffe}.

The equations above are exactly the properties \eqref{espr:phi} and \eqref{rapp:musphi}.
\end{proof}
Now, as we said, the next step would be to deduce the long time behavior of $\varphi$ from its definition in \eqref{espr:phi}. But there are two problems in using directly \eqref{espr:phi}. The first one is that the convolution kernel in the integral is $Q$, and the integral of $Q$ is 1. If it was $<1$, one could have used Banach fixed-point Theorem to get long time estimates on $\varphi$. The second one is that the integral in \eqref{espr:phi} involves only the variables $(t',h')$, and the integral of $f-\frac{1}{2\pi}E$ only with respect to $(t,h)$ is non zero. Instead, by \eqref{fint1} of Lemma \ref{propreffe}, the integral of $f-\frac{1}{2\pi}E$ with respect to $(\theta,t,h)$ is zero. This suggests to iterate over \eqref{espr:phi} twice, so that, changing variables, we integrate also with respect to $\theta$.

Therefore, in the next two Lemmas, we get a more useful writing for $\varphi$ and a property of $f-\frac{1}{2\pi}E$. To show their next use, we state both of them together.

We recall that $Q^{(n)}$ and $E^{(n)}$ for $n\in\mathbb{N}$ are defined in Definitions \ref{def:Qn} and \ref{def:En}.
\begin{lemma}\label{lemma:phimeglio}
For any $c\in\mathbb{R}$ the function $\varphi$ defined in \eqref{espr:phi} of Lemma \ref{lemma:esphi} verifies:

{\scriptsize\begin{align*}
\varphi(\theta,t,h|h')&=f(\theta,t,h|h')+\int_0^tdt'\int_{-1}^1dh''Q(t-t',h|h'')f(\theta+\pi-2\arcsin(h''),t',h''|h')
\\
&-\frac{1}{2\pi}E^{(2)}(t,h)+\frac{c}{2\pi}\left[\int_t^{\infty}dt'\int_{-1}^1dh''E^{(2)}(t',h'')-E^{(2)}(t,h')-E^{(3)}(t,h')\right]
\\
&+\int_{\mathbb{T}^1_{2\pi}}d\theta'\int_0^tdt'\int_{-1}^1dh'''\varphi(\theta',t',h'''|h')\left[f(\theta-\theta',t-t',h|h''')-\frac{c}{2\pi}E^{(2)}(t-t',h''')\right].
\end{align*}}

\end{lemma}
\begin{lemma}\label{lemmaok}
There exists a constant $\bar c>0$ such that for any $c\in(0,\bar c]$ it holds:
\begin{align*}
d:=\sup_{h\in[-1,1]}\left\|f(\cdot,\cdot,h|\cdot)-\frac{c}{2\pi}E^{(2)}\right\|_{L^1(\mathbb{T}^1_{2\pi}\times[0,+\infty)\times[-1,1])}<1.
\end{align*}

\end{lemma}
Before proving both lemmas above, we first show how they lead to the following estimate of $\varphi$ that will be useful for the proof of Proposition \ref{prop:musvarphi}.
\begin{lemma}\label{decphi}
There exists a constant $C>0$ such that the function $\varphi$ defined by \eqref{espr:phi} of Lemma \ref{lemma:esphi} verifies
\[
|\varphi(\theta,t,h|h')|\leq\frac{C}{t+1}\quad\forall(\theta,t,h|h')\in\mathbb{T}^1_{2\pi}\times[0,+\infty)\times[-1,1]^2.
\]
\end{lemma}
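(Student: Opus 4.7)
The plan is based on the observation that, for $c$ as in Lemma \ref{lemmaok}, the identity of Lemma \ref{lemma:phimeglio} becomes a Volterra fixed-point equation for $\varphi$ whose linear part is a strict $L^\infty$-contraction. Writing it as $\varphi=F+\mathcal{K}\varphi$ with
\[
\mathcal{K}\psi(\theta,t,h|h'):=\int_{\mathbb{T}^1_{2\pi}}d\theta'\int_0^t dt'\int_{-1}^1 dh'''\,\psi(\theta',t',h'''|h')\,K(\theta-\theta',t-t',h|h'''),
\]
where $K(\theta,t,h|h'''):=f(\theta,t,h|h''')-\frac{c}{2\pi}E^{(2)}(t,h''')$ and $F$ is the forcing collecting the remaining terms of Lemma \ref{lemma:phimeglio}, Lemma \ref{lemmaok} directly gives $\|\mathcal{K}\|_{L^\infty\to L^\infty}\leq d<1$.

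The first step is to establish the pointwise bound $|F(\theta,t,h|h')|\leq C/(t+1)$. This follows from the estimate $f\leq C/(t+1)$ (Lemma \ref{propreffe}), analogous pointwise bounds on $E^{(2)}$, $E^{(3)}$ and on their $t$-tails, and the fact that $\int_0^\infty ds\int_{-1}^1 dh\,Q(s,h|h')$ is uniformly bounded; the relevant asymptotics are gathered in Appendix \ref{app:funzionidiQ}. Since $F\in L^\infty$ and $\mathcal{K}$ is an $L^\infty$-contraction, the Neumann series $\sum_{n\geq 0}\mathcal{K}^n F$ converges in $L^\infty$ to the unique bounded solution of $\varphi=F+\mathcal{K}\varphi$, and by the compact-interval uniqueness granted by Lemma \ref{lemma:esphi} this solution must coincide with $\varphi$; in particular $\|\varphi\|_{L^\infty}\leq B:=\|F\|_{L^\infty}/(1-d)$.

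The second step is the decay bootstrap. Set $\Phi(t):=\sup_{s\leq t,\theta,h,h'}(s+1)|\varphi(\theta,s,h|h')|$ and $L(s,h):=\int d\theta\,dh'''\,|K(\theta,s,h|h''')|$. A change of variables $\theta-\theta'\mapsto\theta$ in the integral equation yields
\[
(t+1)|\varphi(\theta,t,h|h')|\leq C_F+(t+1)\int_0^t\frac{\Phi(t')}{t'+1}L(t-t',h)\,dt'.
\]
Fix $\alpha\in(0,1-d)$ and split the integral at $t'=(1-\alpha)t$. On $[(1-\alpha)t,t]$ one has $(t+1)/(t'+1)\leq 1/(1-\alpha)$, and combined with $\int_0^{\alpha t}L(s,h)\,ds\leq d$ this contributes at most $\tfrac{d}{1-\alpha}\Phi(t)$, a strict fraction of $\Phi(t)$. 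On $[0,(1-\alpha)t]$ one replaces $|\varphi|$ by the global constant $B$, obtaining a contribution of order $B(t+1)\int_{\alpha t}^\infty L(s,h)\,ds$, which is uniformly bounded provided $\int_T^\infty L(s,h)\,ds\leq C/T$. Adding the two pieces gives $\Phi(t)(1-\tfrac{d}{1-\alpha})\leq\tilde C$, hence $\Phi$ is globally bounded, which is exactly the claim.

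The main obstacle is the tail estimate $\int_T^\infty L(s,h)\,ds\leq C/T$ used on the $[0,(1-\alpha)t]$ piece: the $L^1$-integrability provided by Lemma \ref{lemmaok} is not by itself sufficient, and one must appeal to the finer $1/T$-type tail asymptotics of $f$ and of $E^{(2)}$ worked out in Appendix \ref{app:funzionidiQ}, which rest on the explicit form of $Q$ and on careful estimates of the convolutions $Q^{(n)}$.
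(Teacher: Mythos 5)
Your proposal is correct and follows essentially the same route as the paper: a contraction/Neumann argument for global boundedness using Lemma \ref{lemmaok}, followed by a bootstrap for the $1/(t+1)$ decay that splits the time integral near $t'=t$ (contracted by $d/(1-\alpha)<1$) and away from it (controlled by the $1/T$ tail of the kernel, which the paper obtains from \eqref{ubEn} and \eqref{restoEn} of Lemma \ref{proprEn} together with $\int_\theta f=Q^{(2)}$). The tail estimate you flag as the main obstacle is exactly the input the paper supplies in Appendix \ref{app:funzionidiQ}, so no gap remains.
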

\begin{proof} Fix $c\in(0,\bar c]$, with $\bar c$ provided by Lemma \ref{lemmaok}, and rewrite $\varphi$ as in Lemma \ref{lemma:phimeglio} with such $c$. Also denote

{\footnotesize\begin{align}
\label{addendieffe}J(\theta,t,h|h')&:=f(\theta,t,h|h')+\int_0^tdt'\int_{-1}^1dh''Q(t-t',h|h'')f(\theta+\pi-2\arcsin(h''),t',h''|h')
\\
\label{addendiEn}&-\frac{1}{2\pi}E^{(2)}(t,h)+\frac{c}{2\pi}\left[\int_t^{\infty}dt'\int_{-1}^1dh''E^{(2)}(t',h'')-E^{(2)}(t,h')-E^{(3)}(t,h')\right],
\end{align}}

and we use this notation only in this proof. With $J$ the expression in Lemma \ref{lemma:phimeglio} has the equivalent formulation:

{\scriptsize\begin{align}
\nonumber\varphi(\theta,t,h|h')&=J(\theta,t,h|h')
\\
\label{ciao}&+\int_{\mathbb{T}^1_{2\pi}}d\theta'\int_0^tdt'\int_{-1}^1dh'''\varphi(\theta',t',h'''|h')\left[f(\theta-\theta',t-t',h|h''')-\frac{c}{2\pi}E^{(2)}(t-t',h''')\right].
\end{align}}

We estimate the summands in \eqref{addendieffe} with Lemma \ref{propreffe}. The summands in \eqref{addendiEn} can be bounded by Lemma \ref{proprEn}. In this way we can state that the following quantities concerning $J$ verify:

\begin{align*}
\|J\|_{L^{\infty}}<+\infty\quad\textrm{ and }\quad A_J:=\sup_{(\theta,t,h|h')\in\mathbb{T}^1_{2\pi}\times[0,+\infty)\times[-1,1]^2}|tJ(\theta,t,h|h')|<+\infty.
\end{align*}

\textbf{Step 1:} to begin with, we prove that $\varphi$ is bounded. Denote
\[
B_T^{\varphi}:=\|\varphi\|_{L^{\infty}(\mathbb{T}^1_{2\pi}\times[0,T]\times[-1,1]^2)},
\]
which is finite for any $T>0$ thanks to Lemma \ref{lemma:esphi}. If we now look at equation \eqref{ciao} we can write for $t\leq T$

{\scriptsize\begin{align*}
|\varphi(\theta,t,h|h')|&\leq\underbrace{|J(\theta,t,h|h')|}_{\leq\|J\|_{L^{\infty}}}
\\
&+\int_{\mathbb{T}^1_{2\pi}}d\theta'\int_0^tdt'\int_{-1}^1dh'''\underbrace{|\varphi(\theta',t',h'''|h')|}_{\leq B_{t'}^{\varphi}\leq B_T^{\varphi}}\left|f(\theta-\theta',t-t',h|h''')-\frac{c}{2\pi}E^{(2)}(t-t',h''')\right|
\\
&\leq\|J\|_{L^{\infty}}+B_T^{\varphi}\int_{\mathbb{T}^1_{2\pi}}d\theta'\int_0^tdt'\int_{-1}^1dh'''\left|f(\theta-\theta',t-t',h|h''')-\frac{c}{2\pi}E^{(2)}(t-t',h''')\right|
\\
&\leq\|J\|_{L^{\infty}}+B_T^{\varphi}\underbrace{\left\|f(\cdot,\cdot,h|\cdot)-\frac{c}{2\pi}E^{(2)}\right\|_{L^1(\mathbb{T}^1_{2\pi}\times[0,+\infty)\times[-1,1])}}_{\leq d<1\quad\forall h\textrm{ by Lemma }\ref{lemmaok}}
\\
&\leq\|J\|_{L^{\infty}}+B_T^{\varphi}d,
\end{align*}}

and therefore since the previous inequality holds for any $t\leq T$, the same holds for $B_T^{\varphi}=\|\varphi\|_{L^{\infty}(\mathbb{T}^1_{2\pi}\times[0,T]\times[-1,1]^2)}$, that is 

\begin{align*}
B_T^{\varphi}\leq\|J\|_{L^{\infty}}+B_T^{\varphi}d\quad\textrm{ and therefore}\quad B_T^{\varphi}\leq\frac{\|J\|_{L^{\infty}}}{1-d},
\end{align*}

and since this inequality holds in turn regardless of $T$, this proves that

\begin{align*}
\|\varphi\|_{L^{\infty}(\mathbb{T}^1_{2\pi}\times[0,T]\times[-1,1]^2)}\leq\frac{\|J\|_{L^{\infty}}}{1-d}\quad\forall T>0\quad\textrm{ and therefore }\quad\varphi\in L^{\infty}.
\end{align*}

\textbf{Step 2:} we prove now that $\varphi$ decays at most as $\frac{1}{t}$. Denote again
\[
C_T^{\varphi}:=\sup_{(\theta,t,h|h')\in\mathbb{T}^1_{2\pi}\times[0,T]\times[-1,1]^2}|t\varphi(\theta,t,h|h')|,
\]
which is again finite for fixed $T$ thanks to Lemma \ref{lemma:esphi}. Also fix  $1<\alpha<\frac{1}{d}$, with $d<1$ defined in Lemma \ref{lemmaok}. If we use again the expression \eqref{ciao}, for $t\leq T$ we get

{\tiny\begin{align*}
|t\varphi(\theta,t,h|h')|&\leq\underbrace{|tJ(\theta,t,h|h')|}_{\leq A_J}
\\
&+t\int_{\mathbb{T}^1_{2\pi}}d\theta'\int_0^{\frac{t}{\alpha}}dt'\int_{-1}^1dh'''\underbrace{|\varphi(\theta',t',h'''|h')|}_{\leq \|\varphi\|_{L^{\infty}}}\left|f(\theta-\theta',t-t',h|h''')-\frac{c}{2\pi}E^{(2)}(t-t',h''')\right|
\\
&+t\int_{\mathbb{T}^1_{2\pi}}d\theta'\int_{\frac{t}{\alpha}}^tdt'\int_{-1}^1dh'''\underbrace{|\varphi(\theta',t',h'''|h')|}_{\leq \frac{C_{t'}^{\varphi}}{t'}\leq \frac{\alpha C_T^{\varphi}}{t}}\left|f(\theta-\theta',t-t',h|h''')-\frac{c}{2\pi}E^{(2)}(t-t',h''')\right|
\\
&\leq A_J+\|\varphi\|_{L^{\infty}}t\int_{\mathbb{T}^1_{2\pi}}d\theta'\int_0^{\frac{t}{\alpha}}dt'\int_{-1}^1dh'''\left|f(\theta-\theta',t-t',h|h''')-\frac{c}{2\pi}E^{(2)}(t-t',h''')\right|
\\
&+\alpha C_T^{\varphi}\int_{\mathbb{T}^1_{2\pi}}d\theta'\int_{\frac{t}{\alpha}}^tdt'\int_{-1}^1dh'''\left|f(\theta-\theta',t-t',h|h''')-\frac{c}{2\pi}E^{(2)}(t-t',h''')\right|
\\
&\leq A_J+\|\varphi\|_{L^{\infty}}t\int_{\mathbb{T}^1_{2\pi}}d\theta'\int_{t(1-\frac{1}{\alpha})}^tdt'\int_{-1}^1dh'''\left|f(\theta',t',h|h''')-\frac{c}{2\pi}E^{(2)}(t',h''')\right|
\\
&+\alpha C_T^{\varphi}\underbrace{\left\|f(\cdot,\cdot,h|\cdot)-\frac{c}{2\pi}E^{(2)}\right\|_{L^1(\mathbb{T}^1_{2\pi}\times[0,+\infty)\times[-1,1])}}_{\leq d<1\quad\forall h\textrm{ by Lemma }\ref{lemmaok}}
\\
&\leq A_J+\|\varphi\|_{L^{\infty}}t\int_{\mathbb{T}^1_{2\pi}}d\theta'\int_{t(1-\frac{1}{\alpha})}^{\infty}dt'\int_{-1}^1dh'''\left[f(\theta',t',h|h''')+\frac{c}{2\pi}E^{(2)}(t',h''')\right]+\alpha dC_T^{\varphi}.
\end{align*}}

We study here the second one of these three summands. The first term can be estimated as:

\begin{align*}
\int_{\mathbb{T}^1_{2\pi}}d\theta'\int_{t(1-\frac{1}{\alpha})}^{\infty}t'\int_{-1}^1dh'''f(\theta',t',h|h''')&=\int_{t(1-\frac{1}{\alpha})}^{\infty}dt'\int_{-1}^1dh'''Q^{(2)}(t',h|h''')
\\
&=E^{(2)}\left(\frac{t(\alpha-1)}{\alpha},h\right)
\\
&\leq\frac{\alpha c_2}{t(\alpha-1)}\textrm{ by }\eqref{ubEn}\textrm{ of Lemma \ref{proprEn}}.
\end{align*}

The second part can be estimated as:

\begin{align*}
\int_{\mathbb{T}^1_{2\pi}}d\theta'\int_{t(1-\frac{1}{\alpha})}^{\infty}dt'\int_{-1}^1dh'''\frac{c}{2\pi}E^{(2)}(t',h''')&=c\int_{t(1-\frac{1}{\alpha})}^{\infty}dt'\int_{-1}^1dh'''E^{(2)}(t',h''')
\\
&\leq\frac{\alpha cc_2'}{t(\alpha-1)}\textrm{ by }\eqref{restoEn}\textrm{ of Lemma \ref{proprEn}}.
\end{align*}

Finally we got

\begin{align*}
|t\varphi(\theta,t,h|h')|\leq A_J+\frac{\alpha\|\varphi\|_{L^{\infty}}(c_2+cc_2')}{\alpha-1}+\alpha dC_T^{\varphi}\quad\forall t\leq T.
\end{align*}

Passing the inequality above to the supremum on all variables, we obtain:
{\footnotesize\[
C_T^{\varphi}\leq A_J+\frac{\alpha\|\varphi\|_{L^{\infty}}(c_2+cc_2')}{\alpha-1}+\alpha dC_T^{\varphi},\quad\textrm{ therefore }\quad C_T^{\varphi}\leq\frac{A_J+\frac{\alpha\|\varphi\|_{L^{\infty}}(c_2+cc_2')}{\alpha-1}}{1-\alpha d}\quad\forall T>0,
\]}
i.e.
\[
\sup_{(\theta,t,h|h')\in\mathbb{T}^1_{2\pi}\times[0,+\infty)\times[-1,1]^2}|t\varphi(\theta,t,h|h')|\leq\frac{A_J+\frac{\alpha\|\varphi\|_{L^{\infty}}(c_2+cc_2')}{\alpha-1}}{1-\alpha d},
\]
which concludes the proof.
\end{proof}
We show now how both Lemmas \ref{lemma:phimeglio} and \ref{lemmaok} imply Proposition \ref{prop:musvarphi}.
\newline
\subsubsection{Proof of Proposition \ref{prop:musvarphi}.}
\begin{proof}
The function $\varphi$ in the statement of Proposition \ref{prop:musvarphi} is exactly the function $\varphi$ defined by Lemma \ref{lemma:esphi}. By Lemma \ref{decphi}, such function $\varphi$ satisfies:
\[
|\varphi(\theta,t,h|h')|\leq\frac{C}{t+1}\quad\forall(\theta,t,h|h')\in\mathbb{T}^1_{2\pi}\times[0,+\infty)\times[-1,1]^2,
\]
that is, the estimate in the statement of Proposition \ref{prop:musvarphi}.
\end{proof}

We now prove Lemmas \ref{lemma:phimeglio} and \ref{lemmaok}.
\newline
\subsubsection{Proof of Lemma \ref{lemma:phimeglio}.}
\begin{proof} We split up the proof into three steps.
\newline
\textbf{Step 1:} First we prove here that the function $\varphi$ defined through equation \eqref{espr:phi} of Lemma \ref{lemma:esphi} satisfies the following equality:

\begin{align}
\nonumber\varphi(\theta,t,h|h')&=f(\theta,t,h|h')-\frac{1}{2\pi}E^{(2)}(t,h)
\\
\nonumber&+\int_0^tdt'\int_{-1}^1dh''Q(t-t',h|h'')f(\theta+\pi-2\arcsin(h''),t',h''|h')
\\
\label{passo1}&+\int_{\mathbb{T}^1_{2\pi}}d\theta'\int_0^tdt'\int_{-1}^1dh'''\varphi(\theta',t',h'''|h')f(\theta-\theta',t-t',h|h'''),
\end{align}

with $E^{(2)}$ from Definition \ref{def:En}.

We begin by noticing that we can iterate twice the map defining  $\varphi$ in Lemma \ref{lemma:esphi}. This can be done substituting the whole right-hand side of \eqref{espr:phi} into the $\varphi$ inside the integral. In this way we get

{\small\begin{align}
\label{termvarphi}&\varphi(\theta,t,h|h')
\\
\label{termok}&=f(\theta,t,h|h')+\int_0^tdt'\int_{-1}^1dh''Q(t-t',h|h'')f(\theta+\pi-2\arcsin(h''),t',h''|h')
\\
\label{termE2}&-\frac{1}{2\pi}E(t,h)-\frac{1}{2\pi}\int_0^tdt'\int_{-1}^1dh''Q(t-t',h|h'')E(t',h'')
\\
\label{termphi}&+\int_0^tdt'\int_{-1}^1dh''Q(t-t',h|h'')\int_0^{t'}dt''\int_{-1}^1dh'''Q(t'-t'',h''|h''')\varphi(\tilde\theta,t'',h'''),
\end{align}}

with
\[
\tilde\theta=\tilde\theta(\theta,h'',h'''):=\theta+2\pi-2\arcsin(h'')-2\arcsin(h''').
\]
Now the term in \eqref{termvarphi} is fine, but the one in \eqref{termE2} can be written better than that by using $E^{(2)}$ of Definition \ref{def:En}. Indeed property \eqref{espr:En1} of Lemma \ref{proprEn} implies that

\begin{align*}
\eqref{termE2}=-\frac{1}{2\pi}E^{(2)}(t,h).
\end{align*}

As for the term in \eqref{termphi}, by Fubini-Tonelli Theorem, we can change again the integration order: before we integrate with respect to $(t'',h''',h'')$ and then with respect to $t'$. Then we exchange the names of the variables, this time only $t''$ with $t'$. In this way, we obtain

{\small\begin{align*}
\eqref{termphi}=\int_0^tdt'\int_{-1}^1dh'''\int_{-1}^1dh''\varphi(\tilde\theta,t',h'''|h')\int_0^{t-t'}dt''Q(t-t'-t'',h|h'')Q(t'',h''|h''').
\end{align*}}

After a change of variables
\[
\tilde\theta=\theta+2\pi-2\arcsin(h''')-2\arcsin(h'')=\theta',
\]
in the right-hand side, we obtain

{\tiny\begin{align*}
\eqref{termphi}&=\int_0^tdt'\int_{-1}^1dh'''\int_{\theta+\pi-2\arcsin(h''')}^{\theta+3\pi-2\arcsin(h''')}d\theta'\varphi(\theta',t',h'''|h')f(\theta-\theta',t-t',h|h''')\textrm{ with }f\textrm{ of Definition \ref{defeffe}}
\\
&=\int_{\mathbb{T}^1_{2\pi}}d\theta'\int_0^tdt'\int_{-1}^1dh'''\varphi(\theta',t',h'''|h')f(\theta-\theta',t-t',h|h'''),
\end{align*}}

because the integrand is again periodic in $\theta'$ and therefore any interval of length $2\pi$ is fine.

Summing \eqref{termok} with reformulated \eqref{termE2} and \eqref{termphi}, we get property \eqref{passo1}.

\textbf{Step 2:} We prove now that

{\scriptsize\begin{align}\label{passo2}
\int_{\mathbb{T}^1_{2\pi}}d\theta\int_0^tdt'\int_{-1}^1dh\varphi(\theta,t',h|h')E^{(2)}(t-t',h)
=\int_t^{\infty}dt'\int_{-1}^1dhE^{(2)}(t',h)-E^{(2)}(t,h')-E^{(3)}(t,h').
\end{align}}

To prove property \eqref{passo2}, we compute the integral $\int_{\mathbb{T}^1_{2\pi}}d\theta\int_0^tdt'\int_{-1}^1dh$ of both sides of equation \eqref{passo1}. We focus on the summands in the right-hand side of the identity \eqref{passo1}, since the integral of the left-hand side does not need to be changed.

The first one is immediate:

{\scriptsize\begin{align*}
\int_{\mathbb{T}^1_{2\pi}}d\theta\int_0^tdt'\int_{-1}^1dh f(\theta,s,h|h')&=\int_0^tdt'\int_{-1}^1dh\int_{\mathbb{T}^1_{2\pi}}d\theta f(\theta,t',h|h')\textrm{ by changing integration order}
\\
&=\int_0^tdt'\int_{-1}^1dhQ^{(2)}(t',h|h')\textrm{ by property \eqref{intfQ2}}
\\
&=1-E^{(2)}(t,h')\textrm{ by Definition }\ref{def:En},
\end{align*}}

and the second one is very similar

{\small\begin{align*}
&\int_{\mathbb{T}^1_{2\pi}}d\theta\int_0^tdt'\int_{-1}^1dh\int_0^{t'}dt''\int_{-1}^1dh''Q(t'-t'',h|h'')f(\theta+\pi-2\arcsin(h''),t'',h''|h')
\\
&=\int_0^tdt'\int_{-1}^1dh\int_0^{t'}t''\int_{-1}^1dh''Q(t'-t'',h|h'')\underbrace{\int_{\mathbb{T}^1_{2\pi}}d\theta f(\theta,t'',h''|h')}_{=Q^{(2)}(t'',h''|h')\textrm{ by property \eqref{intfQ2}}}
\\
&=\int_0^tdt'\int_{-1}^1dh\underbrace{\int_0^{t'}dt''\int_{-1}^1dh''Q(t'-t'',h|h'')Q^{(2)}(t'',h''|h')}_{=Q^{(3)}(t',h|h')\textrm{ by Definition }\ref{def:Qn}}
\\
&=\int_0^tdt'\int_{-1}^1dhQ^{(3)}(t',h|h')=1-E^{(3)}(t,h')\textrm{ by Definition }\ref{def:En}.
\end{align*}}

Notice that by property \eqref{intEn} of Lemma \ref{proprEn}, $E^{(2)}$ has integral $2$ with respect to variables $(s,h)$, so the third term of the sum above writes as:

\begin{align*}
\int_{\mathbb{T}^1_{2\pi}}d\theta\int_0^tdt'\int_{-1}^1dh\frac{1}{2\pi}E^{(2)}(t',h)=2-\int_t^{\infty}dt'\int_{-1}^1dhE^{(2)}(t',h).
\end{align*}

The integral of the fourth summand instead can be written as

{\tiny\begin{align*}
&\int_{\mathbb{T}^1_{2\pi}}d\theta\int_0^tdt'\int_{-1}^1dh\int_0^{t'}dt''\int_{-1}^1dh'''\int_{\mathbb{T}^1_{2\pi}}d\theta'\varphi(\theta',t'',h'''|h')f(\theta-\theta',t'-t'',h|h''')
\\
&=\int_0^tdt'\int_{-1}^1dh\int_0^{t'}dt''\int_{-1}^1dh'''\int_{\mathbb{T}^1_{2\pi}}d\theta'\varphi(\theta',t'',h'''|h')\underbrace{\int_{\mathbb{T}^1_{2\pi}}d\theta f(\theta-\theta',t'-t'',h|h''')}_{=Q^{(2)}(t'-t'',h|h''') \textrm{ by property \eqref{intfQ2}}}
\\
&=\int_{\mathbb{T}^1_{2\pi}}d\theta'\int_0^tdt''\int_{-1}^1dh'''\varphi(\theta',t'',h'''|h')\int_{t''}^tdt'\int_{-1}^1dhQ^{(2)}(t'-t'',h|h''')
\\
&=\int_{\mathbb{T}^1_{2\pi}}d\theta'\int_0^tdt''\int_{-1}^1dh'''\varphi(\theta',t'',h'''|h')\left(1-\int_{t-t''}^{\infty}dt'\int_{-1}^1dhQ^{(2)}(t',h|h''')\right)\textrm{ by \eqref{Qnint1} of Lemma \ref{proprQn}}
\\
&=\int_{\mathbb{T}^1_{2\pi}}d\theta\int_0^tdt''\int_{-1}^1dh'''\varphi(\theta,t'',h'''|h')(1-E^{(2)}(t-t'',h'''))\textrm{ by Definition \ref{def:En}}.
\end{align*}}

Summing upon all the integral of the terms in the right-hand side of equation \eqref{passo1} we get equation \eqref{passo2}.

\textbf{Step 3:} we can conclude. If we subtract and add back to both sides of \eqref{passo1} the quantity

\begin{align*}
\frac{c}{2\pi}\int_{\mathbb{T}^1_{2\pi}}d\theta'\int_0^tdt'\int_{-1}^1dh'''\varphi(\theta',t',h'''|h')E^{(2)}(t-t',h'''),
\end{align*}

where $c\in\mathbb{R}$, we get

{\footnotesize\begin{align*}
\varphi(\theta,t,h|h')&=f(\theta,t,h|h')-\frac{1}{2\pi}E^{(2)}(t,h)
\\
&+\int_0^tdt'\int_{-1}^1dh''Q(t-t',h|h'')f(\theta+\pi-2\arcsin(h''),t',h''|h')
\\
&+\int_{\mathbb{T}^1_{2\pi}}d\theta'\int_0^tdt'\int_{-1}^1dh'''\varphi(\theta',t',h'''|h')f(\theta-\theta',t-t',h|h''')
\\
&=f(\theta,t,h|h')-\frac{1}{2\pi}E^{(2)}(t,h)
\\
&+\int_0^tdt'\int_{-1}^1dh''Q(t-t',h|h'')f(\theta+\pi-2\arcsin(h''),t',h''|h')
\\
&+\int_{\mathbb{T}^1_{2\pi}}d\theta'\int_0^tdt'\int_{-1}^1dh'''\varphi(\theta',t',h'''|h')\left[f(\theta-\theta',t-t',h|h''')-\frac{c}{2\pi}E^{(2)}(t-t',h''')\right]
\\
&+\frac{c}{2\pi}\underbrace{\int_{\mathbb{T}^1_{2\pi}}d\theta'\int_0^tdt'\int_{-1}^1dh'''\varphi(\theta',t',h'''|h')E^{(2)}(t-t',h''')}_{=\eqref{passo2}},
\end{align*}}

and this concludes the proof of Lemma \ref{lemma:phimeglio}.
\end{proof}
Now prove Lemma \ref{lemmaok}.
\newline
\subsubsection{Proof of Lemma \ref{lemmaok}.}
\begin{proof} We denote here $f^h(\theta,t,h')=f(\theta,t,h|h')$, and we also write $\int d\theta dtdh':=\int_{\mathbb{T}^1_{2\pi}}d\theta\int_0^{\infty}dt\int_{-1}^1dh'$. So we get

{\small\begin{align*}
\left\|f^h-\frac{c}{2\pi}E^{(2)}\right\|_{L^1}&=\int_{f^h>\frac{c}{2\pi}E^{(2)}}d\theta dtdh' f^h-\frac{c}{2\pi}\int_{f^h>\frac{c}{2\pi}E^{(2)}}d\theta dtdh' E^{(2)}
\\
&-\int_{f^h\leq\frac{c}{2\pi}E^{(2)}}d\theta dtdh' f^h+\frac{c}{2\pi}\int_{f^h\leq\frac{c}{2\pi}E^{(2)}}d\theta dtdh' E^{(2)}
\\
&=\underbrace{\int d\theta dtdh' f^h}_{=1\textrm{ by }\eqref{fint1dinuovo}\textrm{ of Lemma \ref{propreffe}}}-2\underbrace{\int_{f^h\leq\frac{c}{2\pi}E^{(2)}}d\theta dtdh' f^h}_{\geq0}
\\
&-c\left[\frac{2}{2\pi}\int_{f^h>\frac{c}{2\pi}E^{(2)}}d\theta dtdh' E^{(2)}-\underbrace{\frac{1}{2\pi}\int d\theta dtdh' E^{(2)}}_{=2\textrm{  by \eqref{intEn} of Lemma \ref{proprEn}}}\right]
\\
&\leq1-2c\left[\frac{1}{2\pi}\int_{f^h>\frac{c}{2\pi}E^{(2)}}d\theta dtdh' E^{(2)}-1\right]
\\
&=1-2c\left[\frac{1}{2\pi}\int_{f^h>0}d\theta dtdh' E^{(2)}-1\right]+2c\int_{0<f^h\leq\frac{c}{2\pi}E^{(2)}} d\theta dtdh' E^{(2)}.
\end{align*}}

Now we first prove that the second of these three summands divided by $2c$ is uniformly negative for $h\in[-1,1]$ (next \textbf{Step 1}), and then we prove that the third one divided by $c$ is infinitesimal as $c\to0$ (next \textbf{Step 2}). Once done this, we could conclude, indeed we would have

{\scriptsize\begin{align}
\label{anteconclusione}\left\|f^h-\frac{c}{2\pi}E^{(2)}\right\|_{L^1}&\leq1-2c\underbrace{\left[\frac{1}{2\pi}\int_{f^h>0}d\theta dtdh' E^{(2)}-1\right]}_{\geq C>0\quad\forall h\in[-1,1]\textrm{ by using }\textbf{Step 1}}+2c\underbrace{\int d\theta dtdh'_{0<f^h\leq\frac{c}{2\pi}E^{(2)}}E^{(2)}}_{\leq\frac{C}{2}\textrm{ for }c\leq\bar c<<1\quad\forall h\in[-1,1]\textrm{ by }\textbf{Step 2}}
\\
\label{conclusione}&\leq1-cC\textrm{ for }c<<1.
\end{align}}

Therefore we begin with the proof of \textbf{Step 1}: we have to prove that the first summand in \eqref{anteconclusione} is uniformly positive as $c\to0$.

\textbf{Step 1:} we prove here that

\begin{align}\label{derivataneg}
\inf_{h\in[-1,1]}\frac{1}{2\pi}\int_{f^h>0}d\theta dtdh' E^{(2)}-1>0.
\end{align}

The easier way to prove this is by finding a domain
\[
\Omega\subseteq\left\{(\theta,t,h'):t\geq0,h\in[-1,1],\theta\in[2\arcsin(h')-3\pi,2\arcsin(h')-\pi]\right\},
\]
where each $f^h$ is positive. We are abusing a little bit the notation, since we are treating $f^h$ as if it was defined for $\theta\in[2\arcsin(h)-3\pi,2\arcsin(h)-\pi]$. To find such a domain, we notice that

\begin{align}\label{inf1}
\frac{\partial h''(\theta,h')}{\partial\theta}>0\quad\forall\theta\in(2\arcsin(h')-3\pi,2\arcsin(h')-\pi),h'\in[-1,1],
\end{align}

that is, the Jacobian determinant does not affect the region where $f^h$ is positive.

We also observe that for $t\in(0,\frac{1}{2}]$ we have

\begin{align}\label{inf2}
\int_0^tdt' \underbrace{Q(t-t',h|h'')}_{=\frac{6}{\pi^2}\textrm{ since }t-t'\leq\frac{1}{2}}\underbrace{Q(t',h''|h')}_{=\frac{6}{\pi^2}\textrm{ since }t'\leq\frac{1}{2}}=\frac{36}{\pi^4}t>0,
\end{align}

and for $t\in(\frac{1}{2},1)$ we have

{\footnotesize\begin{align}\label{inf2bis}
\int_0^tdt' Q(t-t',h|h'')Q(t',h''|h')\geq\int_{t-\frac{1}{2}}^{\frac{1}{2}}dt' \underbrace{Q(t-t',h|h'')}_{=\frac{6}{\pi^2}\textrm{ since }t-t'\leq\frac{1}{2}}\underbrace{Q(t',h''|h')}_{=\frac{6}{\pi^2}\textrm{ since }t'\leq\frac{1}{2}}=\frac{36}{\pi^4}(1-t)>0.
\end{align}}

Combining the properties \eqref{inf1}, \eqref{inf2} and \eqref{inf2bis}, we get
\[
f^h(\theta,t,h')>0\quad\forall t\in(0,1),\theta\in(2\arcsin(h')-3\pi,2\arcsin(h')-\pi),h,h'\in[-1,1].
\]
Therefore, if we define
\[
\Omega:=\left\{(\theta,t,h'):t\in(0,1),h'\in[-1,1],\theta\in(2\arcsin(h')-3\pi,2\arcsin(h')-\pi)\right\},
\]
we obtain

\begin{align}
\nonumber\frac{1}{2\pi}\int_{f^h>0}d\theta dtdh' E^{(2)}&\geq\frac{1}{2\pi}\int_{\Omega}d\theta dtdh' E^{(2)}
\\
\nonumber&=\frac{1}{2\pi}\int_0^1dt\int_{-1}^1dh'\int_{2\arcsin(h')-3\pi}^{2\arcsin(h')-\pi}d\theta E^{(2)}(t,h')
\\
\label{lb}&=\int_0^1dt\int_{-1}^1dh'E^{(2)}(t,h').
\end{align}

Now we have to bound from below the quantity \eqref{lb}. To this end, we observe that since $Q\leq\frac{6}{\pi^2}$, then

{\small\begin{align}
\label{ub} Q^{(2)}(t,h'|h'')=\int_0^tdt'\int_{-1}^1dh'''Q(t-t',h'|h''')Q(t',h'''|h'')\leq\int_0^tdt'\int_{-1}^1dh''\frac{36}{\pi^4}=\frac{72}{\pi^4}t,
\end{align}}

and therefore, as for the right-hand side of \eqref{lb}, we have

\begin{align*}
\int_0^1dt\int_{-1}^1dh'E^{(2)}(t,h')&=\int_0^1dt\int_{-1}^1dh'\int_t^{\infty}dt'\int_{-1}^1dh''Q^{(2)}(t',h'|h'')
\\
&=\int_0^{\infty}dt'\int_{-1}^1dh'\int_{-1}^1dh''Q^{(2)}(t',h'|h'')\min\{1,t'\}
\\
&=\int_0^1dt'\int_{-1}^1dh'\int_{-1}^1dh''Q^{(2)}(t',h'|h'')t'
\\
&+\underbrace{\int_1^{\infty}dt'\int_{-1}^1dh'\int_{-1}^1dh''Q^{(2)}(t',h'|h'')}_{=2-\int_0^1dt'\int_{-1}^1dh'\int_{-1}^1dh''Q^{(2)}(t',h'|h'')\textrm{ by property }\eqref{Qnint1}}
\\
&=2-\int_0^1dt'\int_{-1}^1dh'\int_{-1}^1dh''\underbrace{Q^{(2)}(t',h'|h'')}_{\leq\frac{72}{\pi^4}t'\textrm{ thanks to }\eqref{ub}}(1-t')
\\
&\geq2-\int_0^1dt'\int_{-1}^1dh'\int_{-1}^1dh''\frac{72}{\pi^4}t'(1-t')
\\
&=2-\frac{288}{\pi^4}\frac{1}{6}=2-\frac{48}{\pi^4}>1,
\end{align*}

and this concludes the proof of property \eqref{derivataneg}, i.e., \textbf{Step 1}.

Now we can prove \textbf{Step 2}, that is, the second summand in \eqref{anteconclusione} divided by $c$ is infinitesimal as $c\to0$.

\textbf{Step 2: } we want to prove that

\begin{align*}
\sup_{h\in[-1,1]}\int_{0<f^h\leq\frac{c}{2\pi}E^{(2)}}d\theta dtdh'E^{(2)}\xrightarrow[c\to0]{}0.
\end{align*}

Arguing by contradiction, there would exist $\varepsilon>0$, a sequence $c_n\in[0,\frac{1}{n}]$ and another sequence $h^n\in[-1,1]$ such that

\begin{align*}
\varepsilon\leq\int_{0< f^{h^n}\leq\frac{c_n}{2\pi}E^{(2)}}E^{(2)}.
\end{align*}

By compactness, there would exist a subsequence ${h}^{n_k}$ of $\{h^n\}$ converging to $\bar h\in[-1,1]$ such that
\[
E^{(2)}\mathbbm{1}_{0< f^{h^{n_k}}\leq \frac{c_{n_k}}{2\pi}}\xrightarrow[k\to+\infty]{}E^{(2)}\mathbbm{1}_{0< f^{\bar h}\leq0}=0\textrm{ everywhere in }(\theta,t,h'),
\]
moreover $E^{(2)}\mathbbm{1}_{0<f^{h^{n_k}}\leq \frac{c_{n_k}}{2\pi}}\leq E^{(2)}$, and therefore this can not occur for dominated convergence of $\{E^{(2)}\mathbbm{1}_{0<f^{h^{n_k}}\leq \frac{c_{n_k}}{2\pi}}\}_k$.

By \eqref{conclusione}, that concludes the proof.
\end{proof}
\subsection{Proof of Theorem \ref{thm:convergenza_v,s,h}.}
Now we have all the intermediate results that we need to prove Theorem \ref{thm:convergenza_v,s,h}.
\begin{proof}
The first result we prove is the statement \eqref{thm1:st1} of the Theorem. That is, we want to prove that for any function $\mu_0\in L^1\cap L^p(\mathbb{T}^1_{2\pi}\times[0,+\infty)\times[-1,1])$

{\tiny\begin{align*}
\left\|\mu_t-\frac{\langle\mu_0\rangle}{2\pi}E\right\|_{L^p(\mathbb{T}^1_{2\pi}\times[0,+\infty)\times[-1,1])}&\leq C\frac{\|\mu_0\|_{L^1(\mathbb{T}^1_{2\pi}\times[0,+\infty)\times[-1,1])}+\|\mu_0\|_{L^p(\mathbb{T}^1_{2\pi}\times[0,+\infty)\times[-1,1])}}{t+1}
\\
&+C\left[\|\mu_0\|_{L^1(\mathbb{T}^1_{2\pi}\times[t/4,+\infty)\times[-1,1])}+\|\mu_0\|_{L^p(\mathbb{T}^1_{2\pi}\times[t/4,+\infty)\times[-1,1])}\right].
\end{align*}}

We prove this only for $p\in[1,+\infty)$, since the proof of the statement in the case $p=\infty$ follows in the same way as the proof for finite $p$. When not better specified, in this proof we will shorten 
\[
\|\cdot\|_{L^p}:=\|\cdot\|_{L^p(\mathbb{T}^1_{2\pi}\times[0,+\infty)\times[-1,1])},
\]
and we will often use also the notation
\[
\int d\theta dsdh:=\int_{\mathbb{T}^1_{2\pi}}d\theta\int_0^{\infty}ds\int_{-1}^1dh.
\]
To begin with, we recall Proposition \ref{prop:musvarphi}, and inserting it into \eqref{rapp:ev_x} we get

{\scriptsize\begin{align}
\nonumber&\mu_t(\theta,s,h)
\\
\nonumber&=\mu_0(\theta,s+t,h)+\int_0^tdt'\int_{-1}^1dh'Q(s+t-t',h|h')\mu_0(\theta+\pi-2\arcsin(h'),t',h')
\\
\nonumber&+\int_0^tdt'\int_{-1}^1dh'Q(s+t-t',h|h')\int_0^{t'}dt''\int_{-1}^1dh'' Q(t'-t'',h'|h'')\mu_0(\theta_1,t'',h'')
\\
\label{addendofico}&+\frac{1}{2\pi}\int_0^tdt'\int_{-1}^1dh'Q(s+t-t',h|h')\int_{\mathbb{T}^1_{2\pi}}d\theta'\int_0^{t'}dt''\int_{-1}^1dh''\mu_0(\theta',t'',h'')
\\
\nonumber&+\int_0^tdt'\int_{-1}^1dh'Q(s+t-t',h|h')\int_{\mathbb{T}^1_{2\pi}}d\theta'\int_0^{t'}dt''\int_{-1}^1dh''\varphi(\theta_2-\theta',t'-t'',h'|h'')\mu_0(\theta',t'',h''),
\end{align}}

with
\begin{align*}
\theta_1&=\theta_1(\theta,h',h''):=\theta+2\pi-2\arcsin(h')-2\arcsin(h''),
\\
\theta_2&=\theta_2(\theta,h'):=\theta+\pi-2\arcsin(h').
\end{align*}

Since the fourth summand in \eqref{addendofico} writes as

\begin{align*}
\eqref{addendofico}&=\frac{1}{2\pi}\int_0^tdt''\int_{-1}^1dh''\int_{\mathbb{T}^1_{2\pi}}d\theta'\mu_0(\theta',t'',h'')\underbrace{\int_{t''}^tdt'\int_{-1}^1dh'Q(s+t-t',h|h')}_{=E(s,h)-E(s+t-t'',h)}
\\
&=\frac{1}{2\pi}\int_{\mathbb{T}^1_{2\pi}}d\theta'\int_0^tdt'\int_{-1}^1dh''\mu_0(\theta',t',h'')[E(s,h)-E(s+t-t',h)],
\end{align*}

 we have
 
{\scriptsize\begin{align}
\nonumber&\mu_t(\theta,s,h)
\\
\label{addendo1}&=\mu_0(\theta,s+t,h)
\\
\label{addendo2}&+\int_0^tdt'\int_{-1}^1dh'Q(s+t-t',h|h')\mu_0(\theta+\pi-2\arcsin(h'),t',h')
\\
\label{addendo3}&+\int_0^tdt'\int_{-1}^1dh'Q(s+t-t',h|h')\int_0^{t'}dt''\int_{-1}^1dh''Q(t'-t'',h'|h'')\mu_0(\theta_1,t'',h'')
\\
\label{addendo4}&+\frac{E(s,h)}{2\pi}\int_{\mathbb{T}^1_{2\pi}}d\theta'\int_0^tdt'\int_{-1}^1dh''\mu_0(\theta',t',h'')
\\
\label{addendo5}&-\frac{1}{2\pi}\int_0^tdt' E(s+t-t',h)\int_{-1}^1dh''\int_{\mathbb{T}^1_{2\pi}}d\theta'\mu_0(\theta',t',h'')
\\
\label{addendo6}&+\int_0^tdt'\int_{-1}^1dh'Q(s+t-t',h|h')\int_{\mathbb{T}^1_{2\pi}}d\theta'\int_0^{t'}dt''\int_{-1}^1dh''\varphi(\theta_2-\theta',t'-t'',h'|h'')\mu_0(\theta',t'',h'').
\end{align}}

To prove inequality \eqref {thm1:st1} in the first part of the Theorem we have to estimate all the terms in the expression above.

As for the first one we have

\begin{align*}
\|\eqref{addendo1}\|_{L^p}^p&=\int_{\mathbb{T}^1_{2\pi}}d\theta\int_0^{\infty}ds\int_{-1}^1dh|\mu_0(\theta,s+t,h)|^p
\\
&=\int_{\mathbb{T}^1_{2\pi}}d\theta\int_t^{\infty}ds\int_{-1}^1dh|\mu_0(\theta,s,h)|^p=\|\mu_0\|_{L^p(\mathbb{T}^1_{2\pi}\times[t,+\infty)\times[-1,1])}^p,
\end{align*}

and therefore the estimate in \eqref{thm1:st1} applies to it.

A very simple but crucial inequality in the next estimates is the following: if $f\geq0$, then by Jensen's inequality applied when the measure is $dxf(x)$ over the set $X$ we have

\begin{align*}
\left|\int_{X}dxf(x)g(x)\right|^p\leq\int_{X}dxf(x)|g(x)|^p\left(\int_{X}dxf(x)\right)^{p-1}.
\end{align*}

Applying Jensen's inequality with $f=Q$, $g=\mu_0$ and $X=[t_1,t_2]\times[-1,1]$, one gets

{\small\begin{align}
\nonumber&\left|\int_{t_1}^{t_2}dt'\int_{-1}^1dh'Q(t-t',h|h')\mu_0(\theta,t',h')\right|
\\
\nonumber&\leq\int_{t_1}^{t_2}dt'\int_{-1}^1dh'Q(t-t',h|h')|\mu_0(\theta,t',h')|^p\left(\int_{t_1}^{t_2}dt'\int_{-1}^1dh'Q(t-t',h|h')\right)^{p-1}
\\
\label{Jensencons}&=\int_{t_1}^{t_2}dt'\int_{-1}^1dh'Q(t-t',h|h')|\mu_0(\theta,t',h')|^p\left(E(t-t_2,h)-E(t-t_1,h)\right)^{p-1}.
\end{align}}

Going back to the proof, the second term in the sum can be bounded as follows

{\tiny\begin{align}
\nonumber\|\eqref{addendo2}\|_{L^p}&\leq\left(\int_{\mathbb{T}^1_{2\pi}}d\theta\int_0^{\infty}ds\int_{-1}^1dh\left|\int_0^tdt'\int_{-1}^1dh'Q(s+t-t',h|h')|\mu_0(\theta+\pi-2\arcsin(h'),t',h')|\right|^p\right)^{\frac{1}{p}}
\\
\label{addendo2first}&\leq\left(\int_{\mathbb{T}^1_{2\pi}}d\theta\int_0^{\infty}ds\int_{-1}^1dh\left|\int_0^{\frac{t}{2}}dt'\int_{-1}^1dh'Q(s+t-t',h|h')|\mu_0(\theta+\pi-2\arcsin(h'),t',h')|\right|^p\right)^{\frac{1}{p}}
\\
\label{addendo2second}&+\left(\int_{\mathbb{T}^1_{2\pi}}d\theta\int_0^{\infty}ds\int_{-1}^1dh\left|\int_{\frac{t}{2}}^tdt'\int_{-1}^1dh'Q(s+t-t',h|h')|\mu_0(\theta+\pi-2\arcsin(h'),t',h')|\right|^p\right)^{\frac{1}{p}}.
\end{align}}

We shall estimate separately \eqref{addendo2first} and \eqref{addendo2second}. As for the first one, thanks to Jensen's inequality used as in \eqref{Jensencons}, we have

{\small\begin{align*}
&\eqref{addendo2first}^p\leq\int_{\mathbb{T}^1_{2\pi}}d\theta\int_0^{\frac{t}{2}}dt'\int_{-1}^1dh'|\mu_0(\theta,t',h')|^p\int_0^{\infty}ds\int_{-1}^1dhQ(s+t-t',h|h')\xi_0
\end{align*}}

with

\begin{align*}
\xi_0&=\xi_0(s,t,h):=\left(E(s+\frac{t}{2},h)-E(s+t,h)\right)^{p-1}\leq E(s+\frac{t}{2},h)^{p-1}
\\
&\leq\frac{(2C)^{p-1}}{(t+2)^{p-1}}\textrm{ by Lemma \ref{decE}}.
\end{align*}

Hence

{\small\begin{align*}\eqref{addendo2first}^p&\leq\frac{(2C)^{p-1}}{(t+2)^{p-1}}\int_{\mathbb{T}^1_{2\pi}}d\theta\int_0^{\frac{t}{2}}dt'\int_{-1}^1dh'\underbrace{E(t-t',h')}_{\leq\frac{C}{t-t'+1}\leq\frac{2C}{t+2}\textrm{ by Lemma \ref{decE}}}|\mu_0(\theta,{t'},h')|^p
\\
&\leq\frac{(2C)^p}{(t+2)^p}\|\mu_0\|_{L^p(\mathbb{T}^1_{2\pi}\times[0,+\infty)\times[-1,1])}^p,
\end{align*}}

while the second one satisfies

{\tiny\begin{align*}
&\eqref{addendo2second}^p
\\
&\leq\int_{\mathbb{T}^1_{2\pi}}d\theta\int_{\frac{t}{2}}^tdt'\int_{-1}^1dh'|\mu_0(\theta,t',h')|^p\int_0^{\infty}ds\int_{-1}^1dhQ(s+t-t',h|h')\underbrace{\left(E(s,h)-E(s+\frac{t}{2},h)\right)^{p-1}}_{\leq1}
\\
&\leq\int_{\mathbb{T}^1_{2\pi}}d\theta\int_{\frac{t}{2}}^tdt'\int_{-1}^1dh'\underbrace{E(t-t',h')}_{\leq1}|\mu_0(\theta,{t'},h')|^p\leq\|\mu_0\|_{L^p(\mathbb{T}^1_{2\pi}\times[\frac{t}{2},+\infty)\times[-1,1])}^p.
\end{align*}}

which ends the estimate of \eqref{addendo2} since
\[
\|\eqref{addendo2}\|_{L^p}\leq\eqref{addendo2first}+\eqref{addendo2second}\leq\frac{2C}{t+2}\|\mu_0\|_{L^p}+\|\mu_0\|_{L^p(\mathbb{T}^1_{2\pi}\times[\frac{t}{2},+\infty)\times[-1,1])}.
\]
Then, as for \eqref{addendo3}, if we split $\{t'\in[0,t]\}$ in $\{t'\in[0,\frac{t}{2}]\}\cup\{t'\in[\frac{t}{2},t]\}$ and $\{t''\in[0,t']\}$ in $\{t''\in[0,\frac{t'}{2}]\}\cup\{t''\in[\frac{t'}{2},t']\}$, we get

{\tiny\begin{align}
\nonumber\|&\eqref{addendo3}\|_{L^p}
\\
\nonumber&\leq\left\|\int_0^tdt'\int_{-1}^1dh'Q(\cdot+t-t',\cdot|h')\int_0^{t'}dt''\int_{-1}^1dh''Q(t'-t'',h'|h'')|\mu_0(\theta_1,t'',h'')|\right\|_{L^p}
\\
\label{addendo3first}&\leq\left(\int d\theta dsdh\left|\int_0^{\frac{t}{2}}dt'\int_{-1}^1dh'Q(s+t-t',h|h')\int_0^{t'}dt''\int_{-1}^1dh''Q(t'-t'',h'|h'')|\mu_0(\theta_1,t'',h'')|\right|^p\right)^{\frac{1}{p}}
\\
\label{addendo3second}&+\left(\int d\theta dsdh\left|\int_{\frac{t}{2}}^tdt'\int_{-1}^1dh'Q(s+t-t',h|h')\int_0^{\frac{t'}{2}}dt''\int_{-1}^1dh''Q(t'-t'',h'|h'')|\mu_0(\theta_1,t'',h'')|\right|^p\right)^{\frac{1}{p}}
\\
\label{addendo3third}&+\left(\int d\theta dsdh\left|\int_{\frac{t}{2}}^tdt'\int_{-1}^1dh'Q(s+t-t',h|h')\int_{\frac{t'}{2}}^{t'}dt''\int_{-1}^1dh''Q(t'-t'',h'|h'')|\mu_0(\theta_1,t'',h'')|\right|^p\right)^{\frac{1}{p}}.
\end{align}}

with $\theta_1$ as before. Now estimate the three summands separately.

Let us begin with the first one: using twice Jensen's inequality \eqref{Jensencons}, one gets

{\tiny\begin{align*}
\eqref{addendo3first}^p\leq\int d\theta dsdh\int_0^{\frac{t}{2}}dt'\int_{-1}^1dh'Q(s+t-t',h|h')\int_0^{t'}dt''\int_{-1}^1dh''Q(t'-t'',h'|h'')|\mu_0(\theta,t'',h'')|^p\xi_1,
\end{align*}}

with

\begin{align*}
\xi_1&=\xi_1(s,t,h,t',h'):=\left(E(s+\frac{t}{2},h)-E(s+t,h)\right)^{p-1}(1-E(t',h'))^{p-1}
\\
&\leq E(s+\frac{t}{2},h)^{p-1}\leq\frac{(2C)^{p-1}}{(t+2)^{p-1}}\textrm{ by Lemma \ref{decE}}.
\end{align*}

Therefore

{\footnotesize\begin{align*}
\eqref{addendo3first}^p\leq\frac{(2C)^{p-1}}{(t+2)^{p-1}}\int_{\mathbb{T}^1_{2\pi}}d\theta\int_0^{\frac{t}{2}}dt''\int_{-1}^1dh''|\mu_0(\theta,t'',h'')|^p\int_{t''}^{\frac{t}{2}}dt'\int_{-1}^1dh''Q(t'-t'',h'|h'')\xi_{1,2}
\end{align*}}

with

\begin{align*}
\xi_{1,2}&=\xi_{1,2}(t,t',h'):=\int_0^{\infty}ds\int_{-1}^1dhQ(s+t-t',h|h')=E(t-t',h')\leq E(\frac{t}{2},h')
\\
&\leq\frac{2C}{t+2}\textrm{ by Lemma \ref{decE}}.
\end{align*}

Hence
{\small\begin{align*}
\eqref{addendo3first}^p&\leq\frac{(2C)^p}{(t+2)^p}\int_{\mathbb{T}^1_{2\pi}}d\theta\int_0^{\frac{t}{2}}dt''\int_{-1}^1dh''|\mu_0(\theta,t'',h'')|^p\underbrace{\int_{t''}^{\frac{t}{2}}dt'\int_{-1}^1dh''Q(t'-t'',h'|h'')}_{\leq1}
\\
&\leq\frac{(2C)^p}{(t+2)^p}\|\mu_0\|_{L^p}^p.
\end{align*}}

As for \eqref{addendo3second}, we use twice Jensen's inequality \eqref{Jensencons} to get

{\scriptsize\begin{align*}
\eqref{addendo3second}^p\leq\int d\theta dsdh\int_{\frac{t}{2}}^tdt'\int_{-1}^1dh'Q(s+t-t',h|h')\int_0^{\frac{t'}{2}}dt''\int_{-1}^1dh''Q(t'-t'',h'|h'')|\mu_0(\theta,t'',h'')|^p\xi_2,
\end{align*}}

with

\begin{align*}
\xi_2&=\xi_2(s,t,h,t',h'):=\left(E(s,h)-E\left(s+\frac{t}{2},h\right)\right)^{p-1}\left(E\left(\frac{t'}{2},h'\right)-E(t',h')\right)^{p-1}
\\
&\leq E\left(\frac{t'}{2},h'\right)^{p-1}\leq\frac{(2C)^{p-1}}{(t'+2)^{p-1}}\leq\frac{(4C)^{p-1}}{(t+4)^{p-1}}\textrm{ by Lemma \ref{decE}}.
\end{align*}

Therefore, if
\[
\bar t:=\max\left\{2t'',\frac{t}{2}\right\},
\]
we have

{\tiny\begin{align*}
&\eqref{addendo3second}^p
\\
&\leq\frac{(4C)^{p-1}}{(t+4)^{p-1}}\int_{\mathbb{T}^1_{2\pi}}d\theta\int_0^tdt''\int_{-1}^1dh''|\mu_0(\theta,t'',h'')|^p\int_{\bar t}^tdt'\int_{-1}^1dh'E(t-t',h')\underbrace{Q(t'-t'',h'|h'')}_{\leq\frac{2C}{t'+2}\leq\frac{4C}{t+4}\textrm{ from \eqref{decQ}}}
\\
&\leq\frac{(4C)^p}{(t+4)^p}\int_{\mathbb{T}^1_{2\pi}}d\theta\int_0^tdt''\int_{-1}^1dh''|\mu_0(\theta,t'',h'')|^p\underbrace{\int_{\max\{2t'',\frac{t}{2}\}}^tdt'\int_{-1}^1dh'E(t-t',h')}_{\leq1}
\\
&\leq\frac{(4C)^p}{(t+4)^p}\|\mu_0\|_{L^p}^p.
\end{align*}}

For \eqref{addendo3third}, applying twice Jensen's inequality as in \eqref{Jensencons}, we also have

{\tiny\begin{align*}
\eqref{addendo3third}^p\leq\int d\theta dsdh\int_{\frac{t}{2}}^tdt'\int_{-1}^1dh'Q(s+t-t',h|h')\int_{\frac{t'}{2}}^{t'}dt''\int_{-1}^1dh''Q(t'-t'',h'|h'')|\mu_0(\theta,t'',h'')|^p\xi_3,
\end{align*}}

with

\begin{align*}
\xi_3=\xi_3(s,t,h,t',h'):=\left(E(s,h)-E\left(s+\frac{t}{2},h\right)\right)^{p-1}\left(1-E\left(\frac{t'}{2},h'\right)\right)^{p-1}\leq1.
\end{align*}

Hence

{\scriptsize\begin{align*}
\eqref{addendo3third}^p&\leq\int_{\mathbb{T}^1_{2\pi}}d\theta\int_{\frac{t}{4}}^tdt''\int_{-1}^1dh''|\mu_0(\theta,t'',h'')|^p\underbrace{\int_{\max\{\frac{t}{2},t''\}}^{\min\{t,2t''\}}dt'\int_{-1}^1dh'\underbrace{E(t-t',h')}_{\leq1}Q(t'-t'',h'|h'')}_{\leq1}
\\
&\leq\|\mu_0\|_{L^p(\mathbb{T}^1_{2\pi}\times[\frac{t}{4},+\infty)\times[-1,1])}^p.
\end{align*}}

Thus 
{\small\[
\|\eqref{addendo3}\|_{L^p}\leq\eqref{addendo3first}+\eqref{addendo3second}+\eqref{addendo3third}\leq\left[\frac{2C}{t+2}+\frac{4C}{t+4}\right]\|\mu_0\|_{L^p}+\|\mu_0\|_{L^p(\mathbb{T}^1_{2\pi}\times[\frac{t}{4},+\infty)\times[-1,1])},
\]}
and we have ended with the estimate of \eqref{addendo3}.

Now we shall bound \eqref{addendo4}, and, since $E\leq1$ implies $E^p\leq E$, we get

{\small\begin{align*}
&\left\|\eqref{addendo4}-\frac{\langle\mu_0\rangle}{2\pi}E\right\|_{L^p}^p
\\
&=\frac{1}{2\pi}\int_{\mathbb{T}^1_{2\pi}}d\theta\int_0^{\infty}ds\int_{-1}^1dh\left|\int_{\mathbb{T}^1_{2\pi}}d\theta'\int_0^tdt'\int_{-1}^1dh''\mu_0(\theta',{t'},h'')-\langle\mu_0\rangle\right|^pE(s,h)^p
\\
&=\left|\int_{\mathbb{T}^1_{2\pi}}d\theta'\int_0^tdt'\int_{-1}^1dh''\mu_0(\theta',{t'},h'')-\langle\mu_0\rangle\right|^p\int_0^{\infty}ds\int_{-1}^1dhE(s,h)
\\
&\leq\left(\int_{\mathbb{T}^1_{2\pi}}d\theta'\int_t^{\infty}dt'\int_{-1}^1dh''|\mu_0(\theta',{t'},h'')|\right)^p=\|\mu_0\|_{L^1(\mathbb{T}^1_{2\pi}\times[t,+\infty)\times[-1,1])}^p.
\end{align*}}

For the fifth summand one can write

{\footnotesize\begin{align}
\nonumber\|\eqref{addendo5}\|_{L^p}&\leq\left(\int_{\mathbb{T}^1_{2\pi}}d\theta\int_0^{\infty}ds\int_{-1}^1dh\left|\frac{1}{2\pi}\int_0^tdt' E(s+t-t',h)\int_{-1}^1dh''\int_{\mathbb{T}^1_{2\pi}}d\theta'|\mu_0(\theta',{t'},h'')|\right|^p\right)^{\frac{1}{p}}
\\
\label{addendo5first}&\leq(2\pi)^{\frac{1}{p}-1}\left(\int_0^{\infty}ds\int_{-1}^1dh\left|\int_0^{\frac{t}{2}}dt' E(s+t-t',h)\int_{-1}^1dh''\int_{\mathbb{T}^1_{2\pi}}d\theta'|\mu_0(\theta',{t'},h'')|\right|^p\right)^{\frac{1}{p}}
\\
\label{addendo5second}&+(2\pi)^{\frac{1}{p}-1}\left(\int_0^{\infty}ds\int_{-1}^1dh\left|\int_{\frac{t}{2}}^tdt' E(s+t-t',h)\int_{-1}^1dh''\int_{\mathbb{T}^1_{2\pi}}d\theta'|\mu_0(\theta',{t'},h'')|\right|^p\right)^{\frac{1}{p}},
\end{align}}

and we bound again \eqref{addendo5first} and \eqref{addendo5second} separately. This way we get

{\small\begin{align*}
\eqref{addendo5first}^p&=(2\pi)^{1-p}\int_0^{\infty}ds\int_{-1}^1dh\left|\int_0^{\frac{t}{2}}dt' \underbrace{E(s+t-t',h)}_{\leq E(s+\frac{t}{2},h)}\int_{-1}^1dh''\int_{\mathbb{T}^1_{2\pi}}d\theta'|\mu_0(\theta',{t'},h'')|\right|^p
\\
&\leq(2\pi)^{1-p}\int_0^{\infty}ds\int_{-1}^1dh\underbrace{E(s+\frac{t}{2},h)^p}_{\leq\frac{C^p}{(s+\frac{t}{2}+1)^p}\mathbbm{1}(s+\frac{t}{2}\leq\frac{1}{1-|h|})\textrm{ by Lemma \ref{decE}}}\|\mu_0\|_{L^1}^p
\\
&\leq C^p\|\mu_0\|_{L^1}^p\int_0^{\infty}ds\frac{1}{(s+\frac{t}{2}+1)^{p+1}}\leq \frac{(2C)^p}{p(t+2)^p}\|\mu_0\|_{L^1}^p,
\end{align*}}

and, since $E(s,h)\leq1$ implies $ E(s,h)^p\leq E(s,h)$, we also have

{\scriptsize\begin{align*}
\eqref{addendo5second}^p&=(2\pi)^{1-p}\int_0^{\infty}ds\int_{-1}^1dh\left|\int_{\frac{t}{2}}^tdt' \underbrace{E(s+t-t',h)}_{\leq E(s,h)}\int_{-1}^1dh''\int_{\mathbb{T}^1_{2\pi}}d\theta'|\mu_0(\theta',{t'},h'')|\right|^p
\\
&\leq(2\pi)^{1-p}\int_0^{\infty}ds\int_{-1}^1dhE(s,h)^p\|\mu_0\|_{L^1(\mathbb{T}^1_{2\pi}\times[\frac{t}{2},+\infty)\times[-1,1])}^p\leq \|\mu_0\|_{L^1(\mathbb{T}^1_{2\pi}\times[\frac{t}{2},+\infty)\times[-1,1])}^p,
\end{align*}}

and therefore

\begin{align*}
\|\eqref{addendo5}\|_{L^p}\leq\eqref{addendo5first}+\eqref{addendo5second}\leq \frac{2C}{t+2}\|\mu_0\|_{L^1}+C\|\mu_0\|_{L^1(\mathbb{T}^1_{2\pi}\times[\frac{t}{2},+\infty)\times[-1,1])}.
\end{align*}

We can estimate the sixth term in the same way as we estimated \eqref{addendo3}: we split the integral with respect to $t',t''$ in
\[
[0,t]=\left[0,\frac{t}{2}\right]\cup\left[\frac{t}{2},t\right]\textrm{ and }[0,t']=\left[0,\frac{t'}{2}\right]\cup\left[\frac{t'}{2},t'\right],
\]
in such a way to have

{\tiny\begin{align}
\nonumber&\|\eqref{addendo6}\|_{L^p}
\\
\nonumber&\leq\left\|\int_0^tdt'\int_{-1}^1dh'Q(\cdot+t-t',\cdot|h')\int_{\mathbb{T}^1_{2\pi}}d\theta'\int_0^{t'}dt''\int_{-1}^1dh''|\varphi(\theta_2-\theta',t'-t'',h'|h'')||\mu_0(\theta',t'',h'')|\right\|_{L^p}
\\
\label{addendo6first}&\leq\left\|\int_0^{\frac{t}{2}}dt'\int_{-1}^1dh'Q(\cdot+t-t',\cdot|h')\int_{\mathbb{T}^1_{2\pi}}d\theta'\int_0^{t'}dt''\int_{-1}^1dh''|\varphi(\theta_2-\theta',t'-t'',h'|h'')||\mu_0(\theta',t'',h'')|\right\|_{L^p}
\\
\label{addendo6second}&+\left\|\int_{\frac{t}{2}}^tdt'\int_{-1}^1dh'Q(\cdot+t-t',\cdot|h')\int_{\mathbb{T}^1_{2\pi}}d\theta'\int_0^{\frac{t'}{2}}dt''\int_{-1}^1dh''|\varphi(\theta_2-\theta',t'-t'',h'|h'')||\mu_0(\theta',t'',h'')|\right\|_{L^p}
\\
\label{addendo6third}&+\left\|\int_{\frac{t}{2}}^tdt'\int_{-1}^1dh'Q(\cdot+t-t',\cdot|h')\int_{\mathbb{T}^1_{2\pi}}d\theta'\int_{\frac{t'}{2}}^{t'}dt''\int_{-1}^1dh''|\varphi(\theta_2-\theta',t'-t'',h'|h'')||\mu_0(\theta',t'',h'')|\right\|_{L^p},
\end{align}}

and then we use again Jensen's inequality \eqref{Jensencons}.

We begin with \eqref{addendo6first}. Arguing as we said, we get

{\scriptsize\begin{align*}
\eqref{addendo6first}^p\leq\int d\theta dsdh\int_0^{\frac{t}{2}}dt'\int_{-1}^1dh'Q(s+t-t',h|h')\left|\int_{\mathbb{T}^1_{2\pi}}d\theta'\int_0^{t'}dt''\int_{-1}^1dh''|\varphi||\mu_0(\theta',t'',h'')|\right|^p\xi_4,
\end{align*}}

with
\[
|\varphi|=|\varphi(\theta_2-\theta',t'-t'',h'|h'')|\leq C\textrm{ by Proposition \ref{prop:musvarphi}},
\]
and 

\begin{align*}
\xi_4&:=\xi_4(s,t,h)=\left(E\left(s+\frac{t}{2},h\right)-E(s+t,h)\right)^{p-1}
\\
&\leq\frac{C^{p-1}}{(s+\frac{t}{2}+1)^{p-1}}\mathbbm{1}\left(s+\frac{t}{2}\leq\frac{1}{1-|h|}\right)\textrm{ by Lemma \ref{decE}}.
\end{align*}

In this way, we have

{\tiny\begin{align*}
&\eqref{addendo6first}^p
\\
&\leq2\pi C^{2p-1}\int_0^{\infty}ds\int_{-1}^1dh\int_0^{\frac{t}{2}}dt'\int_{-1}^1dh'Q(s+t-t',h|h')\frac{1}{(s+\frac{t}{2}+1)^{p-1}}\mathbbm{1}(s+\frac{t}{2}\leq\frac{1}{1-|h|})\|\mu_0\|_{L^1}^p
\\
&=2\pi C^{2p-1}\|\mu_0\|_{L^1}^p\int_0^{\infty}ds\frac{1}{(s+\frac{t}{2}+1)^{p-1}}\int_{-1}^1dh\mathbbm{1}(1-|h|\leq\frac{1}{s+\frac{t}{2}})\underbrace{(E(s+\frac{t}{2},h)-E(s+t,h))}_{\leq E(s+\frac{t}{2},h)\leq\frac{C}{s+\frac{t}{2}+1}\textrm{ by Lemma \ref{decE}}}
\\
&\leq2\pi C^{2p}\|\mu_0\|_{L^1}^p\int_{\frac{t}{2}+1}^{\infty}ds\frac{1}{s^{p+1}}=2\pi \frac{2^pC^{2p}}{p(t+2)^p}\|\mu_0\|_{L^1}^p.
\end{align*}}

We still have to estimate the other two terms: the second one can be bounded using first \eqref{Jensencons} and then Proposition \ref{prop:musvarphi}, as follows. We have

{\scriptsize\begin{align*}
\eqref{addendo6second}^p\leq\int d\theta dsdh\int_{\frac{t}{2}}^tdt'\int_{-1}^1dh'Q(s+t-t',h|h')\left|\int_{\mathbb{T}^1_{2\pi}}d\theta'\int_0^{\frac{t'}{2}}dt''\int_{-1}^1dh''|\varphi||\mu_0(\theta',t'',h'')|\right|^p\xi_5
\end{align*}}

with
\[
\xi_5=\xi_5(t,s,h):=\left(E(s,h)-E\left(s+\frac{t}{2},h\right)\right)^{p-1}\leq E(s,h)^{p-1}\leq1.
\]
Since
\[
|\varphi|=|\varphi(\theta_2-\theta',t'-t'',h'|h'')|\leq \frac{C}{t'-t''+1}\leq\frac{2C}{t'+2}\leq\frac{4C}{t+4}\textrm{ by Proposition \ref{prop:musvarphi}},
\]
we have

\begin{align*}
\eqref{addendo6second}^p&\leq\frac{(4C)^p}{(t+4)^p}2\pi\|\mu_0\|_{L^1}^p\int_0^{\infty}ds\int_{-1}^1dh\underbrace{\int_{\frac{t}{2}}^tdt'\int_{-1}^1dh'Q(s+t-t',h|h')}_{=E(s,h)-E(s+\frac{t}{2},h)\leq E(s,h)}
\\
&\leq\frac{(4C)^p}{(t+4)^p}2\pi\|\mu_0\|_{L^1}^p\int_0^{\infty}ds\int_{-1}^1dhE(s,h)=\frac{(4C)^p}{(t+4)^p}2\pi\|\mu_0\|_{L^1}^p.
\end{align*}

Finally, only \eqref{addendo6third} is missing: we estimate it below.

{\scriptsize\begin{align*}
&\eqref{addendo6third}^p\leq\int d\theta dsdh\int_{\frac{t}{2}}^tdt'\int_{-1}^1dh'Q(s+t-t',h|h')\xi_6\left|\int_{\mathbb{T}^1_{2\pi}}d\theta'\int_{\frac{t'}{2}}^{t'}dt''\int_{-1}^1dh''|\varphi||\mu_0(\theta',t'',h'')|\right|^p,
\end{align*}}

with
\[
\xi_6=\xi_6(t,s,h):=\left(E(s,h)-E(s+\frac{t}{2},h)\right)^{p-1}\leq1,
\]
and
\[
|\varphi|=|\varphi(\theta_2-\theta',t'-t'',h'|h'')|\leq C\textrm{ by Proposition \ref{prop:musvarphi}},
\]
and therefore

{\small\begin{align*}
\\
\eqref{addendo6third}^p&\leq2\pi C^p\int_0^{\infty}ds\int_{-1}^1dh\int_{\frac{t}{2}}^tdt'\int_{-1}^1dh'Q(s+t-t',h|h')\|\mu_0\|_{L^1(\mathbb{T}^1_{2\pi}\times[\frac{t'}{2},+\infty)\times[-1,1])}^p
\\
&\leq2\pi C^p\|\mu_0\|_{L^1(\mathbb{T}^1_{2\pi}\times[\frac{t}{4},+\infty)\times[-1,1])}^p\int_0^{\infty}ds\int_{-1}^1dh\underbrace{(E(s,h)-E(s+\frac{t}{2},h))}_{\leq E(s,h)}
\\
&\leq 2\pi C^p\|\mu_0\|_{L^1(\mathbb{T}^1_{2\pi}\times[\frac{t}{4},+\infty)\times[-1,1])}^p.
\end{align*}}

Thus we get

\begin{align*}
\|\eqref{addendo6}\|_{L^p}&\leq\eqref{addendo6first}+\eqref{addendo6second}+\eqref{addendo6third}
\\
&\leq (2\pi)^\frac{1}{p}\left[\left(\frac{2C^2}{t+2}+\frac{4C}{t+4}\right)\|\mu_0\|_{L^1}+C\|\mu_0\|_{L^1(\mathbb{T}^1_{2\pi}\times[\frac{t}{4},+\infty)\times[-1,1])}\right],
\end{align*}

which terminates the proof of the first statement \eqref{thm1:st1} of Theorem \ref{thm:convergenza_v,s,h}.

Now it remains to prove the second part of the Theorem, that is, \eqref{thm1:st2}, which states that in the particular case $\mu_0(\theta,s,h)=\mu_{in}(\theta)E(s,h)$
\[
\left\|\mu_t-\frac{\langle\mu_0\rangle}{2\pi}E\right\|_{L^p(\mathbb{T}^1_{2\pi}\times[0,+\infty)\times[-1,1])}\leq\frac{C}{t+1}\|\mu_{in}\|_{L^p(\mathbb{T}^1_{2\pi})}.
\]
To this purpose, notice that we can estimate the four summands in the right-hand side of \eqref{thm1:st1} as follows. For any $q\geq1$ we have
\[
\|\mu_0\|_{L^q(\mathbb{T}^1_{2\pi}\times[A,+\infty)\times[-1,1])}=\|\mu_{in}\|_{L^q(\mathbb{T}^1_{2\pi})}\|E\|_{L^q([A,+\infty)\times[-1,1])},
\]
and the terms in the right-hand side of \eqref{thm1:st1} correspond to the four cases $q=1$ and  $A=0$, $q=1$ and $ A=\frac{t}{4}$, $q=p$ and $ A=0$, $q=p$ and $ A=\frac{t}{4}$. Moreover $\|E\|_{L^{\infty}([A,+\infty)\times[-1,1])}\leq\frac{C}{A+1}$ from Lemma \ref{decE}, while for finite $q$ we have

\begin{align*}
\|E\|_{L^q([A,+\infty)\times[-1,1])}&=\left(\int_A^{\infty}ds\int_{-1}^1dh E(s,h)^q\right)^{\frac{1}{q}}
\\
&\leq C\left(\int_A^{\infty}ds\frac{1}{(s+1)^{q+1}} \right)^{\frac{1}{q}}\leq\frac{C}{A+1},
\end{align*}

and therefore:
\begin{itemize}
\item if $q=1$ and $A=0$ then
\[
\|\mu_0\|_{L^q(\mathbb{T}^1_{2\pi}\times[A,+\infty)\times[-1,1])}\leq\|\mu_{in}\|_{L^1(\mathbb{T}^1_{2\pi})}\leq\|\mu_{in}\|_{L^p(\mathbb{T}^1_{2\pi})};
\]
\item if $q=p$ and $A=0$ then

\begin{align*}
\|\mu_0\|_{L^q(\mathbb{T}^1_{2\pi}\times[A,+\infty)\times[-1,1])}&\leq\|\mu_{in}\|_{L^p(\mathbb{T}^1_{2\pi})}\underbrace{\|E\|_{L^p([0,+\infty)\times[-1,1])}}_{\leq(\int dsdh E(s,h)^p)^{\frac{1}{p}}\leq(\int dsdh E(s,h))^{\frac{1}{p}}=1}
\\
&\leq\|\mu_{in}\|_{L^p(\mathbb{T}^1_{2\pi})};
\end{align*}

\item if $q=1$ and $A=\frac{t}{4}$ then
\[
\|\mu_0\|_{L^q(\mathbb{T}^1_{2\pi}\times[A,+\infty)\times[-1,1])}\leq\underbrace{\|\mu_{in}\|_{L^1(\mathbb{T}^1_{2\pi})}}_{\leq\|\mu_{in}\|_{L^p(\mathbb{T}^1_{2\pi})}}\frac{4C}{t+4}\leq\frac{4C}{t+4}\|\mu_{in}\|_{L^p(\mathbb{T}^1_{2\pi})};
\]
\item if $q=p$ and $A=\frac{t}{4}$ then
\[
\|\mu_0\|_{L^q(\mathbb{T}^1_{2\pi}\times[A,+\infty)\times[-1,1])}\leq\frac{4C}{t+4}\|\mu_{in}\|_{L^p(\mathbb{T}^1_{2\pi})}.
\]
\end{itemize}
By substituting these estimates into \eqref{thm1:st1} we get \eqref{thm1:st2}.
\end{proof}

\newpage
\section{The long time evolution of a density depending on $(x,\theta,s,h)$.}\label{xthetash}
In this Section we want to prove Theorems \ref{thm:mutk}, \ref{thm:conv_x,v,s,h} and \ref{thm:conv_R2}, that we recall below for clarity. Let the Fourier coefficients of a mild solution of \eqref{rapp:ev} with initial datum $\mu_0$ be
\[
\mu_t^k(\theta,s,h):=\int dxe^{2\pi ik\cdot x}\mu_t(x,\theta,s,h),\qquad k\in\mathbb{R}^2\textrm{ or }\mathbb{Z}^2,
\]
where the integral over $x$ is performed over $x\in\mathbb{R}^2$ or $x\in\mathbb{Z}^2$ depending on whenever the solution is defined for $x\in\mathbb{R}^2$ or $x\in\mathbb{T}^2$ and the same holds for the ambiguity $k\in\mathbb{R}^2$ or $k\in\mathbb{Z}^2$. Theorem \ref{thm:mutk} states there exists a constant $C>0$, depending only $Q$ and not on $p$, such that for any $k\in\mathbb{R}^2,k\neq(0,0)$ 
{\tiny\[
\left\|\mu_t^k\right\|_{L^p}\leq\frac{C}{\min\{1,|k|^6\}}\left[\frac{\|\mu_0^k\|_{L^1}+\|\mu_0^k\|_{L^p}}{t+1}+\|\mu_0^k\|_{L^1(\mathbb{T}^1_{2\pi}\times[\frac{t}{4},+\infty)\times[-1,1])}+\|\mu_0^k\|_{L^p(\mathbb{T}^1_{2\pi}\times[\frac{t}{4},+\infty)\times[-1,1])}\right].
\]}
When not specified, the $L^1$ and $L^p$ norms are taken where the coefficients are defined, that is $\mathbb{T}^1_{2\pi}\times[0,+\infty)\times[-1,1]$.

On the flat torus $\mathbb{T}^2$, this result combined with Theorem \ref{thm:convergenza_v,s,h} implies Theorem \ref{thm:conv_x,v,s,h}, that is, under suitable conditions on $\mu_0$, whenever $p$ is finite
\[
\left\|\mu_t-\frac{\langle\mu_0\rangle}{2\pi}E\right\|_{L^p(\mathbb{T}^2\times\mathbb{T}^1_{2\pi}\times[0,+\infty)\times[-1,1])}\xrightarrow[t\to+\infty]{}0,
\]
and the same result holds for the weak-$*$ convergence in $L^{\infty}$.

On $\mathbb{R}^2$, Theorem \ref{thm:mutk} implies that $\mu_t\xrightarrow[t\to+\infty]{}0$ (Theorem \ref{thm:conv_R2}) in a very weak sense (of course not in $L^1$ because the total mass is preserved).
\subsection{Long time behavior of Fourier coefficients.} The first goal is to prove Theorem \ref{thm:mutk}. To this purpose we make several steps. First we characterize the time evolution of a particular Fourier coefficient: by multiplying equation \eqref{rapp:ev} times $e^{2\pi ik\cdot x}$ and integrating with respect to $x\in\mathbb{T}^2$ (or $x\in\mathbb{R}^2$) we get

{\footnotesize\begin{align}
\nonumber\mu_t^k(\theta,s,h)&=e^{2\pi itk\cdot v(\theta)}\mu_0^k(\theta,s+t,h)
\\
\label{rapp:ev_mutk}&+\int_0^tdt'\int_{-1}^1dh'Q(s+t-t',h|h')e^{2\pi i(t-t')k\cdot v(\theta)}\mu_{t'}^k(\theta+\pi-2\arcsin(h'),0,h'),
\end{align}}

and evaluating the obtained result at $s=0$, we obtain

\begin{align}
\nonumber\mu_t^k(\theta,0,h)&=e^{2\pi itk\cdot v(\theta)}\mu_0^k(\theta,t,h)
\\
\label{rapp:ev_mutk_s=0}&+\int_0^tdt'\int_{-1}^1dh'Q(t-t',h|h')e^{2\pi i(t-t')k\cdot v(\theta)}\mu_{t'}^k(\theta+\pi-2\arcsin(h'),0,h').
\end{align}

As in the case without dependance on the variable $x$, the long time behavior of $\mu_t^k(\theta,s,h)$ is fully characterized by the long time behavior of $\mu_t^k(\theta,s=0,h)$. That is the next step.
\newline
\subsubsection{Writing $\mu_t^k$ as a linear function of $\mu_0^k$.}
As in the case $k=(0,0)$ we studied in Section \ref{thetash}, we want to separate the dependance on $\mu_0^k$ in order to get more precise estimates. In particular, what we need to prove Theorem \ref{thm:mutk} is the following result.
\begin{proposition}\label{prop:phik}
Let $\mu_0\in L^1(\mathbb{T}^2\times\mathbb{T}^1_{2\pi}\times[0,+\infty)\times[-1,1])$ (or $\mu_0\in L^1(\mathbb{R}^2\times\mathbb{T}^1_{2\pi}\times[0,+\infty)\times[-1,1])$) and let $\{\mu_t^k\}$ be the Fourier coefficients of the mild solution of \eqref{rapp:ev} defined in \eqref{def:mutk}. For any $k\in\mathbb{Z}^2,k\neq(0,0)$ (respectively $k\in\mathbb{R}^2,k\neq(0,0)$) there exists a function $\varphi^k:\mathbb{T}^1_{2\pi}\times[0,+\infty)\times[-1,1]\times\mathbb{T}^1_{2\pi}\times[-1,1]\xrightarrow[]{L^{\infty}}\mathbb{C}$ such that $\mu_t^k(\theta,0,h)$ writes as a linear function of $\mu_0^k$ and an affine function of $\varphi^k$ as

{\scriptsize\begin{align*}
&\mu_t^k(\theta,0,h)
\\
&=e^{2\pi itk\cdot v(\theta)}\mu_0^k(\theta,t,h)
\\
&+\int_0^tdt'\int_{-1}^1dh'Q(t-t',h|h')e^{2\pi i(t-t')k\cdot v(\theta)}e^{2\pi it' k\cdot v(\theta+\pi-2\arcsin(h'))}\mu_0^k(\theta+\pi-2\arcsin(h'),t',h')
\\
&+\int_{\mathbb{T}^1_{2\pi}}d\theta'\int_0^tdt'\int_{-1}^1dh'\varphi^k(\theta,t-t',h|\theta',h')e^{2\pi it' k\cdot v(\theta')}\mu_0^k(\theta',t',h'),
\end{align*}}

and moreover there exists a constant $C$ depending only on $Q$ (and not on $k$) such that
{\footnotesize\[
|\varphi^k(\theta,t,h|\theta',h')|\leq\frac{C}{\min\{1,|k|^6\}(t+1)}\qquad\forall(\theta,t,h|\theta',h')\in\mathbb{T}^1_{2\pi}\times[0,+\infty)\times[-1,1]\times\mathbb{T}^1_{2\pi}\times[-1,1].
\]}
\end{proposition}
This Proposition allows us to prove Theorem \ref{thm:mutk} since none of the summands which $\mu_t^k(\theta,0,h)$ is composed of survives for large $t$. To prove this Proposition, we recall the Definition  \ref{defh''} of  $h''(\theta,h')$, that is
\[
h''(\theta,h'):=\sin\left(\frac{\theta+2\pi-2\arcsin(h')}{2}\right)\mathbbm{1}_{2\arcsin(h')+[-3\pi,-\pi)}(\theta),
\]
and this time, instead of using $f(\theta,t,h|\theta',h')$, we use $g^k$ as in Definition \ref{def:gk}, that is,

{\scriptsize\begin{align*}
 g^k(\theta,t,h|\theta',h'):=\sum_{\ell\in\mathbb{Z}}\frac{\partial h''}{\partial\theta}e^{2\pi itk\cdot v(\theta)}\int_0^tdt' Q(t-t',h|h'') Q(t',h''|h')e^{2\pi it' k\cdot[v(\theta'-\pi+2\arcsin(h'))-v(\theta)]},
 \end{align*}}
 
 with
 \[
 h''=h''(\theta-\theta'+2\ell\pi,h'),\textrm{ and }h''\textrm{ from Definition \ref{defh''}}.
 \]
Of course, because of the complex exponential, $g^k$ is no more a function of  $\theta-\theta'$, differently from $f$ of Definition \ref{defeffe} in the previous Section. Nevertheless, it is still periodic both with respect to $\theta$ and to $\theta'$. In Lemma \ref{lemma:proprgk} in Section \ref{app:funzionidiQ} we collect some properties of the functions $g^k$, $k\neq(0,0)$.

The first necessary step to prove Proposition \ref{prop:phik} is the following Lemma, that proves the existence of the function $\varphi^k$, but not its decaying properties in long time, that we will study immediately after.
\begin{lemma}\label{lemma:esphik}
For any $k\in\mathbb{R}^2$, $k\neq(0,0)$, there exists a unique function $\varphi^k\in L^{\infty}_{loc}(\mathbb{T}^1_{2\pi}\times[0,+\infty)\times[-1,1]\times\mathbb{T}^1_{2\pi}\times[-1,1])$ satisfying

{\tiny\begin{align}
\nonumber\varphi^k(\theta,t,h|\theta',h')&=g^k(\theta,t,h|\theta',h')
\\
\label{espr:phik}&+\int_0^tdt'\int_{-1}^1dh''Q(t-t',h|h'')e^{2\pi i(t-t')k\cdot v(\theta)}\varphi^k(\theta+\pi-2\arcsin(h''),t',h''|\theta',h'),
\end{align}}

and moreover for any $\mu_0\in L^1(\mathbb{T}^2\times\mathbb{T}^1_{2\pi}\times[0,+\infty)\times[-1,1])$ or $\mu_0\in L^1(\mathbb{R}^2\times\mathbb{T}^1_{2\pi}\times[0,+\infty)\times[-1,1])$, $\mu_t^k(\theta,0,h)$ writes as a function of $\varphi^k$ and $\mu_0$ as:

{\tiny\begin{align}
\nonumber&\mu_t^k(\theta,0,h)
\\
\nonumber&=e^{2\pi itk\cdot v(\theta)}\mu_0^k(\theta,t,h)
\\
\nonumber&+\int_0^tdt'\int_{-1}^1dh'Q(t-t',h|h')e^{2\pi i(t-t')k\cdot v(\theta)}e^{2\pi it' k\cdot v(\theta+\pi-2\arcsin(h'))}\mu_0^k(\theta+\pi-2\arcsin(h'),t',h')
\\
\label{rapp:nucleophik}&+\int_{\mathbb{T}^1_{2\pi}}d\theta'\int_0^tdt'\int_{-1}^1dh'\varphi^k(\theta,t-t',h|\theta',h')e^{2\pi it' k\cdot v(\theta')}\mu_0^k(\theta',t',h').
\end{align}}

\end{lemma}
\begin{proof} The proof of this Lemma is very similar to the proof of Lemma \ref{lemma:esphi} in the previous Section \ref{thetash}, but we write it entirely because of the presence of the complex exponential that changes some steps.

We are looking for a function $\varphi^k:\mathbb{T}^1_{2\pi}\times[0,+\infty)\times[-1,1]\times\mathbb{T}^1_{2\pi}\times[-1,1]\to\mathbb{C}$ such that any $\mu_t^k$ can be expressed through $\varphi^k$ as in equation \eqref{rapp:nucleophik}. Since $\mu_t^k$ satisfies equation \eqref{rapp:ev_mutk_s=0}, substituting the desired equation \eqref{rapp:nucleophik} in both sides of \eqref{rapp:ev_mutk_s=0}, we get the following condition on $\varphi^k$

{\scriptsize\begin{align}
\nonumber&e^{2\pi itk\cdot v(\theta)}\mu_0^k(\theta,t,h)
\\
\nonumber&+\int_0^tdt'\int_{-1}^1dh'Q(t-t',h|h')e^{2\pi i(t-t')k\cdot v(\theta)}e^{2\pi it' k\cdot v(\theta+\pi-2\arcsin(h'))}\mu_0^k(\theta+\pi-2\arcsin(h'),t',h')
\\
\label{ariadd1}&+\int_{\mathbb{T}^1_{2\pi}}d\theta'\int_0^tdt'\int_{-1}^1dh'\varphi^k(\theta,t-t',h|\theta',h')e^{2\pi it' k\cdot v(\theta')}\mu_0^k(\theta',t',h')
\\
\nonumber&=e^{2\pi itk\cdot v(\theta)}\mu_0^k(\theta,t,h)
\\
\nonumber&+\int_0^tdt'\int_{-1}^1dh'Q(t-t',h|h')e^{2\pi i(t-t')k\cdot v(\theta)}\biggl[e^{2\pi it' k\cdot v(\theta+\pi-2\arcsin(h'))}\mu_0^k(\theta+\pi-2\arcsin(h'),t',h')
\\
\label{ariadd2}&+\int_0^{t'}dt''\int_{-1}^1dh''Q(t'-t'',h'|h'')e^{2\pi i(t'-t'')k\cdot v(\eta')} e^{2\pi i t''k\cdot v(\eta'')}\mu_0^k(\eta'',t'',h'')
\\
\label{ariadd3}&+\int_{\mathbb{T}^1_{2\pi}}d\theta'\int_0^{t'}dt''\int_{-1}^1dh'' e^{2\pi it''k\cdot v(\theta')}\varphi^k(\eta',t'-t'',h'|\theta',h'')\mu_0^k(\theta',t'',h'')\biggr],
\end{align}}

where 

\begin{align}
\nonumber\eta'&=\eta'(\theta,h'):=\theta+\pi-2\arcsin(h'),
\\
\label{notacazzo}\eta''&=\eta''(\theta,h',h''):=\theta+2\pi-2\arcsin(h')-2\arcsin(h'').
\end{align}

In the previous equality, the first two summands of both sides get cancelled. Therefore the following step is to write better the equality
\[
\eqref{ariadd1}=\eqref{ariadd2}+\eqref{ariadd3}.
\]
We leave aside \eqref{ariadd1} that is fine as it is.

As for the term in \eqref{ariadd2} we use the following steps. First we change the integration order: before with respect to $(h',h'',t'')$ and then with respect to $t'$. Then we change the variables names: $t''\leftrightarrow t'$ and $h'\leftrightarrow h''$. Finally we change variables in such a way to have
\[
\eta''(\theta,h',h'')=\theta+2\pi-2\arcsin(h')-2\arcsin(h'')=\theta',
\]
that is 
\[
h''=h''(\theta-\theta',h')\textrm{ as in Definition \ref{defh''}.}
\]
Through this passages, shortening $h''=h''(\theta-\theta',h')$ and using again the notation in \eqref{notacazzo} with the new $h'$ and $h''$, we get
\[
\eqref{ariadd2}=\int_0^tdt'\int_{-1}^1dh'\int_{\theta+\pi-2\arcsin(h')}^{\theta+3\pi-2\arcsin(h')}d\theta'e^{2\pi it' k\cdot v(\theta')}\mu_0^k(\theta',t',h')g^k(\theta,t-t',h|\theta',h')
\]
by Definition \ref{def:gk} of $g^k$, and therefore
\[
\eqref{ariadd2}=\int_0^tdt'\int_{-1}^1dh'\int_{\mathbb{T}^1_{2\pi}}d\theta'e^{2\pi it' k\cdot v(\theta')}\mu_0^k(\theta',t',h')g^k(\theta,t-t',h|\theta',h').
\]
since the integrand is periodic.

As for the term in \eqref{ariadd3} we change the integration order: before with respect to $(\theta',t'',h'')$ and then with respect to $(t',h')$. Then we exchange the names of the variables $t''\leftrightarrow t'$ and $h'\leftrightarrow h''$. In this way we obtain

\begin{align*}
\eqref{ariadd3}=\int_{\mathbb{T}^1_{2\pi}}d\theta'\int_0^tdt'\int_{-1}^1dh'e^{2\pi it' k\cdot v(\theta')}\mu_0^k(\theta',t',h')\psi_k(\theta,t-t',h|\theta'h'),
\end{align*}

with
{\footnotesize\[
\psi_k(\theta,t-t',h|\theta'h'):=\int_0^{t-t'}dt''\int_{-1}^1dh''Q(t-t'-t'',h|h'')e^{2\pi i(t-t'-t'')k\cdot v(\theta)}\varphi^k(\eta''',t'',h''|\theta',h'),
\]}
and with
\[
\eta'''=\eta'''(\theta,h'):=\theta+\pi-2\arcsin(h'').
\]
Thus, if we compute 
\[
\eqref{ariadd1}=\eqref{ariadd2}+\eqref{ariadd3},
\]
and if we impose it to be valid for any $\mu_0^k$, we get

{\scriptsize\begin{align*}
\varphi^k(\theta,t,h|\theta',h')&=g^k(\theta,t,h|\theta',h')
\\
&+\int_0^tdt'\int_{-1}^1dh''Q(t-t',h|h'')e^{2\pi i(t-t')k\cdot v(\theta)}\varphi^k(\theta+\pi-2\arcsin(h''),t',h''|\theta',h'),
\end{align*}}

which is exactly \eqref{espr:phik}.

The existence and the uniqueness of $\varphi^k$ are ensured as follows: for fixed $\theta'\in\mathbb{R}$ and $h'$, $\varphi^k$ exists  in the space $L^{\infty}_{loc}(\mathbb{T}^1_{2\pi}\times[0,+\infty)\times[-1,1];\mathbb{C})$: this can be proven by using Lemma \ref{lemma:rhovarianteC} with $p=\infty$ and $\mu(\theta,t,h)=g^k(\theta,t,h|\theta',h')$. The estimate $\|\varphi^k\|_{L^{\infty}(\mathbb{T}^1_{2\pi}\times[0,T]\times[-1,1]\times\mathbb{T}^1_{2\pi}\times[-1,1])}$, that is, a local $L^{\infty}$ estimate not depending on $(h',\theta')$, is a consequence of the fact that $\varphi^k$ is obtained in Lemma \ref{lemma:rhovarianteC} through the Banach fixed-point Theorem, and therefore the quantity
\[
\|\varphi^k(\cdot,\cdot,\cdot|\theta',h')\|_{L^{\infty}(\mathbb{T}^1_{2\pi}\times[0,T]\times[-1,1])},
\]
is bounded by a linear function, with a possible dependance on $T$, of
\[
\|g^k(\cdot,\cdot,\cdot|\theta',h')\|_{L^{\infty}(\mathbb{T}^1_{2\pi}\times[0,T]\times[-1,1])}.
\]
By \eqref{decgk} of Lemma \ref{lemma:proprgk}, the previous quantity is in turn uniformly bounded with respect to $(\theta',h')$.

Let us focus on the periodicity. $\varphi^k$ is periodic with respect to $\theta$ because $g^k$ is too. Moreover, since $g^k$ is periodic with respect to $\theta'$, for fixed $\theta'\in\mathbb{R}$ and $h'\in[-1,1]$, $\varphi^k(\cdot,\cdot,\cdot|\theta',h')$ and $\varphi^k(\cdot,\cdot,\cdot|\theta'+2\pi,h')$ solve the same problem in \eqref{espr:phik}. Therefore they coincide, indeed Lemma \ref{lemma:rhovarianteC} ensures that the solution of such a problem is unique.
\end{proof}
To conclude the proof of Proposition \ref{prop:phik}, it remains to study the estimates of $\varphi^k$ for large $t$. The following Lemma collects all the necessary estimates to prove that $\varphi^k$ is bounded and decays at most as $\frac{1}{t}$.
\begin{lemma}\label{lemma:raccolta}
For any $k\in\mathbb{R}^2$, $k\neq(0,0)$, being $g^k$ introduced in Definition \ref{def:gk}, the function $\varphi^k$ defined in Lemma \ref{lemma:esphik} verifies the following identity

{\scriptsize\begin{align}
\nonumber\varphi^k(\theta,t,h|\theta',h')&=g^k(\theta,t,h|\theta',h')
\\
\nonumber&+\int_0^tdt'\int_{-1}^1dh''Q(t-t',h|h'')e^{2\pi i(t-t')k\cdot v(\theta)}g^k(\theta+\pi-2\arcsin(h''),t',h''|\theta',h')
\\
\label{phikreiter}&+\int_{\mathbb{T}^1_{2\pi}}d\theta''\int_0^tdt'\int_{-1}^1dh'''g^k(\theta,t-t',h|\theta'',h''')\varphi^k(\theta'',t',h'''|\theta',h').
\end{align}}

and moreover for $k\in\mathbb{R}^2$, $k\neq(0,0)$, $g^k$ is a contractive map, that is, it holds

\begin{align}
\label{gkcontrae}\sup_{\theta\in\mathbb{T}^1_{2\pi},h\in[-1,1]}\|g^k(\theta,\cdot,h|\cdot,\cdot)\|_{L^1}\leq1-C\min\{1,|k|^2\},
\end{align}

up to a constant $C\in(0,1)$ not depending on $k$.
\end{lemma}
\begin{proof} To prove property \eqref{phikreiter} we have to iterate over \eqref{espr:phik} twice, that is

{\scriptsize\begin{align}
\nonumber&\varphi^k(\theta,t,h|\theta',h')
\\
\nonumber&=g^k(\theta,t,h|\theta',h')+\int_0^tdt'\int_{-1}^1dh''Q(t-t',h|h'')e^{2\pi i(t-t')k\cdot v(\theta)}g^k(\theta+\pi-2\arcsin(h''),t',h''|\theta',h')
\\
\label{tizioacaso}&+\int_0^tdt'\int_{-1}^1dh''Q(t-t',h|h'')e^{2\pi i(t-t')k\cdot v(\theta)}\psi_k(\theta,t',h''|\theta',h'),
\end{align}}

with
{\small\[
\psi_k(\theta,t',h''|\theta',h'):=\int_0^{t'}dt''\int_{-1}^1dh'''Q(t'-t'',h''|h''') e^{2\pi i(t'-t'')k\cdot v(\eta'')}\varphi^k(\eta''',t'',h'''|\theta',h')
\]}
and with
\begin{align*}
\eta''&=\eta''(\theta,h''):=\theta+\pi-2\arcsin(h''),
\\
\eta'''&=\eta'''(\theta,h'''):=\theta+2\pi-2\arcsin(h'')-2\arcsin(h''').
\end{align*}
In the third summand in the right-hand side we exchange the integration order: before with respect to $(h'',h''',t'')$ and then with respect to $t'$. Then we change the names of the variables $t''$ and $t'$. Then we change variables from $h''$ to $\theta''$ in such a way to have
\[
\theta+2\pi-2\arcsin(h'')-2\arcsin(h''')=\theta'',
\]
that is 
\[
h''=h''(\theta-\theta'',h''')\textrm{ as in Definition \ref{defh''}.}
\]
By using these steps, the term in \eqref{tizioacaso} writes also as

{\small\begin{align*}
\eqref{tizioacaso}&=\int_0^tdt'\int_{-1}^1dh'''\int_{\theta+\pi-2\arcsin(h''')}^{\theta+3\pi-2\arcsin(h''')}d\theta''\varphi^k(\theta'',t',h'''|\theta',h')g^k(\theta,t-t',h|\theta'',h''')
\\
&=\int_{\mathbb{T}^1_{2\pi}}d\theta''\int_0^tdt'\int_{-1}^1dh'''\varphi^k(\theta'',t',h'''|\theta',h')g^k(\theta,t-t',h|\theta'',h'''),
\end{align*}}

where the last equality holds because $g^k$ and $\varphi^k$ are periodic with respect to $\theta''$. This proves property \eqref{phikreiter}.

Finally, \eqref{gkcontrae} is exactly \eqref{gkint<1} of Lemma \ref{lemma:proprgk} in Section \ref{app:funzionidiQ}.
\end{proof}
As in the case without the dependance on variable $x$, the properties \eqref{phikreiter} and \eqref{gkcontrae} of the previous Lemma \ref{lemma:esphik} can be combined to prove that $\varphi^k$ has the same properties as $g^k$. We can now prove Proposition \ref{prop:phik}.
\newline
\subsubsection{Proof of Proposition \ref{prop:phik}.}
\begin{proof} For $k\in\mathbb{R}^2,k\neq(0,0)$, the function $\varphi^k$ in the statement of Proposition \ref{prop:phik} is exactly the function $\varphi^k$ defined through Lemma \ref{lemma:esphik}. Therefore it also writes as in \eqref{phikreiter} of Lemma \ref{lemma:raccolta}.

Therefore we need to prove that there exists a constant $C$ depending only $Q$, and not on $k$, such that $\varphi^k$ defined in Lemma \ref{lemma:esphik} satisfies
{\footnotesize\[
|\varphi^k(\theta,t,h|\theta',h')|\leq\frac{C}{\min\{1,|k|^6\}(t+1)}\quad\forall(\theta,t,h|\theta',h')\in\mathbb{T}^1_{2\pi}\times[0,+\infty)\times[-1,1]\times\mathbb{T}^1_{2\pi}\times[-1,1].
\]}
The proof is exactly analogue to the proof of Lemma \ref{decphi}: the differences are in the fact that one has to substitute $f(\theta-\theta',t-t',h|h')$ with $g^k(\theta,t-t',h|\theta',h')$, therefore we only sketch it.

To prove that $\varphi^k$ is bounded, one can use \eqref{phikreiter} of Lemma \ref{lemma:raccolta}, that is the analogue of Lemma \ref{lemma:phimeglio} in the previous Section, to get

\begin{align}\label{phiklimitata}
\|\varphi^k\|_{L^{\infty}}\leq\frac{\|J^k\|_{L^{\infty}}}{1-d^k},
\end{align}

with

{\footnotesize\begin{align*}
J^k(\theta,t,h|\theta',h')&:=g^k(\theta,t,h|\theta',h')
\\
&+\int_0^tdt'\int_{-1}^1dh''Q(t-t',h|h'')e^{2\pi i(t-t')k\cdot v(\theta)}g^k(\theta+\pi-2\arcsin(h''),t',h''|\theta',h'),
\\
d^k&:=\sup_{(\theta,h)\in\mathbb{T}^1_{2\pi}\times[-1,1]}\|g^k(\theta,\cdot,h|\cdot,\cdot)\|_{L^1}.
\end{align*}}

Notice that for any $k$ by Definition \ref{def:gk} of $g^k$ and \ref{defeffe} of $f$ we have
\[
\|J^k\|_{L^{\infty}}\leq2\|g^k\|_{L^{\infty}}\leq2\|f\|_{L^{\infty}}\textrm{ thanks to \eqref{deceffe} of Lemma \ref{propreffe}},
\]
that is, $J^k$ is bounded uniformly with respect to $k$. Moreover, by \eqref{gkcontrae} of Lemma \ref{lemma:raccolta}, there exists a constant $C'\in(0,1)$ such that
\[
d^k\leq1-C'\min\{1,|k|^2\}.
\]
Thus by \eqref{phiklimitata} one obtains
\begin{align}\label{phiklimitata2}
\|\varphi^k\|_{L^{\infty}}\leq\frac{2\|f\|_{L^{\infty}}}{C'\min\{1,|k|^2\}}.
\end{align}

Moreover, to prove the statement one has also to prove that $\varphi^k$ decays at most as $\frac{1}{t}$. To this purpose, one can notice that arguing as in the proof of Lemma \ref{decphi} one gets
\begin{align}\label{tphiklimitata}
\|t\varphi^k\|_{L^{\infty}}\leq\frac{\|tJ^k\|_{L^{\infty}}+C\frac{\alpha^k}{\alpha^k-1}\|\varphi^k\|_{L^{\infty}}}{1-\alpha^kd^k},
\end{align}

with $J^k$ and $d^k$ defined as in the previous step, and
\[
\alpha^k:=\frac{1+d^k}{2d^k}\in\left(1,\frac{1}{d^k}\right).
\]
Therefore by Definitions \ref{def:gk} of $g^k$ and \ref{defeffe} of $f$, one has 

\begin{align*}
\|tJ^k\|_{L^{\infty}}&\leq2\|tg^k\|_{L^{\infty}}\leq2\|tf\|_{L^{\infty}}\textrm{ thanks to \eqref{deceffe} of Lemma \ref{propreffe}},
\\
\|\varphi^k\|_{L^{\infty}}&\leq\frac{\|f\|_{L^{\infty}}}{C'\min\{1,|k|^2\}}\textrm{ thanks to \eqref{phiklimitata2}},
\end{align*}

and
{\small\[
\frac{\alpha^k}{\alpha^k-1}=\frac{1+d^k}{1-d^k}\leq\frac{2}{1-d^k}\leq\frac{2}{C'\min\{1,|k|^2\}},\quad\frac{1}{1-\alpha^kd^k}=\frac{2}{1-d^k}\leq\frac{2}{C'\min\{1,|k|^2\}}.
\]}
Substituting these estimates into \eqref{tphiklimitata} one gets
\[
\|t\varphi^k\|_{L^{\infty}}\leq2\frac{2\|tf\|_{L^{\infty}}+C\frac{2}{C'\min\{1,|k|^2\}}\frac{2}{C'\min\{1,|k|^2\}}2\|f\|_{L^{\infty}}}{C'\min\{1,|k|^2\}}\leq \frac{C''}{\min\{1,|k|^6\}},
\]
that is the thesis.
\end{proof}
\subsubsection{\textbf{Proof of Theorem \ref{thm:mutk}.}} Now that we have proved Proposition \ref{prop:phik}, we can use it to prove Theorem \ref{thm:mutk}.
\begin{proof} Since the estimates are completely analogue to the estimates in Theorem \ref{thm:convergenza_v,s,h}, here we are only going to write the summands which $\mu_t^k(\theta,s,h)$ is made up of. Thanks to the evolution equation \eqref{rapp:ev_mutk} and to Proposition \ref{prop:phik}, we can write

\begin{align*}
&\mu_t^k(\theta,s,h)
\\
&=e^{2\pi itk\cdot v(\theta)}\mu_0^k(\theta,s+t,h)
\\
&+\int_0^tdt'\int_{-1}^1dh'Q(s+t-t',h|h')e^{2\pi i(t-t')k\cdot v(\theta)}e^{2\pi it' k\cdot v(\eta')}\mu_0^k(\eta',t',h')
\\
&+\int_0^tdt'\int_{-1}^1dh'Q(s+t-t',h|h')e^{2\pi i(t-t')k\cdot v(\theta)}\xi_1(\theta,t',h')
\\
&+\int_0^tdt'\int_{-1}^1dh'Q(s+t-t',h|h')e^{2\pi i(t-t')k\cdot v(\theta)}\xi_2(\theta,t',h'),
\end{align*}

with

{\footnotesize\begin{align*}
\xi_1(\theta,t',h')&:=\int_0^{t'}dt''\int_{-1}^1dh''Q(t'-t'',h'|h'') e^{2\pi i(t'-t'')k\cdot v(\eta')}e^{2\pi it''k\cdot v(\eta'')}\mu_0^k(\eta'',t'',h''),
\\
\xi_2(\theta,t',h')&:=\int_{\mathbb{T}^1_{2\pi}}d\theta'\int_0^{t'}dt''\int_{-1}^1dh'' \varphi^k(\eta',t'-t'',h'|\theta',h'')e^{2\pi it''k\cdot v(\theta')}\mu_0^k(\theta',t'',h''),
\end{align*}}

and with the notation

\begin{align*}
\eta'&=\eta'(\theta,h'):=\theta+\pi-2\arcsin(h'),
\\
\eta''&=\eta''(\theta,h',h''):=\theta+2\pi-2\arcsin(h')-2\arcsin(h'').
\end{align*}

Therefore we have four summands that can be bounded respectively as \eqref{addendo1}, \eqref{addendo2}, \eqref{addendo3} and \eqref{addendo6} in Theorem \ref{thm:convergenza_v,s,h}. The only difference is that when we estimate \eqref{addendo6}, we have to take into account that the upper bounds for $\varphi^k$ and $t\varphi^k$ ensured by Proposition \ref{prop:phik} depend on whether $k$ is close to $0$. Therefore, in the statement, we get $\frac{C}{\min\{1,|k|^6\}}$ instead of $C$.
\end{proof}
\subsection{Convergence of the joint probability density.}
Now that we can affirm that all the Fourier coefficients with $k\neq(0,0)$ are vanishing for large $t$, we can prove Theorems \ref{thm:conv_x,v,s,h} and \ref{thm:conv_R2}. We begin with the proof of the first one.
\subsubsection{\textbf{Proof of Theorem \ref{thm:conv_x,v,s,h}.}}
\begin{proof} We start by proving \eqref{thm3:st1}. It is easily checked that $\mu_0$ satisfying the hypothesis in the statement of the Theorem belongs to $L^1(\mathbb{T}^2\times\mathbb{T}^1_{2\pi}\times[0,+\infty)\times[-1,1])$, indeed

{\small\begin{align*}
\int_{\mathbb{T}^1_{2\pi}}d\theta\int_0^{\infty}ds\int_{-1}^1dh\|\mu_0(\cdot,\theta,s,h)\|_{L^p(\mathbb{T}^2)}&\geq\int_{\mathbb{T}^1_{2\pi}}d\theta\int_0^{\infty}ds\int_{-1}^1dh\|\mu_0(\cdot,\theta,s,h)\|_{L^1(\mathbb{T}^2)}
\\
&=\|\mu_0\|_{L^1(\mathbb{T}^2\times\mathbb{T}^1_{2\pi}\times[0,+\infty)\times[-1,1])}.
\end{align*}}

Then, if $\mu_0$ is chosen as in the hypothesis, let us fix $\varepsilon>0$ and choose $\tilde\mu_0\in L^{\infty}$ supported in $\mathbb{T}^2\times\mathbb{T}^1_{2\pi}\times[0,T]\times[-1,1]$, for $T>0$,  such that

{\scriptsize\begin{align}\label{approssimazione}
\|\mu_0-\tilde\mu_0\|_{L^p(\mathbb{T}^2\times\mathbb{T}^1_{2\pi}\times[0,+\infty)\times[-1,1])}+\int_{\mathbb{T}^1_{2\pi}}d\theta\int_0^{\infty}ds\int_{-1}^1dh\|\mu_0(\cdot,\theta,s,h)-\tilde\mu_0(\cdot,\theta,s,h)\|_{L^p(\mathbb{T}^2)}\leq\varepsilon.
\end{align}}

Before going on, we point out that such a choice is possible, indeed, if $\mu_0$ satisfies the hypothesis, one can preliminarily choose $T>0$ such that
\[
\|\mu_0\|_{L^p(\mathbb{T}^2\times\mathbb{T}^1_{2\pi}\times[T,+\infty)\times[-1,1])}+\int_{\mathbb{T}^1_{2\pi}}d\theta\int_T^{\infty}ds\int_{-1}^1dh\|\mu_0(\cdot,\theta,s,h)\|_{L^p(\mathbb{T}^2)}\leq\frac{\varepsilon}{2},
\]
and then, since for Jensen's inequality and dominated convergence one has

{\scriptsize\begin{align*}
&\|\mu_0\mathbbm{1}(|\mu_0|> M)\|_{L^p(\mathbb{T}^2\times\mathbb{T}^1_{2\pi}\times[0,T]\times[-1,1])}+\int_{\mathbb{T}^1_{2\pi}}d\theta\int_0^Tds\int_{-1}^1dh\|\mu_0(\cdot,\theta,s,h)\mathbbm{1}(|\mu_0|> M)\|_{L^p(\mathbb{T}^2)}
\\
&\leq\|\mu_0\mathbbm{1}(|\mu_0|> M)\|_{L^p(\mathbb{T}^2\times\mathbb{T}^1_{2\pi}\times[0,T]\times[-1,1])}
\\
&+(4\pi T)^{1-\frac{1}{p}}\|\mu_0\mathbbm{1}(|\mu_0|> M)\|_{L^p(\mathbb{T}^2\times\mathbb{T}^1_{2\pi}\times[0,T]\times[-1,1])}\xrightarrow[M\to+\infty]{}0,
\end{align*}}

one can fix $M>0$ such that
{\scriptsize\[
\|\mu_0\mathbbm{1}(|\mu_0|> M)\|_{L^p(\mathbb{T}^2\times\mathbb{T}^1_{2\pi}\times[0,T]\times[-1,1])}+\int_{\mathbb{T}^1_{2\pi}}d\theta\int_0^Tds\int_{-1}^1dh\|\mu_0(\cdot,\theta,s,h)\mathbbm{1}(|\mu_0|> M)\|_{L^p(\mathbb{T}^2)}\leq\frac{\varepsilon}{2},
\]}
and in this way, if $\tilde\mu_0:=\mu_0\mathbbm{1}(|\mu_0|\leq M,s\in[0,T])$, by triangle inequality one gets the approximation \eqref{approssimazione}.

Notice also that, by choosing $\tilde\mu_0$ as we said, the hypothesis that $\mu_0$ satisfies with $p$ are satisfied by $\tilde\mu_0$ with any $q\in[1,+\infty]$, indeed

{\tiny\begin{align}\label{ogniq}
\|\tilde\mu_0\|_{L^q(\mathbb{T}^2\times\mathbb{T}^1_{2\pi}\times[0,+\infty)\times[-1,1])}+\int_{\mathbb{T}^1_{2\pi}}d\theta\int_0^{\infty}ds\int_{-1}^1dh\|\tilde\mu_0(\cdot,\theta,s,h)\|_{L^q(\mathbb{T}^2)}\leq M(4\pi T)^{\frac{1}{q}}+M(4\pi T)<+\infty.
\end{align}}

Going back to the proof, if now $\tilde\mu_t$ is the time evolution of $\tilde\mu_0$, by triangle inequality we get

{\small\begin{align}
\label{primopezzo}\left\|\mu_t-\frac{\langle\mu_0\rangle}{2\pi}E\right\|_{L^p(\mathbb{T}^2\times\mathbb{T}^1_{2\pi}\times[0,+\infty)\times[-1,1])}&\leq\left\|\mu_t-\tilde\mu_t\right\|_{L^p(\mathbb{T}^2\times\mathbb{T}^1_{2\pi}\times[0,+\infty)\times[-1,1])}
\\
\label{secondopezzo}&+\left\|\tilde\mu_t-\frac{\langle\tilde\mu_0\rangle}{2\pi}E\right\|_{L^p(\mathbb{T}^2\times\mathbb{T}^1_{2\pi}\times[0,+\infty)\times[-1,1])}
\\
\label{terzopezzo}&+\left\|\frac{\langle\tilde\mu_0\rangle}{2\pi}E-\frac{\langle\mu_0\rangle}{2\pi}E\right\|_{L^p(\mathbb{T}^2\times\mathbb{T}^1_{2\pi}\times[0,+\infty)\times[-1,1])},
\end{align}}

therefore we are left with the estimate of each one of these summands.

\textbf{Step 1: }estimate of \eqref{primopezzo} and \eqref{terzopezzo}. For the first term, by \eqref{normaLpfinita} of Proposition \ref{prop:es/un}, thanks to the choice of $\tilde\mu_0$ in \eqref{approssimazione} we have

\begin{align*}
\eqref{primopezzo}&\leq C\left\|\mu_0-\tilde\mu_0\right\|_{L^p(\mathbb{T}^2\times\mathbb{T}^1_{2\pi}\times[0,+\infty)\times[-1,1])}
\\
&+C\int_{\mathbb{T}^1_{2\pi}}d\theta\int_0^tds\int_{-1}^1dh\|\mu_0(\cdot,\theta,s,h)-\tilde\mu_0(\cdot,\theta,s,h)\|_{L^p(\mathbb{T}^2)}
\\
&\leq C\left\|\mu_0-\tilde\mu_0\right\|_{L^p(\mathbb{T}^2\times\mathbb{T}^1_{2\pi}\times[0,+\infty)\times[-1,1])}
\\
&+C\int_{\mathbb{T}^1_{2\pi}}d\theta\int_0^{\infty}ds\int_{-1}^1dh\|\mu_0(\cdot,\theta,s,h)-\tilde\mu_0(\cdot,\theta,s,h)\|_{L^p(\mathbb{T}^2)}
\\
&\leq C\varepsilon.
\end{align*}

Moreover, for the third summand \eqref{terzopezzo}, since the flat torus $\mathbb{T}^2$ has finite measure, using again the property of approximation of $\tilde\mu_0$ in \eqref{approssimazione}, one has

\begin{align*}
\eqref{terzopezzo}&\leq|\langle\mu_0\rangle-\langle\tilde\mu_0\rangle|\left\|\frac{1}{2\pi}E\right\|_{L^p(\mathbb{T}^2\times\mathbb{T}^1_{2\pi}\times[0,+\infty)\times[-1,1])}
\\
&\leq|\langle\mu_0\rangle-\langle\tilde\mu_0\rangle|\leq\|\mu_0-\tilde\mu_0\|_{L^1(\mathbb{T}^2\times\mathbb{T}^1_{2\pi}\times[0,+\infty)\times[-1,1])}
\\
&=\int_{\mathbb{T}^1_{2\pi}}d\theta\int_0^{\infty}ds\int_{-1}^1dh\int_{\mathbb{T}^2}dx|\mu_0(x,\theta,s,h)-\tilde\mu_0(x,\theta,s,h)|
\\
&\leq\int_{\mathbb{T}^1_{2\pi}}d\theta\int_0^{\infty}ds\int_{-1}^1dh\|\mu_0(\cdot,\theta,s,h)-\tilde\mu_0(\cdot,\theta,s,h)\|_{L^p(\mathbb{T}^2)}
\\
&\leq\varepsilon.
\end{align*}

\textbf{Step 2:} estimate of \eqref{secondopezzo}.

\textbf{Step 2A: } we begin with the $L^1$ norm. Our purpose is to prove that 

\begin{align}\label{convergenzanormaL1}
\left\|\tilde\mu_t-\frac{\langle\tilde\mu_0\rangle}{2\pi}E\right\|_{L^1(\mathbb{T}^2\times\mathbb{T}^1_{2\pi}\times[0,+\infty)\times[-1,1])}\xrightarrow[t\to+\infty]{}0.
\end{align}

To this purpose, notice that since $\tilde\mu_0\in L^2$, one can choose $N_{\varepsilon}>0$ such that:
\[
\left\|\tilde\mu_0-\sum_{|k|\leq N_{\varepsilon}}\tilde\mu_0^ke_k\right\|_{L^2(\mathbb{T}^2\times\mathbb{T}^1_{2\pi}\times[0,+\infty)\times[-1,1])}\leq\frac{\varepsilon}{\sqrt{4\pi T}},\qquad e_k(x):=e^{-2\pi ik\cdot x}.
\]
A priori this approximation should be valid only in $L^2(\mathbb{T}^2)$ and depending on $(\theta,s,h)$, but since the convergence of the sum of the squared moduli of the Fourier coefficients to the $L^2$ norm of a function is monotone, the estimate holds also in $L^2(\mathbb{T}^2\times\mathbb{T}^1_{2\pi}\times[0,+\infty)\times[-1,1])$.

Moreover, if $\tilde\mu_0$ is supported in $\mathbb{T}^2\times\mathbb{T}^1_{2\pi}\times[0,T]\times[-1,1]$ (whose measure is $4\pi T$), then the same do its Fourier coefficients. Thus, since the $L^1$ norm of a mild solution is not increasing, one can estimate

\begin{align*}
\eqref{secondopezzo}&\leq\left\|\tilde\mu_t-\sum_{|k|\leq N_{\varepsilon}}\tilde\mu_t^ke_k\right\|_{L^1(\mathbb{T}^2\times\mathbb{T}^1_{2\pi}\times[0,+\infty)\times[-1,1])}
\\
&+\left\|\sum_{|k|\leq N_{\varepsilon}}\tilde\mu_t^ke_k-\frac{\langle\tilde\mu_0\rangle}{2\pi}E\right\|_{L^1(\mathbb{T}^2\times\mathbb{T}^1_{2\pi}\times[0,+\infty)\times[-1,1])}
\\
&\leq\left\|\tilde\mu_0-\sum_{|k|\leq N_{\varepsilon}}\tilde\mu_0^ke_k\right\|_{L^1(\mathbb{T}^2\times\mathbb{T}^1_{2\pi}\times[0,+\infty)\times[-1,1])}
\\
&+\sum_{0<|k|\leq N_{\varepsilon}}\left\|\tilde\mu_t^ke_k\right\|_{L^1(\mathbb{T}^2\times\mathbb{T}^1_{2\pi}\times[0,+\infty)\times[-1,1])}
\\
&+\left\|\tilde\mu_t^{(0,0)}-\frac{\langle\tilde\mu_0\rangle}{2\pi}E\right\|_{L^1(\mathbb{T}^2\times\mathbb{T}^1_{2\pi}\times[0,+\infty)\times[-1,1])}
\\
&\leq\sqrt{4\pi T}\left\|\tilde\mu_0-\sum_{|k|\leq N_{\varepsilon}}\tilde\mu_0^ke_k\right\|_{L^2(\mathbb{T}^2\times\mathbb{T}^1_{2\pi}\times[0,+\infty)\times[-1,1])}
\\
&+\sum_{0<|k|\leq N_{\varepsilon}}\left\|\tilde\mu_t^k\right\|_{L^1(\mathbb{T}^1_{2\pi}\times[0,+\infty)\times[-1,1])}
\\
&+\left\|\tilde\mu_t^{(0,0)}-\frac{\langle\tilde\mu_0\rangle}{2\pi}E\right\|_{L^1(\mathbb{T}^1_{2\pi}\times[0,+\infty)\times[-1,1])}
\\
&\leq\varepsilon+C\sum_{0\leq|k|\leq N_{\varepsilon}}\frac{\|\tilde\mu_t^k\|_{L^1(\mathbb{T}^1_{2\pi}\times[0,+\infty)\times[-1,1])}}{t+1}
\\
&+C\sum_{0\leq|k|\leq N_{\varepsilon}}\|\tilde\mu_t^k\|_{L^1(\mathbb{T}^1_{2\pi}\times[\frac{t}{4},+\infty)\times[-1,1])}
\\
&\leq2\varepsilon,
\end{align*}

if $t$ is large enough, that is, $t\geq t(N_{\varepsilon})$. In the last but one inequality we have used Theorems \ref{thm:convergenza_v,s,h} and \ref{thm:mutk} with $p=1$. Therefore the convergence in \eqref{convergenzanormaL1} for $p=1$ is proved.

\textbf{Step 2B: } we aim to use \eqref{convergenzanormaL1} to prove it for any finite $p$. To this purpose, we interpolate the $L^p$ norm by using $L^1$ and $L^{\infty}$, as follows.

{\small\begin{align*}
\eqref{secondopezzo}&=\left\|\tilde\mu_t-\frac{\langle\tilde\mu_0\rangle}{2\pi}E\right\|_{L^p(\mathbb{T}^2\times\mathbb{T}^1_{2\pi}\times[0,+\infty)\times[-1,1])}
\\
&\leq\left\|\tilde\mu_t-\frac{\langle\tilde\mu_0\rangle}{2\pi}E\right\|_{L^{\infty}(\mathbb{T}^2\times\mathbb{T}^1_{2\pi}\times[0,+\infty)\times[-1,1])}^{1-\frac{1}{p}}\left\|\tilde\mu_t-\frac{\langle\tilde\mu_0\rangle}{2\pi}E\right\|_{L^1(\mathbb{T}^2\times\mathbb{T}^1_{2\pi}\times[0,+\infty)\times[-1,1])}^{\frac{1}{p}},
\end{align*}}

and since the second factor is vanishing by the previous step \eqref{convergenzanormaL1}, it is sufficient to prove that the first one is bounded. But it is, indeed by using first triangle inequality and then \eqref{normaLpfinita} of Proposition \ref{prop:es/un}, one gets

\begin{align*}
&\left\|\tilde\mu_t-\frac{\langle\tilde\mu_0\rangle}{2\pi}E\right\|_{L^{\infty}(\mathbb{T}^2\times\mathbb{T}^1_{2\pi}\times[0,+\infty)\times[-1,1])}
\\
&\leq\left\|\tilde\mu_t\right\|_{L^{\infty}(\mathbb{T}^2\times\mathbb{T}^1_{2\pi}\times[0,+\infty)\times[-1,1])}+\left\|\frac{\langle\tilde\mu_0\rangle}{2\pi}E\right\|_{L^{\infty}(\mathbb{T}^2\times\mathbb{T}^1_{2\pi}\times[0,+\infty)\times[-1,1])}
\\
&\leq C\left\|\tilde\mu_0\right\|_{L^{\infty}(\mathbb{T}^2\times\mathbb{T}^1_{2\pi}\times[0,+\infty)\times[-1,1])}+C\int_{\mathbb{T}^1_{2\pi}}d\theta\int_0^tds\int_{-1}^1dh\|\tilde\mu_0(\cdot,\theta,s,h)\|_{L^{\infty}(\mathbb{T}^2)}
\\
&+\|\tilde\mu_0\|_{L^1(\mathbb{T}^2\times\mathbb{T}^1_{2\pi}\times[0,+\infty)\times[-1,1])}
\\
&\leq C\left\|\tilde\mu_0\right\|_{L^{\infty}(\mathbb{T}^2\times\mathbb{T}^1_{2\pi}\times[0,+\infty)\times[-1,1])}+C\int_{\mathbb{T}^1_{2\pi}}d\theta\int_0^{\infty}ds\int_{-1}^1dh\|\tilde\mu_0(\cdot,\theta,s,h)\|_{L^{\infty}(\mathbb{T}^2)}
\\
&+\|\tilde\mu_0\|_{L^1(\mathbb{T}^2\times\mathbb{T}^1_{2\pi}\times[0,+\infty)\times[-1,1])}
\\
&=:C(\tilde\mu_0)<+\infty\textrm{ by \eqref{ogniq}}.
\end{align*}

Thus, one has
{\tiny\[
\eqref{secondopezzo}\leq C(\tilde\mu_0)^{1-\frac{1}{p}}\left\|\tilde\mu_t-\frac{\langle\tilde\mu_0\rangle}{2\pi}E\right\|_{L^1(\mathbb{T}^2\times\mathbb{T}^1_{2\pi}\times[0,+\infty)\times[-1,1])}^{\frac{1}{p}}\xrightarrow[t\to+\infty]{}0\textrm{ thanks to \eqref{convergenzanormaL1}, that is, \textbf{Step 2A}}.
\]}
Summarizing, by using \textbf{Step 1}, \textbf{Step 2A} and \textbf{Step 2B}, one gets the statement \eqref{thm3:st1}.

The second statement \eqref{thm3:st2} can be proved as follows. First notice that if $\mu_0$ satisfies the hypothesis with $p=\infty$, then it satisfies them for every $p\in[1,+\infty]$. To prove this, we first notice that $\mu_0\in L^1(\mathbb{T}^2\times\mathbb{T}^1_{2\pi}\times[0,+\infty)\times[-1,1])$, indeed, as we said at the beginning of the proof

{\small\begin{align*}
\int_{\mathbb{T}^1_{2\pi}}d\theta\int_0^{\infty}ds\int_{-1}^1dh\|\mu_0(\cdot,\theta,s,h)\|_{L^{\infty}(\mathbb{T}^2)}&\geq\int_{\mathbb{T}^1_{2\pi}}d\theta\int_0^{\infty}ds\int_{-1}^1dh\int_{\mathbb{T}^2}dx|\mu_0(x,\theta,t,h)|
\\
&=\|\mu_0\|_{L^1(\mathbb{T}^2\times\mathbb{T}^1_{2\pi}\times[0,+\infty)\times[-1,1])},
\end{align*}}

and thus, by interpolation, one has
{\tiny\[
\|\mu_0\|_{L^p(\mathbb{T}^2\times\mathbb{T}^1_{2\pi}\times[0,+\infty)\times[-1,1])}\leq\|\mu_0\|_{L^1(\mathbb{T}^2\times\mathbb{T}^1_{2\pi}\times[0,+\infty)\times[-1,1])}^{\frac{1}{p}}\|\mu_0\|_{L^{\infty}(\mathbb{T}^2\times\mathbb{T}^1_{2\pi}\times[0,+\infty)\times[-1,1])}^{1-\frac{1}{p}}<+\infty,
\]}
therefore the $L^p$ norm of $\mu_0$ is finite. As for the second summand in the hypothesis, it holds
{\small\[
\int_{\mathbb{T}^1_{2\pi}}d\theta\int_0^{\infty}ds\int_{-1}^1dh\|\mu_0(\cdot,\theta,s,h)\|_{L^{\infty}(\mathbb{T}^2)}\geq\int_{\mathbb{T}^1_{2\pi}}d\theta\int_0^{\infty}ds\int_{-1}^1dh\|\mu_0(\cdot,\theta,s,h)\|_{L^p(\mathbb{T}^2)},
\]}
and therefore

\begin{align}\label{ognip}
\|\mu_0\|_{L^p(\mathbb{T}^2\times\mathbb{T}^1_{2\pi}\times[0,+\infty)\times[-1,1])}+\int_{\mathbb{T}^1_{2\pi}}d\theta\int_0^{\infty}ds\int_{-1}^1dh\|\mu_0(\cdot,\theta,s,h)\|_{L^p(\mathbb{T}^2)}<+\infty,
\end{align}

which is our claim.

To prove \eqref{thm3:st2}, we have to prove that for any $\eta\in L^1(\mathbb{T}^2\times\mathbb{T}^1_{2\pi}\times[0,+\infty)\times[-1,1])$, it holds

\begin{align*}
\int_{\mathbb{T}^2}dx\int_{\mathbb{T}^1_{2\pi}}d\theta\int_0^{\infty}ds\int_{-1}^1dh\eta(x,\theta,s,h)\left(\mu_t(x,\theta,s,h)-\frac{\langle\mu_0\rangle}{2\pi}E(s,h)\right)\xrightarrow[t\to+\infty]{}0.
\end{align*}

For the sake of brevity, we will shorten
\[
( A,B):=\int_{\mathbb{T}^2}dx\int_{\mathbb{T}^1_{2\pi}}d\theta\int_0^{\infty}ds\int_{-1}^1dhA(x,\theta,s,h)B(x,\theta,s,h),
\]
and, since here it is clear that we are integrating on the space $\mathbb{T}^2\times\mathbb{T}^1_{2\pi}\times[0,+\infty)\times[-1,1]$, we will also write
\[
\|A\|_{L^p}:=\|A\|_{L^p(\mathbb{T}^2\times\mathbb{T}^1_{2\pi}\times[0,+\infty)\times[-1,1])}.
\]
To prove the statement, fix $\tilde\eta\in L^1\cap L^2(\mathbb{T}^2\times\mathbb{T}^1_{2\pi}\times[0,+\infty)\times[-1,1])$ such that

\begin{align}\label{gapprox}
\|\eta-\tilde\eta\|_{L^1}\leq\frac{\varepsilon}{\|\mu_0\|_{L^{\infty}}+\int_{\mathbb{T}^1_{2\pi}}d\theta\int_0^{\infty}ds\int_{-1}^1dh\|\mu_0(\cdot,\theta,s,h)\|_{L^{\infty}(\mathbb{T}^2)}+\|\mu_0\|_{L^1}}.
\end{align}

We want to use \eqref{ognip} with $p=2$. 

With the notations above, thanks to triangle inequality we have

{\tiny\begin{align*}
\left|\left( \eta,\mu_t-\frac{\langle\mu_0\rangle}{2\pi}E\right)\right|&\leq\left|\left(\eta-\tilde\eta,\mu_t-\frac{\langle\mu_0\rangle}{2\pi}E\right)\right|+\left|\left(\tilde\eta,\mu_t-\frac{\langle\mu_0\rangle}{2\pi}E\right)\right|
\\
&\leq\|\eta-\tilde\eta\|_{L^1}\left\|\mu_t-\frac{\langle\mu_0\rangle}{2\pi}E\right\|_{L^{\infty}}+\|\tilde\eta\|_{L^2}\left\|\mu_t-\frac{\langle\mu_0\rangle}{2\pi}E\right\|_{L^2}
\\
&\leq C\|\eta-\tilde\eta\|_{L^1}\left(\|\mu_0\|_{L^{\infty}}+\int_{\mathbb{T}^1_{2\pi}}d\theta\int_0^tds\int_{-1}^1dh\|\mu_0(\cdot,\theta,s,h)\|_{L^{\infty}(\mathbb{T}^2)}+\left\|\frac{\langle\mu_0\rangle}{2\pi}E\right\|_{L^{\infty}}\right)
\\
&+\|\tilde\eta\|_{L^2}\left\|\mu_t-\frac{\langle\mu_0\rangle}{2\pi}E\right\|_{L^2}
\\
&\leq C\|\eta-\tilde\eta\|_{L^1}\left(\|\mu_0\|_{L^{\infty}}+\int_{\mathbb{T}^1_{2\pi}}d\theta\int_0^{\infty}ds\int_{-1}^1dh\|\mu_0(\cdot,\theta,s,h)\|_{L^{\infty}(\mathbb{T}^2)}+\|\mu_0\|_{L^1}\right)
\\
&+\|\tilde\eta\|_{L^2}\left\|\mu_t-\frac{\langle\mu_0\rangle}{2\pi}E\right\|_{L^2}
\\
&\leq C\varepsilon+\|\tilde\eta\|_{L^2}\left\|\mu_t-\frac{\langle\mu_0\rangle}{2\pi}E\right\|_{L^2},
\end{align*}}

where in the last inequality we have used \eqref{normaLpfinita} of Proposition \ref{prop:es/un} and the property of $\tilde\eta$ in \eqref{gapprox}. Finally, for the last term, one can apply property \eqref{thm3:st1}, that we have proved before, and notice that \eqref{ognip} with $p=2$ is exactly the hypothesis we need to infer that
\[
\|\tilde\eta\|_{L^2}\left\|\mu_t-\frac{\langle\mu_0\rangle}{2\pi}E\right\|_{L^2}\xrightarrow[t\to+\infty]{}0.
\]
To conclude, we further need to prove the third statement of the Theorem, that is, \eqref{thm3:st3}. To this purpose, notice that by \eqref{thm1:st1} of Theorem \ref{thm:convergenza_v,s,h} and \eqref{thm2:st1} of Theorem \ref{thm:mutk}, one has

{\small\begin{align*}
&\left\|\mu_t-\frac{\langle\mu_0\rangle}{2\pi}E\right\|_{L^2(\mathbb{T}^2\times\mathbb{T}^1_{2\pi}\times[0,+\infty)\times[-1,1])}^2
\\
&=\left\|\mu_t^{(0,0)}-\frac{\langle\mu_0\rangle}{2\pi}E\right\|_{L^2(\mathbb{T}^1_{2\pi}\times[0,+\infty)\times[-1,1])}^2+\sum_{k\in\mathbb{Z}^2,k\neq(0,0)}\left\|\mu_t^k\right\|_{L^2(\mathbb{T}^1_{2\pi}\times[0,+\infty)\times[-1,1])}^2
\\
&\leq C^2\sum_{k\in\mathbb{Z}^2}\frac{\left\|\mu_0^k\right\|_{L^2(\mathbb{T}^1_{2\pi}\times[0,+\infty)\times[-1,1])}^2+\left\|\mu_0^k\right\|_{L^1(\mathbb{T}^1_{2\pi}\times[0,+\infty)\times[-1,1])}^2}{(t+1)^2}
\\
&+C^2\sum_{k\in\mathbb{Z}^2}\left[\left\|\mu_0^k\right\|_{L^2(\mathbb{T}^1_{2\pi}\times[t/4,+\infty)\times[-1,1])}^2+\left\|\mu_0^k\right\|_{L^1(\mathbb{T}^1_{2\pi}\times[t/4,+\infty)\times[-1,1])}^2\right]
\\
&=\frac{C^2}{(t+1)^2}\|\mu_0\|_{L^2(\mathbb{T}^2\times\mathbb{T}^1_{2\pi}\times[0,+\infty)\times[-1,1])}^2+\frac{C^2}{(t+1)^2}\left\|\|\mu_0\|_{L^1(\mathbb{T}^1_{2\pi}\times[0,+\infty)\times[-1,1])}\right\|_{L^2(\mathbb{T}^2)}^2
\\
&+C^2\|\mu_0\|_{L^2(\mathbb{T}^2\times\mathbb{T}^1_{2\pi}\times[\frac{t}{4},+\infty)\times[-1,1])}^2+C^2\left\|\|\mu_0\|_{L^1(\mathbb{T}^1_{2\pi}\times[\frac{t}{4},+\infty)\times[-1,1])}\right\|_{L^2(\mathbb{T}^2)}^2,
\end{align*}}

and taking the square root of both summands one gets

\begin{align*}
&\left\|\mu_t-\frac{\langle\mu_0\rangle}{2\pi}E\right\|_{L^2(\mathbb{T}^2\times\mathbb{T}^1_{2\pi}\times[0,+\infty)\times[-1,1])}
\\
&\leq\frac{C}{t+1}\|\mu_0\|_{L^2(\mathbb{T}^2\times\mathbb{T}^1_{2\pi}\times[0,+\infty)\times[-1,1])}+\frac{C}{t+1}\left\|\|\mu_0\|_{L^1(\mathbb{T}^1_{2\pi}\times[0,+\infty)\times[-1,1])}\right\|_{L^2(\mathbb{T}^2)}
\\
&+C\left[\|\mu_0\|_{L^2(\mathbb{T}^2\times\mathbb{T}^1_{2\pi}\times[\frac{t}{4},+\infty)\times[-1,1])}+\left\|\|\mu_0\|_{L^1(\mathbb{T}^1_{2\pi}\times[\frac{t}{4},+\infty)\times[-1,1])}\right\|_{L^2(\mathbb{T}^2)}\right].
\end{align*}

The inequality in \eqref{thm3:st3} follows from applying again Minkowski's inequality, which allows to exchange the order of the $L^1$ and $L^2$ norms. 

Finally, statement \eqref{thm3:st4} follows the same way as the previous one, the only difference is that one has to use \eqref{thm1:st2}, instead of \eqref{thm1:st1}, from Theorem \ref{thm:convergenza_v,s,h} and \eqref{thm2:st2}, instead of \eqref{thm2:st1}, from Theorem \ref{thm:mutk}.\end{proof}

\subsubsection{\textbf{Proof of Theorem \ref{thm:conv_R2}.}}
\begin{proof} Let $\eta\in{\mathcal S}(\mathbb{R}^2)$, where ${\mathcal S}(\mathbb{R}^2)$ is the Schwartz space of $\mathbb{R}^2$. Since the Fourier transform preserves the scalar product, we have
\[
\int_{\mathbb{R}^2}dx\eta(x)\overline{\mu_t(x,\theta,s,h)}=\frac{1}{(2\pi)^2}\int_{\mathbb{R}^2}dk\hat\eta(k)\mu_t^{\frac{k}{2\pi}}(\theta,s,h),\quad\hat \eta(k):=\int_{\mathbb{R}^2}dye^{-ik\cdot y}\eta(y).
\]
Moreover
{\scriptsize\[
\left\|\int_{\mathbb{R}^2}dk\hat \eta(k)\mu_t^{\frac{k}{2\pi}}(\cdot,\cdot,\cdot)\right\|_{L^1(\mathbb{T}^1_{2\pi}\times[0,+\infty)\times[-1,1])}\leq\int_{\mathbb{R}^2}dk|\hat \eta(k)|\underbrace{\int_{\mathbb{T}^1_{2\pi}}d\theta\int_0^{\infty}ds\int_{-1}^1dh|\mu_t^{\frac{k}{2\pi}}(\theta,s,h)|}_{\to0\textrm{ as }t\to+\infty\textrm{ for any }k\neq(0,0)\textrm{ by Theorem \ref{thm:mutk}}},
\]}
with

{\scriptsize\begin{align*}
|\hat \eta(k)|\int_{\mathbb{T}^1_{2\pi}}d\theta\int_0^{\infty}ds\int_{-1}^1dh|\mu_t^{\frac{k}{2\pi}}(\theta,s,h)|&=|\hat \eta(k)|\int_{\mathbb{T}^1_{2\pi}}d\theta\int_0^{\infty}ds\int_{-1}^1dh\left|\int_{\mathbb{R}^2}dxe^{2\pi i\frac{k}{2\pi}\cdot x}\mu_t(x,\theta,s,h)\right|
\\
&\leq|\hat \eta(k)|\|\mu_t\|_{L^1}\leq|\hat \eta(k)|\|\mu_0\|_{L^1},
\end{align*}}

where in the last inequality we used \eqref{distanzadecr} of Proposition \ref{prop:es/un}. Thus, for dominated convergence of the term above, we get Theorem \ref{thm:conv_R2} since $\hat \eta\in L^1(\mathbb{R}^2)$ if $\eta\in{\mathcal S}(\mathbb{R}^2)$.
\end{proof}

\subsubsection{Acknowledgments} I would like to express my gratitude to Emanuele Caglioti and Sergio Simonella for the useful discussions on the topic and the nice suggestions they gave me during the preparation of this work.
\newline
\subsubsection{Conflict of interest} The author has no competing interests to declare that are relevant to the content of this article.

\end{document}